\newcommand{\flamecast}{FLaMECaST\xspace}
\DeclarePairedDelimiter{\set}{\{}{\}}
\DeclarePairedDelimiter{\norm}{\lVert}{\rVert}
\DeclareMathOperator{\cost}{cost}
\DeclareMathOperator{\opt}{opt}
\DeclareMathOperator{\len}{len}
\DeclareMathOperator{\dist}{dist}
\DeclareMathOperator{\pos}{pos}
\DeclareMathOperator{\vor}{Vor}
\DeclareMathOperator{\Tweber}{T_{{Weber}}}
\newcommand{\DP}{\operatorname{dp}}
\newlist{tabitemize}{itemize}{1}
\setlist[tabitemize]{nosep, left=0pt, label=--}
\title{Flow-weighted Layered Metric Euclidean Capacitated Steiner Tree Problem} 
\author{Thomas Bläsius}{Karlsruhe Institute of Technology}{}{}{}
\author{Henrik Csöre}{Karlsruhe Institute of Technology}{}{}{}
\author{Max Göttlicher\footnote{\label{note:funding}Funded by German Research Foundation (DFG) as part of the Research Training
Group GRK 2153: Energy Status Data – Informatics Methods
for its Collection, Analysis and Exploitation.}}{Karlsruhe Institute of Technology}{}{}{}
\author{Elly Schmidt}{Karlsruhe Institute of Technology}{}{}{}
\author{Wendy Yi\footnotemark[1]}{Karlsruhe Institute of Technology}{}{}{}
\authorrunning{T. Bläsius, H. Csöre, M. Göttlicher, E. Schmidt, W. Yi} 
\keywords{euclidean steiner tree, dynamic programming, network design, approximation algorithms} 
\begin{document}

\maketitle
\begin{abstract}
Motivated by hierarchical networks, we introduce the \textsc{Flow-weighted Layered Metric Euclidean Capacitated Steiner Tree} (\flamecast) problem, a variant of the Euclidean Steiner tree with layered structure and capacity constraints per layer.
The goal is to construct a cost-optimal Steiner forest connecting a set of sources to a set of sinks under load-dependent edge costs.
We prove that \flamecast is NP-hard to approximate, even in restricted cases where all sources lie on a circle.
However, assuming few additional constraints for such instances, we design a dynamic program that achieves a $\left(1 + \frac{1}{2^n}\right)$-approximation in polynomial time.
By generalizing the structural insights the dynamic program is based on, we extend the approach to certain settings, where all sources are positioned on a convex polygon.
\end{abstract}

\section{Introduction}

Many applications, like connecting households with electricity or water, connecting solar panels to the power grid, planning logistics systems or in general various facility location problems, boil down to finding a graph that satisfies certain needs while being cost efficient.
Often, a fundamental goal in these network design problems is to minimize the cost to connect all components.
While this goal can be modeled as the well-known Minimum Spanning Tree (MST) problem, it often fails to capture other requirements by the application.
A common generalization is the Steiner tree~\cite{hakimi1971}, which allows the use of Steiner vertices that can be -- but do not have to be -- included in the resulting graph.
Although this might seem like a small change, it makes a big difference from an algorithmic perspective: While the MST problem can be solved in polynomial time (see \cite{nesetril2001} for a translation of the original paper by Borůvka), the Steiner tree problem is NP-hard~\cite{Karp1972}.
This highlights that, while it is important to make simplifications and generalizations when formalizing and theoretically studying a problem, it is also crucial to study the different aspects relevant to applications.

In this regard, compared to the MST problem, the Steiner tree problem is a much better formalization of the challenges appearing in many practical problems.
However, it still fails to capture crucial aspects that appear in various applications.
To illustrate this in an example, assume we want to compute a cable layout for a solar farm, connecting the solar panels (sources) to transformers (sinks) that feed the produced electricity into the power grid.
This is usually done in a hierarchical fashion, where the Steiner vertices between the sources and the sinks represent different intermediate components that are used for monitoring, as safety precautions, or for bundling lower-level components (e.g., Y-connectors, (re)combiner boxes, inverters).
While it makes sense to abstract from the specific components, it would be an oversimplification to ignore the fact that these components have capacities depending on their level in the hierarchy.
Moreover, different edges in the resulting cable layout have to deal with different amounts of current.
As higher-capacity cables are more expensive, the cost of an edge not only depends on its length but also on the number of sources it connects to a sink.
This changes the cost of an edge from a local property that is determined by just the edge alone to a global property that depends on the topology of the network.
From an algorithmic standpoint, this seems to be a crucial difference.

We note that the above mentioned shortcomings are not exclusive to computing cable layouts of solar farms.
Other prominent examples of inherently hierarchical networks are health care systems~\cite{schultz1970} and production-distribution systems~\cite{kaufman1977}; also see the survey by \c{S}ahin and Süral on hierarchical facility location~\cite{sahin2007}.

Motivated by the hierarchical cabling layout of solar farms, Gritzbach, Stampa and Wolf~\cite{gritzbach2022} considered multiple layers with capacity-constrained components and defined a cable cost function that depends on the amount of current that runs through a cable.
However, the main difference to \flamecast is that candidate positions for the components are specified in the input.
In fact, we are not aware of work on such variants where the Steiner vertices can be placed arbitrarily in the plane.

 \begin{table}
\centering
\caption{Comparison of the related problems Discrete Facility Location (DFL), Continuous Facility Location (CFL), Multi-Sink-Multi-Source Steiner Tree (MSMSST), Hop-Constrained Steiner Tree (HCST), Gilbert Arborescence Problem (GAP) and Solar Farm Cabling Layout Problem (SoFaCLaP).
For DFL and CFL, we list some variants that have been researched.}
\label{tab:related_work}
\renewcommand{\arraystretch}{1.2} 
\begin{tabular}{
    m{2cm}  
    m{1.5cm} 
    m{1.4cm} 
    m{3cm}  
    m{4cm}  
}
\toprule
Problem & Layers & Positions & Capacities & Cost Function \\
\midrule
DFL~\cite{drezner2004} & \begin{tabitemize}
  \item none
  \item $k$~\cite{sahin2007}
\end{tabitemize} & discrete &
\begin{tabitemize}
  \item uncapacitated
  \item per facility~\cite{korupolu2000}
\end{tabitemize} &
\begin{tabitemize}
  \item per facility, per connection
  \item concave per facility~\cite{Soland1974}
\end{tabitemize} \\[10pt]
CFL~\cite{love1988} &
\begin{tabitemize}
  \item none
  \item $k$~\cite{narula1984}
\end{tabitemize} &
$\mathbb{R}^2$ &
\begin{tabitemize}
  \item uncapacitated
  \item for all facilities~\cite{carlo2012}
\end{tabitemize} &
per facility, edge lengths \\[10pt]
MSMSST~\cite{Westscott2023} & unlimited
& $\mathbb{R}^2$ &
--- &
edge lengths \\[10pt]
HCST~\cite{voss1999} & $k$
& discrete &
--- &
edge weights \\[10pt]
GAP~\cite{volz2013} & unlimited
& $\mathbb{R}^2$ &
--- &
$\sum \len(e) \cdot (d + h \cdot\mathrm{flow}(e)^\alpha$) \\[10pt]
SoFaCLaP~\cite{gritzbach2022} & $k$
& discrete &
per layer &
$\sum \len(e) \cdot \mathrm{flow}(e)$ \\[10pt]
\flamecast & $k$
& $\mathbb{R}^2$ &
per layer &
$\sum \len(e) \cdot \mathrm{flow}(e)^\alpha$ \\
\bottomrule
\end{tabular}
\end{table}

An overview of other problems that incorporate various aspects of \flamecast is presented in \cref{tab:related_work}.
In the most unconstrained variant of \flamecast with arbitrarily many uncapacitated intermediate layers and $\alpha = 0$, it is closely related to the directed Steiner tree problem~\cite{zosin2002}, where the goal is to connect given sources to a specified sink with minimal cost.
A variant of the Steiner tree problem that allows for multiple sinks is the Multi-Source-Multi-Sink Steiner tree problem~\cite{Westscott2023}.
Most Steiner tree problems in the Euclidean plane use the sum of edge lengths as cost; see the survey by Ljubić~\cite{ljubic2021}.
However, a Steiner tree variant called Gilbert Arborescence Problem uses a cost function that additionally depends on the weighted flow of an edge~\cite{volz2013}, similar to \flamecast.
A crucial difference of all Euclidean Steiner tree variants to \flamecast is that they do not allow a layered structure and capacitated Steiner vertices.

A problem that extends to hierarchical structures more easily is the facility location problem, where the location for facilities is chosen such that all demand points are satisfied and some cost function is minimized.
Variants of the facility location problem fall into two categories: in the discrete version, the locations are chosen from a given candidate set, while in the continuous version, the facilities can be arbitrarily placed in the Euclidean plane.
Both the discrete and the continuous variants are shown to be NP-hard~\cite{megiddo1984}.
As the core problem of various applications in a broad variety of subject areas such as energy networks~\cite{gokbayrak2017} and health care systems~\cite{rahman2000}, it has been widely researched, and a multitude of generalizations with different parameters have been proposed; see~\cite{drezner2004} and~\cite{love1988} for a survey.
Generalizations that are related to \flamecast are capacitated facilities, different cost functions, and multiple layers; see \cref{tab:related_work}.
To the best of our knowledge, edge costs depending on the amount of flow have not been considered much, in particular not from a theoretical perspective.
Moreover, there is comparably less work on combinations of different generalizations for the continuous version than on the discrete version.

With this paper, we introduce and initiate the study on the problem \flamecast\footnote{Short for \textsc{Flow-weighted Layered Minimal Euclidean Capacitated Steiner Tree}.}, a variant of the Steiner tree problem that formalizes the aspects of layers with capacities and edge cost depending on the number of connected sources.
We briefly introduce \flamecast here; see Section~\ref{sec:preliminaries} for a formal definition.
The input of \flamecast includes a set of sources and a set of sinks with fixed positions in the Euclidean plane.
The goal is to find a rooted forest that connects each sink (leaf) via Steiner vertices that can be freely positioned to one of the sinks (root).
The vertices in each layer (defined by the distance to the root) have to adhere to capacities, i.e., a vertex in layer $i$ can have at most $c_i$ leaves in its subtree, where $c_i$ is the capacity for layer $i$ given in the input.
The objective is to minimize the total edge cost, where the cost of an edge $e$ depends on its Euclidean length $\len(e)$ and its load $\ell(e)$, where $\ell(e)$ is the number of sources $e$ separates from a sink.
Specifically, we define the cost of $e$ to be $\len(e) \cdot \ell(e)^\alpha$, where $\alpha \in [0, 1]$ is a parameter.
Note that for $\alpha = 0$, this includes the case where the cost only depends on the length.
While $\alpha = 0$ gives a big incentive to bundle sources together with Steiner vertices to decrease the total edge length, the other extreme of $\alpha = 1$ has no incentive for bundling at all: Two sources sharing an edge higher up in the hierarchy makes the edge as expensive as if each source uses its own copy of the edge.
The intermediate range of $0 < \alpha < 1$ captures the subadditive behavior typical for many applications, where the cost of an edge increases with increasing load, but the increase becomes smaller for higher load.

\subparagraph*{Results}
We provide NP-hardness and approximation-hardness results for various settings and complement these with polynomial approximation algorithms in some more restricted settings.
See \cref{tab:results} for an overview.
In \flamecast, the goal is to find a forest (a so-called \emph{topology}) that satisfies all constraints such that an optimal \emph{embedding} of the topology is cost-optimal over all topologies.
Although there is no exact embedding algorithm even if there is only one Steiner vertex to be placed, there are efficient algorithms that find good approximations for this specific case.

\begin{table}
\centering
\renewcommand{\arraystretch}{1.05}
\setlength{\tabcolsep}{12pt}
\caption{Overview over the results for various settings.}
\begin{tabular}{ll}
\toprule
Setting & Result \\
\midrule

\textbf{General Positions}             & \\
\quad $\alpha = 1$           & algorithm in $O(n\cdot|T|^3)$ time\\
\quad $\lambda = 0$            & algorithm in $O(n\cdot|T|^3)$ time \\
\quad $\lambda \geq 1, \alpha \in [0, 1)$ & NP-hard to $\left(1 + \frac{1 + 6^\alpha - 7^\alpha}{14^\alpha n}\right)$-approximate \\[3pt]

\textbf{Circular Positions}            & \\
\quad \emph{Group-equally-spaced}   &  \\
\qquad $\lambda \geq 1$, $\alpha \geq \log_n\left(n - 1 + \cos\left(\frac{\pi}{n}\right)\right)$ & NP-hard \\[3pt]
\qquad $\lambda \geq 1$, $\alpha$ constant & NP-hard to $\left(1 + \frac{1}{n^2}\right)$-approximate \\
\quad \emph{Source-equally-spaced}  & \\
\qquad $\lambda = 1$, one sink           & $(1 + \frac{1}{n^2})$-approximation in $O(n^2\log^3(n))$ time \\[3pt]
\textbf{Convex Positions}              & \\
\quad $\lambda = 1$, $\alpha = 0$, one sink, no capacities & $(1 + \frac{1}{n^2})$-approximation in $O(n^8\log^3(n))$ time \\
\bottomrule
\end{tabular}
\label{tab:results}
\end{table}

For settings where the topology immediately fixes the embedding, we propose a polynomial algorithm that determines the optimal topology.
This is the case if there are no intermediate layers, i.e., no Steiner vertices are placed, but also if $\alpha = 1$.
However, we show that it is not only the embedding part that makes \flamecast hard.
In particular, we prove that for at least one intermediate layer and $\alpha < 1$, finding an optimal topology is NP-hard to approximate within a factor of $1 + \frac{1+ 6^\alpha-7^\alpha}{14^\alpha n}$, even assuming that we can compute the cost of an optimal embedding for a given topology exactly.

Moreover, we analyze settings where the given positions of the sinks and sources are restricted to circular or convex positions.
In circular instances, the sources are distributed on a circle, while the sinks are placed in the center of the circle.
In this case, we prove NP-hardness for instances with at least one intermediate layer if $\alpha$ is constant or sufficiently large.
For constant $\alpha$, we additionally show that it is NP-hard to approximate it within a factor of $1 + \frac{1}{n^2}$, again even if we assume optimal embeddings can be computed exactly.

However, if we additionally require all sources to be evenly distributed along the circle and there is only one sink, then these instances can be $\left(1 + \frac{1}{n^2}\right)$-approximated in polynomial time.
The algorithm we propose is a dynamic program that builds on useful structural properties of optimal topologies for such instances.
The approximation factor of our algorithm follows directly from that of the embedding problem of a single Steiner vertex, and our algorithm can reach exponential precision in polynomial time.
In fact, if we assume that optimal embeddings can be computed exactly in polynomial time, then our algorithm is exact.

Furthermore, we generalize the dynamic program to convex instances, where the sources lie on a convex polygon, with one sink and one intermediate layer.
The generalization only works to some extent, as we show that not all relevant properties of the circular case directly translate to general convex instances.
However, we prove that the properties still hold with some caveats in the uncapacitated case if $\alpha = 0$.
As in the circular case, the approximation factor for the embedding problem of a single Steiner vertex translates directly to our algorithm.

\section{Preliminaries}
\label{sec:preliminaries}
Let $G = (V, E)$ be a directed graph.
We denote a directed edge from $u \in V$ to $v \in V$ by $uv$.
A \emph{directed path} from a vertex $u \in V$ to a vertex $v \in V$ is a sequence of vertices $(v_0, \ldots, v_k)$ for some $k \in \mathbb{N}$ where $v_{i-1}v_i \in E$ for every $i \in [k]$ and $v_0 = u$ and $v_k = v$.

A directed graph is a \emph{rooted tree} if there is a vertex $r$ such that for every vertex $v$, there is a unique path from $v$ to $r$.
There is at most one such vertex in any directed graph, and we refer to it as the \emph{root}.
We call the length of the unique path from a vertex $v$ to the root $r$ the \emph{layer} of $v$.
By definition, every edge goes from a vertex in layer $i$ to a vertex in layer $i - 1$ for some $i \in \mathbb{N}_0$.

For an edge $uv \in E$, we say that $u$ is a \emph{child} of $v$ and $v$ is the parent of $u$.
Except for the root, every vertex has a unique parent.
A vertex that has no children is a \emph{leaf}.
The neighborhood of a vertex $v$ consists of the parent of $v$ (if it exists) and its children.
We denote the set of neighbors by $N(v)$.
The \emph{subtree} of a vertex $v \in V$ consists of all vertices that have a directed path to $v$.
The \emph{load} $\ell(v)$ of a vertex $v \in V$ is the number of leaves in its subtree.
Similarly, the load $\ell(e)$ of an edge $uv \in E$ is defined as the load of the child $u$.

A directed graph is a \emph{rooted forest} if it is a pairwise disjoint union of directed trees.
The \emph{height} of a rooted forest is the maximum layer over all vertices.
For given sets $S$ and $T$, we say that a rooted forest $F$ is an \emph{$S$-$T$-forest} if the leaves of $F$ are exactly $S$ and the roots of $F$ are a subset of $T$.

Given a tuple with \emph{capacities} $C = (c_0, \ldots, c_\lambda) \in \mathbb{N}^{\lambda + 1}$, we say that a forest with height at most $\lambda$ is \emph{valid} with respect to $C$ if $\ell(v) \leq c_i$ for every vertex $v$ in layer $i \in [\lambda]$.

We denote the \emph{distance} between two points $p, q \in \mathbb{R}^2$ by $\dist(p, q) = \norm{p - q}$.
Given a set of points $P \subseteq \mathbb{R}^2$, the \emph{Voronoi cell} $\vor(p)$ of a point $p \in P$ consists of all points for which there is no closer point in $P$ than $p$.

An \emph{embedding} of a directed forest $F$ is a function $\pos \colon V(F) \to \mathbb{R}^2$ that maps each vertex of $F$ to a point in the plane.
To simplify the notation, the distance $\dist(u, v)$ between two vertices $u$ and $v$ is the distance of $u$ and $v$ in the given embedding, i.e., $\dist(u, v) = \dist(\pos(u), \pos(v))$.
For some $\alpha \in [0, 1]$ and some forest $F$ with an embedding, the \emph{cost} of an edge $uv$ is $\cost(uv) = \dist(u, v) \cdot \ell(uv)^\alpha$.
The cost of the embedding is the sum of the cost over all edges.

\subparagraph{Problem Definition}
In the \textsc{Flow-weighted Layered Minimal Euclidean Capacitated Steiner Tree} (\flamecast) problem, an instance $I = (S, T, C, \alpha)$ consists of sources $S$ and sinks $T$, both with fixed positions in the plane, capacities $C = (c_0, \ldots, c_{\lambda + 1})$ with $c_0 \geq c_1 \geq \ldots \geq c_{\lambda + 1} \geq 1$\footnote{Note that requiring the capacities to be decreasing is not a real restriction.}, and a cost parameter $\alpha$.
The goal is to find an $S$-$T$-forest $F$ with height at most $\lambda + 1$ that is valid with respect to $C$ and an embedding of $F$ such that the cost of the embedding is minimum.

In such a forest, each source is connected to a sink via at most $\lambda$ inner vertices.
These inner vertices are called \emph{Steiner vertices}, and we say that the forest has $\lambda$ \emph{intermediate layers}.
For a given instance, an $S$-$T$-forest $F$ with height at most $\lambda + 1$ is called a \emph{topology}.
Together with an embedding, it is called a \emph{layout}.
A layout is \emph{valid} with respect to given capacities if the underlying topology is \emph{valid}.
In other words, the goal of \flamecast is to find a valid layout with minimum embedding cost.

We say that an instance is \emph{feasible} if there is a valid layout.
Observe that this is the case if and only if the total capacity of the sinks (layer $0$) is sufficiently high, i.e., $|T| \cdot c_0 \geq |S|$, since we can always directly connect each source with a sink.
\subparagraph{Optimal Position of a Steiner Vertex}
Consider a single Steiner vertex in a cost-optimal embedding of a given topology if the positions of its neighbors are already fixed.
To state properties on the possible position of such a Steiner vertex, we adapt results from Kuhn and Kuenne~\cite{kuhn1962} who analyzed a more general setting:
Given a set of vertices $P = \set{p_1, \ldots, p_n}$ with $n \geq 3$ and weights $w_1, \ldots, w_n > 0$ for each point, determine a point $p^*$ that minimizes the weighted sum $\sum_{i=1}^{n} w_i \dist(p^*, p_i)$.
Such a point is known to be unique, and it is called a \emph{Weber point}.
In our case, where the positions of the neighbors of a Steiner vertex $u$ are already fixed, each point $p_i$ represents the position of a neighbor $v_i$ and the weight $w_i$ is $\ell(uv_i)^\alpha$, the weighted load of the corresponding edge.
The Weber point is the optimal position of the Steiner vertex.
\begin{lemma}{\cite[Theorem 2]{kuhn1962}}
  \label{lem:properties:convex steiner}
  In an optimal embedding of a given topology for a \flamecast instance, each Steiner vertex lies in the convex hull of its neighbors.
\end{lemma}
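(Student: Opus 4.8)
The plan is to reduce the claim to the variational characterization of the optimal position and then argue by contradiction via a projection onto the convex hull. First I would isolate the contribution of a single Steiner vertex $u$ to the total cost. Let $v_1, \dots, v_n$ be the neighbors of $u$ with fixed positions $p_i = \pos(v_i)$. Since each load $\ell(uv_i)$ is a purely combinatorial quantity determined by the topology, moving $\pos(u)$ changes neither these loads nor the cost of any edge not incident to $u$. Hence the only part of the total cost that depends on $\pos(u)$ is
\[
f(x) = \sum_{i=1}^{n} \ell(uv_i)^\alpha \cdot \dist(x, p_i),
\]
and in an optimal embedding $\pos(u)$ must minimize $f$. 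Writing $w_i = \ell(uv_i)^\alpha$, I note that $w_i \geq 1 > 0$, because every edge has load at least $1$ (its child's subtree contains a leaf) and $\alpha \geq 0$. Thus $\pos(u)$ is exactly a Weber point of the weighted point set $\set{p_1, \dots, p_n}$.

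Next I would suppose, for contradiction, that the minimizer $p^* = \pos(u)$ lies outside the convex hull $H = \CH(\set{p_1, \dots, p_n})$ of its neighbors. Since $H$ is closed and convex, let $q$ denote the metric projection of $p^*$ onto $H$; because $p^* \notin H$, we have $q \neq p^*$. The defining property of the projection onto a closed convex set yields the obtuse-angle inequality $\langle p^* - q,\, p - q \rangle \leq 0$ for every $p \in H$.

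The key step is to show that replacing $p^*$ by $q$ strictly decreases the distance to each neighbor. For any $p \in H$ I would expand
\[
\dist(p^*, p)^2 = \dist(p^*, q)^2 + \dist(q, p)^2 - 2\langle p^* - q,\, p - q \rangle \geq \dist(p^*, q)^2 + \dist(q, p)^2 > \dist(q, p)^2,
\]
where the first inequality uses the obtuse-angle condition and the second uses $\dist(p^*, q) > 0$. In particular $\dist(q, p_i) < \dist(p^*, p_i)$ for every $i$. Multiplying by the strictly positive weights $w_i$ and summing gives $f(q) < f(p^*)$, contradicting the optimality of $p^* = \pos(u)$. Therefore $\pos(u) \in H$, which is precisely the claim.

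I expect the main (though still routine) obstacle to be handling the strict distance decrease cleanly, namely correctly establishing and applying the obtuse-angle inequality for the projection onto a closed convex set, and making sure the strictness survives so that the weighted sum genuinely drops. Everything else—the decoupling of the cost contribution of $u$ and the positivity of the weights $w_i$—follows immediately from the definitions, so the argument is essentially self-contained and does not require the full strength (or the $n \geq 3$ uniqueness hypothesis) of Kuhn and Kuenne's result.
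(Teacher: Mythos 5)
Your proof is correct, and it takes a genuinely different route from the paper, because the paper in fact offers no proof at all: the lemma is imported verbatim as Theorem~2 of Kuhn and Kuenne~\cite{kuhn1962}. Your reduction step mirrors what the paper's surrounding text does informally --- only the edges incident to the Steiner vertex $u$ depend on $\pos(u)$, so in an optimal embedding $\pos(u)$ minimizes $f(x) = \sum_i w_i \dist(x, p_i)$ with positive weights $w_i = \ell(uv_i)^\alpha \geq 1$ --- but where the paper then invokes the cited theorem as a black box, you establish the convex-hull property from scratch: supposing $p^* = \pos(u) \notin H = \CH(\set{p_1, \dots, p_n})$, you project onto the closed convex set $H$ and use the obtuse-angle inequality $\langle p^* - q,\, p - q \rangle \leq 0$ to obtain $\dist(q, p_i)^2 \leq \dist(p^*, p_i)^2 - \dist(p^*, q)^2 < \dist(p^*, p_i)^2$ for every neighbor, hence $f(q) < f(p^*)$, a contradiction. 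The expansion and the strictness bookkeeping are sound: strictness comes from $\dist(p^*, q) > 0$, and it survives the weighted sum precisely because every $w_i > 0$, which you verify. Your route buys self-containedness and slightly greater generality --- it needs neither the $n \geq 3$ hypothesis nor the uniqueness of the Weber point, applies to any minimizer, and works in any dimension --- while the paper's citation buys brevity and defers to a classical result. Either is a legitimate way to discharge the lemma.
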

The Weber point is also called the geometric median of a point set.
If more than half of the weights are concentrated on one single point $p$, then the Weber point is located at $p$.
\begin{lemma}{\cite[Theorem 3]{kuhn1962}}
  \label{lem:properties:majority-weber-point}
  Let $L$ be an optimal embedding of a topology for a given \flamecast instance.
  If there is a point $p$ with
  \begin{align*}
    \sum_{\substack{w \in N(v) \\ \pos(w) = p}} \ell(vw)^\alpha \geq \sum_{\substack{w \in N(v) \\ \pos(w) \neq p}} \ell(vw)^\alpha ,
  \end{align*}
 then $v$ is located at $p$ in $L$.
\end{lemma}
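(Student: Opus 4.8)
The plan is to exploit that the position-dependent cost of $v$'s incident edges is a convex function of $\pos(v)$ whose unique minimizer is the Weber point of $v$'s neighbors, and to show directly that this minimizer is $p$. Concretely, I would set $f(x) = \sum_{w \in N(v)} \ell(vw)^\alpha \dist(x, \pos(w))$; by the discussion preceding the lemma, an optimal embedding places $v$ at the unique minimizer of $f$, so it suffices to prove that $p$ minimizes $f$. I would partition the neighbors into $A = \{w \in N(v) : \pos(w) = p\}$ and $B = N(v) \setminus A$, abbreviating $W_A = \sum_{w \in A} \ell(vw)^\alpha$ and $W_B = \sum_{w \in B} \ell(vw)^\alpha$, so that the hypothesis reads $W_A \geq W_B$.

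The main step is a one-sided directional-derivative argument. Since $f$ is a nonnegative combination of norms it is convex, and for a convex function it suffices to check that every one-sided directional derivative at $p$ is nonnegative in order to conclude that $p$ is a global minimizer. Fixing a unit vector $z$ and writing $g(t) = f(p + tz)$, I would compute $g'(0^+)$ termwise: a neighbor $w \in A$ has $\pos(w) = p$, so its term is $\ell(vw)^\alpha \cdot t$ and contributes exactly $\ell(vw)^\alpha$ to the slope, independent of $z$; a neighbor $w \in B$ has $\pos(w) \neq p$, so its term is differentiable at $t = 0$ with derivative $\langle z, (p - \pos(w))/\dist(p, \pos(w))\rangle$, of absolute value at most $1$. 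Summing gives
\[
  g'(0^+) = W_A + \sum_{w \in B} \ell(vw)^\alpha \left\langle z, \frac{p - \pos(w)}{\dist(p, \pos(w))} \right\rangle \geq W_A - W_B \geq 0.
\]
As this bound holds for every direction $z$, the point $p$ minimizes $f$, and since the Weber point is the unique minimizer, $L$ must place $v$ at $p$.

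I expect the main obstacle to be the careful treatment of nondifferentiability: the map $x \mapsto \dist(x, p)$ is not differentiable at $x = p$, so gradients do not exist there and one must argue with one-sided directional derivatives instead. The decisive observation is that each neighbor located at $p$ contributes its full weight $\ell(vw)^\alpha$ to the slope in every direction -- because leaving $p$ increases its distance at unit rate -- whereas each of the remaining neighbors can reduce the slope by at most its own weight. The majority condition $W_A \geq W_B$ is then precisely what ensures the total slope stays nonnegative, so no direction of descent exists at $p$.
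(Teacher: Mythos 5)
Your proof is correct, but it takes a different route from the paper: the paper does not prove this lemma at all, it simply imports it as Theorem~3 of Kuhn and Kuenne~\cite{kuhn1962}, whereas you give a self-contained argument from first principles. Your computation is sound: $f$ is convex as a nonnegative combination of norms, a convex function is globally minimized at $p$ exactly when all one-sided directional derivatives at $p$ are nonnegative, each neighbor at $p$ contributes $+\ell(vw)^\alpha$ to the slope in every direction, and each neighbor elsewhere contributes at least $-\ell(vw)^\alpha$ by Cauchy--Schwarz, so the majority condition $W_A \geq W_B$ forces $g'(0^+) \geq 0$. What your approach buys is transparency (it isolates exactly why the majority condition suffices, namely the unit-rate growth of the distance term anchored at $p$) and independence from the cited reference; what the citation buys the paper is brevity and the full strength of Kuhn--Kuenne's characterization, which covers the sharper ``dominance'' criterion involving the norm of the summed unit vectors rather than just the sum of weights.

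One caveat on your final step: invoking uniqueness of the Weber point is slightly fragile in the boundary case $W_A = W_B$. If, say, $v$ has exactly two neighbors of equal weight at distinct positions, every point of the connecting segment minimizes $f$, so the minimizer is not unique and an optimal embedding need not place $v$ at $p$. This weakness is inherited from the lemma as the paper states it (with ``$\geq$''), and the paper's own uses of the lemma only need the weaker conclusion that placing $v$ at $p$ is optimal --- which is precisely what your directional-derivative argument establishes. If you want the literal conclusion ``$v$ is located at $p$ in every optimal embedding,'' note that under strict inequality $W_A > W_B$ (or whenever the neighbors in $B$ do not all lie in a single direction from $p$) your own computation gives $g'(0^+) > 0$ in every direction, which yields uniqueness directly without appealing to the cited uniqueness of the Weber point.
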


\section{Structural Properties of Optimal Layouts}
In this section, we compile some general properties on the structure of optimal layouts, which we use throughout the paper.
One such property is that edges incident to sources never cross.
\begin{lemma}
  \label{lem:properties:no crossing}
  Let $L$ be an optimal valid layout for some instance of \flamecast, let $a$ and $b$ be two sources with different positions, and let $v_a$ and $v_b$ be their respective parents.
  If $v_a \neq v_b$ and $a, b, v_a, v_b$ are not collinear, the edges $av_a$ and $bv_b$ do not intersect.
\end{lemma}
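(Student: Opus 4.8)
The plan is to argue by contradiction using a local exchange (an \emph{uncrossing}) argument. Suppose the segments $[\pos(a),\pos(v_a)]$ and $[\pos(b),\pos(v_b)]$ meet in a point $x$. I would build a new topology $F'$ from the underlying topology of $L$ by swapping the parents of the two sources: let $v_b$ become the new parent of $a$ and $v_a$ the new parent of $b$, keeping every other edge and the embedding $\pos$ unchanged. The goal is to show that $F'$ with this embedding is still a valid layout but has strictly smaller cost, contradicting the optimality of $L$.

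First I would check that $F'$ is a valid layout. Since $a$ and $b$ are leaves, no edge points into them, so reassigning their unique outgoing edges creates no directed cycle and leaves a rooted forest in which every non-root vertex still has a unique parent. For the height and capacity constraints, the key observation is that exchanging the parents of two leaves preserves the load of \emph{every} vertex: after the swap the ancestor set gained by $a$ equals the ancestor set lost by $b$ and vice versa, so for any vertex $w$ the change in the number of leaves in its subtree cancels. Hence $\ell(w)$ is unchanged for all $w$, every capacity that held in $L$ still holds in $F'$, and the layers of the internal vertices $v_a,v_b$ are unaffected. Since $v_b$ carries the child $b$, its layer is at most $\lambda$, so $a$ moves to a layer at most $\lambda+1$, and symmetrically for $b$; thus the height bound is preserved.

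Next I would bound the cost. Because $a$ and $b$ are leaves, all four edges $av_a$, $bv_b$, $av_b$, $bv_a$ have load $1$, so each of their costs equals the corresponding Euclidean length; moreover, by the load preservation above, all remaining edges keep their cost. Hence the cost difference $\cost(F')-\cost(L)$ equals $\big(\dist(a,v_b)+\dist(b,v_a)\big)-\big(\dist(a,v_a)+\dist(b,v_b)\big)$. Since $x$ lies on both original segments, $\dist(a,v_a)=\dist(a,x)+\dist(x,v_a)$ and $\dist(b,v_b)=\dist(b,x)+\dist(x,v_b)$, while the triangle inequality gives $\dist(a,v_b)\le\dist(a,x)+\dist(x,v_b)$ and $\dist(b,v_a)\le\dist(b,x)+\dist(x,v_a)$; summing these shows the cost difference is at most $0$.

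The main obstacle is the equality case, which is exactly where the non-collinearity hypothesis and $v_a\neq v_b$ enter. Equality forces both triangle inequalities to be tight, i.e.\ $x\in[a,v_b]$ and $x\in[b,v_a]$ in addition to $x\in[a,v_a]$ and $x\in[b,v_b]$. If $x$ differs from both $a$ and $b$, then $v_a$ and $v_b$ lie on the line through $a,x$ as well as on the line through $b,x$; since $v_a\neq v_b$ these two lines must coincide with the line through $v_a,v_b$, forcing $a,b,v_a,v_b$ to be collinear. The degenerate cases $x=a$ and $x=b$ are handled analogously and likewise yield collinearity. As this contradicts the hypothesis, the cost strictly decreases, contradicting the optimality of $L$; therefore the two edges cannot intersect. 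I expect the bookkeeping of the validity check (load preservation, height, absence of cycles) together with this equality analysis to be the only delicate points, the remainder being the standard planar uncrossing.
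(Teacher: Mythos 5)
Your proposal is correct and takes essentially the same route as the paper's proof: swap the parents of $a$ and $b$, note that the swap preserves all loads (hence validity), and apply the triangle inequality at the crossing point, with strictness coming from the non-collinearity hypothesis. Your explicit bookkeeping of the validity check and the equality-case analysis is more detailed than the paper's (which simply asserts strictness for non-collinear points), but it is the same uncrossing argument.
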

\begin{proof}
  Consider a layout $L$ where two edges $av_a$ and $bv_b$ cross at some point $p$.
  As the load of both edges is $1$, the cost contribution from these two edges is $\cost(av_a) + \cost(bv_b)=\dist(a,v_a) + \dist(b, v_b) = \dist(a,p) + \dist(p,v_a) + \dist(b,p) + \dist(p,v_b)$.

  We construct the alternative layout $L'$ in which $a$ is connected to $v_b$ and $b$ is connected to $v_a$.
  All other connections remain the same.
  In $L'$, the cost contribution of $b v_a$ and $a v_b$ is $\cost(bv_a) + \cost(av_b) = \dist(b,v_a) + \dist(a, v_b)$.
  By the triangle inequality, we know that $\dist(a,v_b) \leq \dist(a,p) + \dist(p,v_b)$ and $\dist(b,v_a) \leq \dist(b,p) + \dist(p,v_a)$.
  This shows that the cost of $L'$ is less than or equal to the cost of $L$.
  The inequality is strict if the points are not collinear.

  Swapping the parents preserves the subtree sizes of both Steiner vertices, and thus, $L'$ is a valid layout.
  Since we can transform any layout with crossing edges into a layout with lower cost without crossing edges, an optimal layout cannot have two edges that are incident to sources cross.
\end{proof}
Moreover, we show that if multiple sources share the same position, then they also share a parent in any cost-optimal layout.
Recall that a layout is not necessarily valid, i.e., the capacities in a cost-optimal layout may be violated.
We first prove the following intermediate result.
\begin{lemma}
  \label{lemma:no-split-op}
  Let $I = (S, T, C, \alpha)$ be an instance of \flamecast with $\alpha \in [0, 1)$, and let $A \subseteq S$ be a set of sources that share the same position.
  Let $L$ be a layout for $I$.
  Let $a_v \in A$ be connected to a Steiner vertex $v$ and $a_w \in A$ be connected to a Steiner vertex $w$ in $L$, where $w \neq v$.
  If $\alpha = 0$, then either reconnecting $a_v$ to $w$ or $a_w$ to $v$ does not increase the cost.
  If $\alpha \in (0, 1)$, then one or the other strictly reduces the cost.
\end{lemma}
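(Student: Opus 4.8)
The plan is to compare the two candidate reconnections against the original layout \emph{simultaneously}, exploiting the concavity of $t \mapsto t^\alpha$. Write $X$ for the cost of $L$, $Y$ for the cost after reconnecting $a_v$ to $w$, and $Z$ for the cost after reconnecting $a_w$ to $v$. Since $\min(Y,Z) \le \tfrac12(Y+Z)$, it suffices to establish $Y + Z \le 2X$, with strict inequality when $\alpha \in (0,1)$; this immediately gives that the cheaper of the two moves does not increase (respectively, strictly decreases) the cost.

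First I would account for how a \emph{single} reconnection changes the cost. Because $a_v$ and $a_w$ occupy the same position $p$, the two source edges have lengths $d_v = \dist(p,v)$ and $d_w = \dist(p,w)$ and both have load $1$; replacing $a_v v$ by $a_v w$ therefore changes the source-edge contribution by $d_w - d_v$. The more delicate effect is on the internal edges: moving $a_v$ out of the subtree of $v$ into the subtree of $w$ decreases the load by one on every edge of the root-path $\pi_v$ of $v$ and increases it by one on every edge of the root-path $\pi_w$ of $w$. On the shared suffix of the two paths (the edges above their lowest common ancestor, or none at all if $v$ and $w$ lie in different trees) the $-1$ and $+1$ cancel, so only the disjoint portions $P_v = \pi_v \setminus \pi_w$ and $P_w = \pi_w \setminus \pi_v$ experience a net change. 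Writing $\ell_e$ for the load of $e$ in $L$, this yields
\[
Y - X = (d_w - d_v) + \sum_{e \in P_v}\len(e)\bigl((\ell_e-1)^\alpha - \ell_e^\alpha\bigr) + \sum_{e \in P_w}\len(e)\bigl((\ell_e+1)^\alpha - \ell_e^\alpha\bigr),
\]
and the symmetric expression for $Z - X$ with the roles of $P_v$ and $P_w$ interchanged.

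Adding the two expressions, the source-length terms $d_w - d_v$ and $d_v - d_w$ cancel and each edge of $P_v \cup P_w$ contributes its length times the second difference $(\ell_e-1)^\alpha + (\ell_e+1)^\alpha - 2\ell_e^\alpha$, giving
\[
(Y + Z) - 2X = \sum_{e \in P_v \cup P_w}\len(e)\bigl((\ell_e-1)^\alpha + (\ell_e+1)^\alpha - 2\ell_e^\alpha\bigr).
\]
Since $t \mapsto t^\alpha$ is concave on $[0,\infty)$ for $\alpha \in [0,1]$ and every $\ell_e \ge 1$, each bracketed second difference is $\le 0$, so $Y + Z \le 2X$, which settles the $\alpha = 0$ claim. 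For $\alpha \in (0,1)$ the map is \emph{strictly} concave, so every bracket is strictly negative; as $v \ne w$, the set $P_v \cup P_w$ is nonempty, and whenever it contains an edge of positive length the whole sum is strictly negative and $\min(Y,Z) < X$.

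The main obstacle is the load bookkeeping: one must verify carefully that the load changes propagate exactly along the two root-paths and cancel on their shared suffix, so that the two candidate moves really combine into the clean sum of second differences above. A secondary point requiring care is the strictness for $\alpha \in (0,1)$: one has to ensure $P_v \cup P_w$ contains a positive-length edge, handling the degenerate situation in which all internal edges on these paths have length zero, where the strict gain comes instead from the source-edge term $d_w - d_v$ whenever $d_v \neq d_w$.
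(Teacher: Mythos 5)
Your proof is correct and takes essentially the same route as the paper's: the paper's quantities $\Delta_v^{+}, \Delta_v^{-}, \Delta_w^{+}, \Delta_w^{-}$ and its key inequality (via the monotonicity of $z \mapsto (z+1)^\alpha - z^\alpha$) are exactly your second-difference/concavity argument, and summing the two candidate cost changes and picking the better move is the same concluding step. The degenerate strictness issue you flag for $\alpha \in (0,1)$ (all relevant internal edges of length zero, which in the same-tree case also forces $d_v = d_w$) is glossed over in the paper's proof as well, so it is not a gap specific to your argument.
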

\begin{proof}
  For $\alpha = 0$, the cost only depends on the total length of the edges.
  Assuming $\dist(A, v) \leq \dist(A, w)$ without loss of generality, the cost does not increase if $a_w$ is reconnected to $v$.
  Now, let $x_v$ be the lowest ancestor of $a_v$ and $a_w$ if $a_v$ and $a_w$ are in the same tree, and the root of $v$ otherwise.
  For $\alpha \in (0, 1)$, we consider how the cost of the path from $v$ to $x$ changes when adding a child to $v$ or removing a child from $v$.
  Note that for the edges on the path from $x_v$ to its root, the cost does not change since the loads of the ancestors of $x_v$ do not change.
  Figure~\ref{fig:no-split} shows an example before and after reconnecting $a_w$ to $v$.
  \begin{figure}
    \centering
    \includegraphics[page=4]{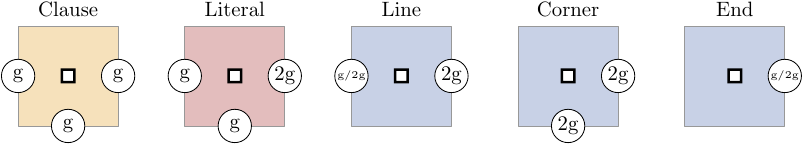}
    \caption{A group of sources that is connected to two distinct vertices, before and after reconnecting $a_w$ to $v$.
    The changed edges are marked in orange.}
    \label{fig:no-split}
  \end{figure}
  Let $P_v = (v = v_0, v_1, \ldots, v_k = x_v)$ be the unique path from $v$ to $x_v$.
  Let $\Delta_v^+$ be the cost increase for the path $P_v$ and the children of $v$ if $a_w$ is connected to $v$.
  Similarly, let $\Delta_v^-$ be the decrease of the cost for the path $P_v$ and the children of $v$ if $a_v$ is removed from $v$.
  We later show that $\Delta_v^+ < \Delta_v^-$, i.e., that the cost increase for adding a child is smaller than the cost gain by removing a child due to the cost parameter $\alpha$.
  For $\alpha \in (0, 1)$, the cost for an edge $e$ (proportional to $\ell(e)^\alpha$) grows slower with increasing load.

  In the case of connecting $a_v$ to $v$, we get an additional edge from $A$ to $v$ with cost $\dist(A, v)$, and the load of each edge $v_{i - 1}v_i$ for $i \in [k]$ along path $P_v$ increases by $1$.
  Then, we obtain an increase of the cost for the path $P_v$ and the children of $v$ of
  \begin{equation*}
    \label{eq:deltaplus}
    \Delta_v^+ = \dist(A,v) + \sum_{i = 1}^{k} \dist(v_{i - 1}, v_i) \cdot \left((\ell(v_{i - 1}) + 1)^\alpha - \ell(v_{i - 1})^\alpha\right) .
  \end{equation*}
  In the case of removing $a_v$ from $v$, we lose an edge between $A$ and $v$ with cost $\dist(A, v)$, and the load of each edge $v_{i - 1}v_i$ for $i \in [k]$ along path $P_v$ decreases by $1$, yielding
  \begin{equation*}
    \label{eq:deltaminus}
    \Delta_v^- = \dist(A,v) + \sum_{i = 1}^{k} \dist(v_{i - 1}, v_i) \cdot \left(\ell(v_{i - 1})^\alpha - (\ell(v_{i - 1}) - 1)^\alpha\right) .
  \end{equation*}
  To show $\Delta_v^+ < \Delta_v^-$, we show the inequality for each $i \in [k]$
  \begin{equation*}
   (\ell(v_{i - 1}) + 1)^\alpha - \ell(v_{i - 1})^\alpha < \ell(v_{i - 1})^\alpha - (\ell(v_{i - 1}) - 1)^\alpha .
  \end{equation*}
  Note that on both sides of the inequality, we have the function $f(z) = (z + 1)^\alpha - z^\alpha$.
  On the left-hand side it is evaluated at $z = \ell(v)$ and on the right-hand side at $z = \ell(v) - 1$.
  As $f(z)$ is strictly decreasing in $z$ for $0 < \alpha < 1$, the inequality is true for each edge, and thus we have $\Delta_v^+ < \Delta_v^-$ in total.

  We define $\Delta_w^+$ and $\Delta_w^-$ analogously.
  Let $x_w$ be the lowest ancestor of $a_v$ and $a_w$ if $a_v$ and $a_w$ are in the same tree, and the root of $w$ otherwise.
  Furthermore, let $P_w = (w = w_0, w_1, \ldots, w_k = x_w)$ be the unique path from $w$ to $x_w$.
  The increase of the cost for the path $P_w$ and the children of $w$ if $a_v$ is connected to $w$ is denoted by $\Delta_w^+$, and $\Delta_w^-$ is the decrease of the cost for the path $P_w$ and the children of $w$ if $a_w$ is removed from $w$.
  Using the same reasoning as for $v$, we show that $\Delta_w^+ < \Delta_w^-$.

  We can now express the total cost change when reconnecting $a_w$ to $v$ in terms of $\Delta_v^+$ and $\Delta_w^-$ as $\Delta_v^+ - \Delta_w^-$.
  Analogously, reconnecting $a_v$ to $w$ changes the cost by $\Delta_w^+ - \Delta_v^-$.
  We show that one of the changes decreases the total cost by summing up the cost changes for both modifications.
  Rearranging the sum and using $\Delta_v^+ < \Delta_v^-$ and $\Delta_w^+ < \Delta_w^-$ gives us
  \begin{equation*}
    (\Delta_v^+ - \Delta_w^-) + (\Delta_w^+ - \Delta_v^-) = (\Delta_v^+ - \Delta_v^-) + (\Delta_w^+ - \Delta_w^-) < 0 .
  \end{equation*}
  This directly implies that either $\Delta_v^+ - \Delta_w^-$ or $\Delta_w^+ - \Delta_v^-$ is less than $0$, and thus, one of the changes reduces the total cost.
\end{proof}
The argument above can be repeated for all sources in $A$ that do not share a parent.
Since the cost is only reduced in every step for $\alpha \in (0,1)$, this process does not cycle, and thus, all sources in $A$ share the same parent in an optimal layout.
For $\alpha = 0$, the cost does not increase if all sources in $A$ share the same parent.
Repeating this for every set of sources that share a position gives us the following corollary.
\begin{corollary}
  \label{cor:no-split}
  In an optimal layout for an instance of \flamecast with $\alpha \in (0,1)$, sources with the same position also have the same parent.
  For $\alpha = 0$, there is an optimal layout where sources with the same position also have the same parent.
\end{corollary}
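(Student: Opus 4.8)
The plan is to derive the corollary by invoking Lemma~\ref{lemma:no-split-op} on each maximal set of co-located sources, handling the two regimes $\alpha \in (0,1)$ and $\alpha = 0$ separately. Throughout, I fix an optimal layout $L$ and a set $A \subseteq S$ of sources sharing a single position that do not all have the same parent, and I work with \emph{layouts} in the cost-optimal (not necessarily valid) sense used in the lemma.

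For $\alpha \in (0,1)$ I would argue by contradiction, and here no iteration is actually needed. If two sources $a_v, a_w \in A$ had distinct parents $v \neq w$, then Lemma~\ref{lemma:no-split-op} guarantees that one of the two reconnections strictly decreases the cost while keeping the object an $S$-$T$-forest for $I$. This single witnessed split already contradicts the optimality of $L$. Hence every two members of $A$ share a parent, so all of $A$ does; applying this to every maximal co-located set yields the claim for $\alpha \in (0,1)$.

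For $\alpha = 0$ the lemma only gives that one of the two reconnections \emph{does not increase} the cost, so the contradiction argument collapses, and I expect this weaker guarantee to be the main obstacle. I would therefore argue constructively. Among the parents of the sources in $A$, pick one, say $v^{*}$, minimizing the distance $\dist(A, v^{*})$ to the common position of $A$. Reconnecting each source of $A$ to $v^{*}$ decomposes into at most $|A| - 1$ pairwise reconnections of exactly the form covered by the lemma: each moves a source from a farther parent $w$ (with $\dist(A, v^{*}) \leq \dist(A, w)$) to the nearer parent $v^{*}$, and hence never increases the cost. What makes this clean is that for $\alpha = 0$ the cost is load-independent, so such a move changes only the length of that single source edge and leaves the root paths of both trees untouched. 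The resulting layout thus has cost at most that of $L$, is therefore again optimal, and has all of $A$ attached to $v^{*}$.

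Finally I would observe that distinct maximal co-located sets occupy distinct positions, so reconnecting the sources within one set never alters the parent assignments of any other set; the sets can therefore be processed one after another without interference, and after finitely many steps we obtain a single optimal layout in which every group of co-located sources shares a parent. The only genuinely delicate point is the weaker, existential phrasing forced by the $\alpha = 0$ case: since steps may merely preserve cost, we cannot conclude that \emph{every} optimal layout has the property, only that one does.
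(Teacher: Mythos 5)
Your proposal is correct and follows essentially the same route as the paper: it derives the corollary by applying Lemma~\ref{lemma:no-split-op} groupwise, using the strict cost decrease as a contradiction to optimality for $\alpha \in (0,1)$, and using cost-preserving reconnections (hence only an existential statement) for $\alpha = 0$. Your $\alpha = 0$ argument is in fact slightly tighter than the paper's one-line justification, since fixing the nearest parent $v^{*}$ as the common target makes termination and non-interference between co-located groups explicit, whereas the paper only gestures at repeating the lemma's reconnection step.
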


\section{Complexity}
With its resemblance to the Steiner tree problem, it should come as no surprise that \flamecast in general is NP-hard.
In this section, we take a more fine-grained look at its computational complexity, depending on the number of intermediate layers $\lambda$ and on the parameter $\alpha$ controlling the edge cost.
We first show that if $\lambda = 0$ or $\alpha = 1$, then the problem is solvable in polynomial time.
However, for at least one intermediate layer, we prove NP-hardness for every $\alpha < 1$ by reducing from \textsc{Planar Monotone 3-SAT}.

\subsection{No Intermediate Layers or \texorpdfstring{\boldmath$\alpha = 1$}{𝛼 = 1}}
\label{sec:two layer}
In the case $\lambda = 0$, there are no intermediate layers but only sources and sinks.
Thus, for a given instance $I$, we only have to decide for each source which sink to connect it to.
With this, finding a minimum-cost assignment is essentially a matching problem.
More formally, we model the problem as follows.
We construct a complete bipartite graph with parts $V_S$ and $V_T$ where $V_S$ is the set of sources.
In $V_T$, we have $c_0$ copies of each sink, where $c_0$ is the capacity of the sinks.
The weight of an edge between some vertex in $V_S$ and some vertex in $V_T$ is equal to the distance between the corresponding source and the corresponding sink (recall that each source and each sink has a fixed position in $I$).

The edges in a maximum matching with minimum weight in this bipartite graph directly correspond to an optimal assignment.
Note that no capacities are violated since at most $c_0$ sources are assigned to each sink by construction.
If not all sources are assigned to a sink, then the total capacity of the sinks is not sufficient and there is no valid solution.
A maximum matching with minimum weight can be determined in time $O((c_0 \cdot |T|)^3)$ using the Kuhn-Munkres algorithm~\cite{kuhn1955,munkres1957,tomizawa1971}.

For $\alpha = 1$ and any number of layers, we first show that bundling sources with intermediate layers cannot reduce the cost of a layout.
\begin{observation}
  Let $I$ be a feasible instance of \flamecast with $\alpha = 1$.
  There is an optimal valid layout for $I$ where every source is directly connected to a sink.
\end{observation}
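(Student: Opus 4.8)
The plan is to exploit the fact that for $\alpha = 1$ the cost of a layout decomposes into a sum of source-to-sink path lengths, so that routing through Steiner vertices can never beat connecting sources directly.

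First I would rewrite the objective. For a layout $L$ with $\alpha = 1$, every edge $e = uv$ contributes $\cost(uv) = \dist(u,v) \cdot \ell(uv)$, and $\ell(uv)$ is exactly the number of sources whose unique path to the root traverses $e$. Swapping the order of summation gives
\[
\cost(L) = \sum_{uv \in E} \dist(u,v)\,\ell(uv) = \sum_{s \in S} \len\bigl(P_L(s)\bigr),
\]
where $P_L(s)$ denotes the path from source $s$ to its root in $L$ and $\len(\cdot)$ its total Euclidean length. This identity is the crux: the $\alpha = 1$ cost is additive over sources, each source paying precisely the length of its own route to a sink.

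Next I would take an optimal valid layout $L^{\ast}$, which exists since $I$ is feasible. Each source $s$ reaches a unique sink $t_s$ in $L^{\ast}$; read off the induced assignment $s \mapsto t_s$ and build $L'$ by connecting every source directly to its $t_s$ while discarding all Steiner vertices. Two things need checking. For validity, note that the number of sources assigned to a sink $t$ equals its load $\ell(t) \le c_0$ in $L^{\ast}$, so each sink still receives at most $c_0$ sources; the sources now sit in layer $1$ with load $1 \le c_1$ (the capacities are nonincreasing and $c_{\lambda+1} \ge 1$), and no higher layers remain, so $L'$ is a valid layout. For cost, the straight segment from $s$ to $t_s$ is no longer than the polyline $P_{L^{\ast}}(s)$ by the triangle inequality, hence
\[
\cost(L') = \sum_{s \in S} \dist(s, t_s) \;\le\; \sum_{s \in S} \len\bigl(P_{L^{\ast}}(s)\bigr) = \cost(L^{\ast}).
\]
Since $L^{\ast}$ is optimal, $L'$ is optimal as well and connects every source directly to a sink, as required.

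I expect no serious obstacle here; the only point demanding care is the validity argument, where one must confirm that collapsing paths never violates capacities. This hinges on the observation that the sink-capacity bound $c_0$ is exactly what the feasibility condition $|T| \cdot c_0 \ge |S|$ provides, together with the fact that the intermediate-layer capacities become vacuous once the Steiner vertices are removed. The cost reduction itself is immediate from the additive reformulation combined with the triangle inequality.
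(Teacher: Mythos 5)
Your proof is correct, but it takes a genuinely different route from the paper. The paper argues locally with the Weber-point machinery: for $\alpha = 1$ and a Steiner vertex $v$ with parent $u$, the edge to the parent carries weight $\ell(uv) = \sum_{w \in N(v) \setminus \set{u}} \ell(vw)$, so by the majority property of Weber points (Lemma~\ref{lem:properties:majority-weber-point}, from Kuhn--Kuenne) the optimal position of $v$ coincides with that of $u$; hence $v$ can be contracted into $u$ at no extra cost, and iterating eliminates all Steiner vertices. You instead prove the global identity $\cost(L) = \sum_{s \in S} \len(P_L(s))$ -- valid precisely because at $\alpha = 1$ the load $\ell(uv)$ counts the sources whose root-path traverses $uv$, so the double sum can be exchanged -- and then shortcut every source to its sink in one step via the triangle inequality. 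Your argument buys elementarity (no dependence on the Weber-point lemma) and constructs the direct-connection layout in a single pass rather than by repeated contraction; the paper's argument buys reuse of a lemma it has already set up and also records the slightly stronger local fact that any individual Steiner vertex can be merged into its parent without loss. Your validity check (sink loads are preserved, sources become layer-$1$ vertices with load $1 \leq c_1$) is the right one and matches what the paper leaves implicit.
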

\begin{proof}
  Let $L$ be a layout for $I$.
  If $L$ only consists of sources and sinks, we are done.
  Otherwise, there is a Steiner vertex $v$.
  For the parent $u$ of $v$, it is $\ell(uv)^\alpha = \sum_{w \in N(v) \setminus \set{u}} \ell(vw)^\alpha$ for $\alpha = 1$, i.e., $u$ weighs as much as all other neighbors together.
  By Lemma~\ref{lem:properties:majority-weber-point}, it is optimal if $v$ has the same position as $u$.
  Thus, we can remove $v$ and connect all children of $v$ to $u$ without increasing the cost.
  Repeating this argument eliminates all Steiner vertices.
\end{proof}
The previous lemma directly implies that $\alpha = 1$ is just a special case of having no intermediate layers.
Thus, it can also be solved using a matching algorithm.
We obtain the following theorem.
\begin{theorem}
  \label{thm:poly-case}
    If $\alpha = 1$ or if there are no intermediate layers, then \flamecast is solvable in $O((c_0 \cdot |T|)^3)$ where $|T|$ is the number of sinks and $c_0$ is their capacity.
\end{theorem}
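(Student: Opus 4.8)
The plan is to reduce both cases to a single minimum-weight maximum bipartite matching problem, reusing the construction already set up before the statement. First I would treat the case of no intermediate layers ($\lambda = 0$). Here no Steiner vertices can be placed, so every source must be connected directly to a sink; a valid layout is therefore just an assignment of sources to sinks respecting the capacity $c_0$, and since every such edge has load $1$, the layout cost equals the total edge length $\sum \dist(s, t)$ over the chosen source--sink pairs. I would build the complete bipartite graph with $S$ on one side and $c_0$ copies of each sink on the other, weighting each edge by the distance between the corresponding source and sink, exactly as described above.

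The key correctness step is the correspondence between matchings and valid layouts. A maximum matching of minimum weight assigns each source to at most one sink copy and each sink at most $c_0$ sources, so no capacity is violated by construction, and its weight equals the cost of the corresponding layout. I would then argue that an optimal valid layout exists if and only if some matching saturates $S$; if none does, then $|S| > c_0 \cdot |T|$ and the instance is infeasible, matching the feasibility criterion from the preliminaries. The Kuhn--Munkres algorithm computes such a matching in time cubic in the size of the larger part, giving the claimed $O((c_0 \cdot |T|)^3)$ bound.

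For the case $\alpha = 1$ with an arbitrary number of layers, I would invoke the preceding observation, which guarantees an optimal valid layout in which every source is joined directly to a sink. This collapses the $\alpha = 1$ setting onto the already-solved $\lambda = 0$ setting: the same matching instance yields an optimal layout, with the same running time. Assembling the two cases gives the theorem.

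The main obstacle is conceptual rather than computational, namely justifying that the matching model discards nothing relevant. For $\lambda = 0$ this is immediate. For $\alpha = 1$ the real content sits in the observation, whose proof leans on Lemma~\ref{lem:properties:majority-weber-point} to collapse each Steiner vertex onto its parent without increasing the cost; once that structural fact is available, the translation to matching and the cubic time bound are routine.
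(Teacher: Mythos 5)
Your proposal is correct and follows essentially the same route as the paper: the identical bipartite matching construction with $c_0$ copies of each sink solved via Kuhn--Munkres for the $\lambda = 0$ case, and the preceding observation (collapsing Steiner vertices onto their parents via Lemma~\ref{lem:properties:majority-weber-point}) to reduce the $\alpha = 1$ case to it. No gaps.
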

Note that we can assume that $c_0 \leq |S|$ since the load of a vertex is at most the number of all sources.
Thus, the proposed algorithm is indeed polynomial in the input size.

\subsection{One Intermediate Layer and \texorpdfstring{\boldmath$\alpha < 1$}{𝛼 < 1}}

Here we contrast the results from the previous section by showing that \flamecast is NP-hard, even if we have only one intermediate layer and for any $\alpha < 1$.
We show this with a reduction from the NP-hard problem \textsc{Planar Monotone 3-SAT}~\cite{deberg2010}.
Moreover, we show that approximating \flamecast within a factor of $1 + \frac{1 + 6^\alpha - 7^\alpha}{14^\alpha n}$ is NP-hard, where $n$ is the number of sources.

\subparagraph*{Planar Monotone 3-SAT}

Let $x_1, \dots, x_n$ be a set of Boolean variables.
The appearance of a variable $x_i$ in a Boolean formula is called a \emph{literal}, and it can be \emph{positive} or \emph{negative} if it appears as $x_i$ or in its negated form $\overline{x_i}$, respectively.
A disjunction of literals is called a \emph{clause}.
The \emph{size} of a clause is the number of literals it contains, e.g., $x_1 \vee \overline{x_2} \vee x_3$ is a clause of size~3.
A Boolean formula $\Phi$ is in \emph{conjunctive normal form (CNF)} if it is a conjunction of clauses, i.e., $\Phi = c_1 \wedge \dots \wedge c_m$ for clauses $c_1, \dots, c_m$.
The problem \textsc{3-SAT} has as input a CNF formula $\Phi$ where each clause has size at most~3 and one has to decide whether $\Phi$ has a satisfying assignment, i.e., whether one can assign values true and false to the variables such that $\Phi$ evaluates to true.

The variant of \textsc{3-SAT} we reduce from poses some additional restrictions to the inputs, which we introduce in the following.
A clause is called \emph{monotone} if it only contains positive or only negative literals.
We then also call the clause itself \emph{positive} or \emph{negative}, respectively.
A CNF-formula $\Phi$ is \emph{monotone} if all its clauses are monotone.

Consider an instance $\Phi$ of \textsc{3-SAT} with variable set $X$ and clause set $C$.
The \emph{clause--variable graph} is the bipartite graph with vertex set $X \cup C$ and an edge between $x \in X$ and $c \in C$ if and only if the variable $x$ appears in the clause $c$.
A \emph{rectilinear planar drawing} of the clause--variable graph is an intersection-free representation, where each variable is a line segment on the $x$-axis and every clause has the shape of an \textsf{E} rotated by $90^\circ$; also see Figure~\ref{fig:sat}.
Thus, a clause consists of one horizontal segment with three vertical segments attached to it.
With the \emph{position of the clause}, we refer to the point where the horizontal segment and the middle vertical segment meet.
Removing this point from the drawing of the clause decomposes it into three pieces, which we call the \emph{legs} of the clause (one of which is just a segment and the other two have one bend each).
It is known that every instance of \textsc{3-SAT} with planar clause--variable graph has a rectilinear planar drawing~\cite{knuth1992}.

The problem \textsc{Planar Monotone 3-SAT} has as input a monotone \textsc{3-SAT} formula, together with a rectilinear planar drawing of its clause--variable graph such that all positive clauses are above and all negative clauses are below the $x$-axis; see Figure~\ref{fig:sat}.
We additionally assume that at each variable, each vertical segment belonging to a negative occurrence is strictly to the left of each vertical segment belonging to a positive occurrence.
Since the positive and the negative clauses can be arranged independently from each other, this is not a real restriction.
De Berg and Khosravi showed that this variant of \textsc{3-SAT} remains NP-hard \cite{deberg2010}.
We can assume that the given drawing is a grid drawing of polynomial size, i.e., the endpoints of each segment have integer coordinates whose magnitude is polynomial in the number of clauses and variables.
By scaling such a drawing with a factor of $6$, we can additionally assume that all distances between non-adjacent line-segments are at least $6$ and that all coordinates are even integers.

\begin{figure}
  \centering
  \includegraphics[page=3]{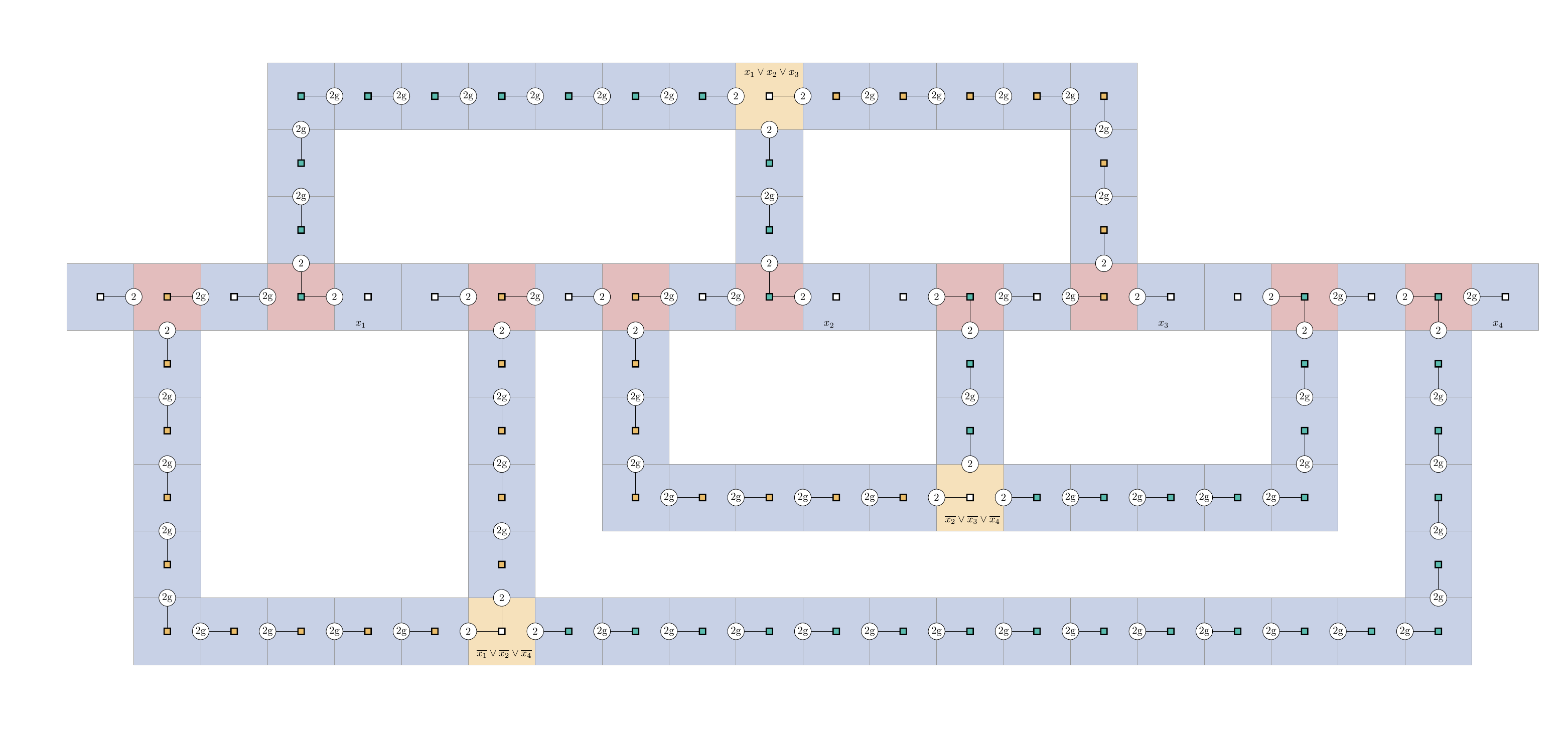}
  \caption{A rectilinear drawing of the monotone 3-SAT instance $I = (\set{x_1, x_2, x_3, x_4}, \set{x_1 \lor x_2 \lor x_3, \overline{x_1} \lor \overline{x_2} \lor \overline{x_4}, \overline{x_2} \lor \overline{x_3} \lor \overline{x_4}})$.
  }
  \label{fig:sat}
\end{figure}
%



\subparagraph*{Reduction From \textsc{3-SAT} to \flamecast}

Given an instance of \textsc{Planar Monotone 3-SAT} consisting of a formula $\Phi$ together with a drawing as described above, we now construct an instance $I_\Phi$ of \flamecast mimicking the drawing of $\Phi$, such that $\Phi$ is satisfiable if and only if $I_\Phi$ is a yes-instance.
For this, we model a variable gadget comprising multiple literal gadgets that behaves like a variable with a binary decision.
Additionally, we have a clause gadget that is satisfied if at least one of the literal gadgets is in the correct configuration.
Moreover, we need transport gadgets to propagate the decisions from the variables to the clauses.

Before describing these gadgets, let us first fix some general properties of the resulting \flamecast instance $I_\Phi$.
In $I_\Phi$ we allow one intermediate layer between the sinks and the sources.
Moreover, the sources of $I_\Phi$ will form groups with $g$ or $2g$ sources at the same position, where $g \geq 7$ is an arbitrarily chosen integer.
For the sake of understanding the general idea of our construction, it is useful to assume for now that intermediate vertices are always placed at the same position as one of these source groups, bundling them together before connecting to a sink.

The basic building blocks of our construction are shown in Figure~\ref{fig:gadgets}.
Each of these building blocks has one sink with capacity~$2g$.
When we later refer to the position of one of these building blocks, we mean the position of its sink.
Additionally, each building block has one, two, or three source groups on its boundary at distance~$1$ from the sink.
Each source group is composed of $g$ or $2g$ sinks as indicated by the numbers in Figure~\ref{fig:gadgets}.
We note that for the line and end gadgets, we have two variants: one of the source groups can have size $g$ or $2g$.

We combine these building blocks by placing them next to each other such that they share one of the source groups on their boundary.
In our construction, each source group will be shared by exactly two building blocks and finding a solution to the \flamecast instance will boil down to deciding to which of the two corresponding sinks to connect.
The role of the different building blocks is the following.

The \emph{clause gadget} represents the requirements posed by a clause.
Note that the fact that sinks have capacity~$2g$ means that only two of the three source groups on its boundary can connect to the sink of the clause gadget.
This corresponds to the restriction that at most two of the literals of a clause in $\Phi$ can be false.

The \emph{literal gadget} represents the value of a literal.
There are two canonical configurations in which the literal gadget can be.
The first option is that the source group of size $2g$ connects to the sink of the literal gadget and the two source groups of size $g$ connect to the sink of an adjacent gadget.
The second option is the direct opposite, i.e., that the two smaller groups connect to the sink of the literal gadget and the large group connects to the sink of an adjacent gadget.
We note that at this point it is not obvious that these are the only two configurations we really need to consider, but this will become clear later.

The \emph{corner} and the \emph{line} gadgets have the purpose of synchronizing the decisions between the literal gadgets of the same variable and of transporting the decisions from the literal gadgets to the corresponding clause gadgets.
For both gadgets, the canonical configuration in a solution of the \flamecast instance $I_\Phi$ is that exactly one of the two source groups connects to the sink of the gadget.
The purpose of the \emph{end gadget} is to collect loose ends.

\begin{figure}
  \centering
  \includegraphics{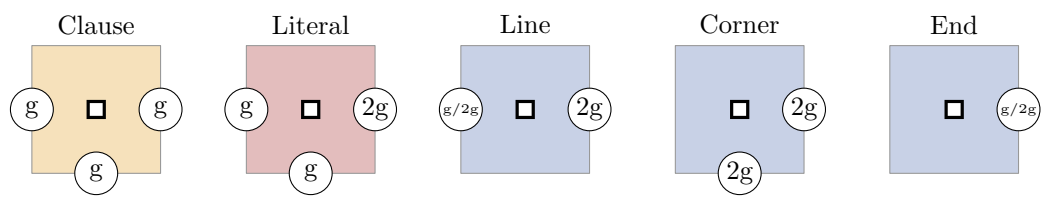}
  \caption{All gadgets that are used in the reduction.
  The central square represents a sink with capacity $2g$.
  Each group of sources is represented by a circle and have distance $1$ to the sink.
  The size of each source group inside the circle.
  For the line gadget and the end gadget, there are two variants where one source group can either have size $g$ or $2g$.}
  \label{fig:gadgets}
\end{figure}

For the synchronization between the literal gadgets, we combine them into a \emph{variable gadget} as shown in Figure~\ref{fig:variable}.
To describe this formally, consider a variable $x$ in $\Phi$.
In the drawing of $\Phi$, $x$ is represented as a horizontal line between points $p_{\min}$ and $p_{\max}$.
Moreover, there are vertical segments attaching to this horizontal line.
Let $p$ be a point (between $p_{\min}$ and $p_{\max}$) where such a vertical segment attaches.
Then we place a literal gadget at $p$, rotated by $180^\circ$ if the vertical segment attaches from above, i.e., if it represents a positive literal.
Recall that in the given drawing, the vertical segments representing negative literals are sorted to the left of the vertical segments representing positive literals, and thus, the literal gadgets are sorted as well.
Moreover, at every point on the horizontal line of $x$ between $p_{\min}$ and $p_{\max}$ with even $x$-coordinate that is not already occupied by a literal gadget, we place a (potentially rotated) line gadget.
The size of the source groups of these line gadgets is chosen to be $2g$ whenever possible, i.e., except in cases where the line gadget shares its boundary with a literal gadget that has a source group of size $g$ on the corresponding side.
Recall that we scaled the drawing of $\Phi$ by a factor of $6$.
Thus, all coordinates are even and this construction nicely works out such that two consecutive gadgets on the line of $x$ share a boundary.
Moreover, no line gadget is adjacent to two literal gadgets, i.e., a line gadget always has at least one source group of size~$2g$.
We finalize the variable gadget by adding two end gadgets with distance $2$ to the left of $p_{\min}$ and to the right of $p_{\max}$ (the latter being rotated by $180^\circ$).
Note that, again due to the scaling by a factor of $6$, no two building blocks of different variable gadgets overlap.

\begin{figure}
  \centering
  \includegraphics[page=2]{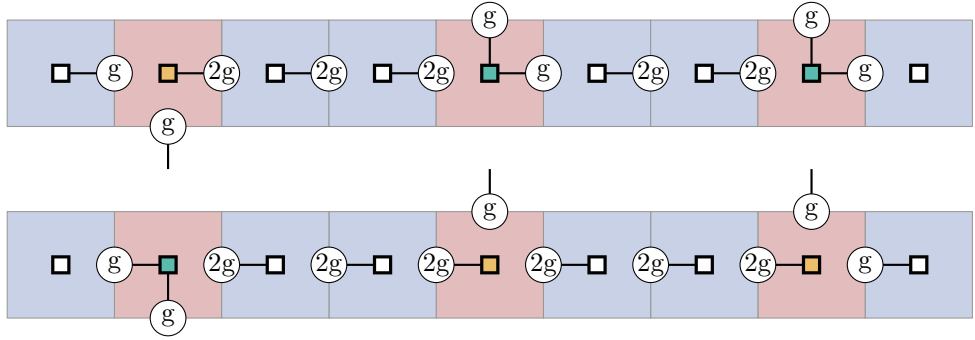}
  \caption{A variable gadget in its two canonical configurations.
  The configuration above is true whereas the configuration below is false.
  The color of a literal gadget sink indicates the configuration of the literal gadget: green for true, orange for false.}
  \label{fig:variable}
\end{figure}

For each clause of $\Phi$, we simply place a clause gadget at the position of the clause in the drawing of $\Phi$.
Moreover, we connect the clause gadgets with their corresponding literal gadgets using a sequence of line gadgets following the drawings of the legs.
In case the leg has a bend, we place a corner gadget (with the appropriate orientation) at the position of the bend.
The full construction can be seen in Figure~\ref{fig:instance}.

As the drawing of $\Phi$ has polynomial size, the instance $I_\Phi$ has polynomial size and can be constructed in polynomial time.

In the following, we define a specific type of layout, namely canonical layouts.
We show that $\Phi$ is satisfiable if and only if there is a valid canonical layout for $I_\Phi$.
Moreover, we show that any optimal layout for $I_\Phi$ is canonical.

\begin{figure}
  \centering
  \includegraphics[angle=90, page=2]{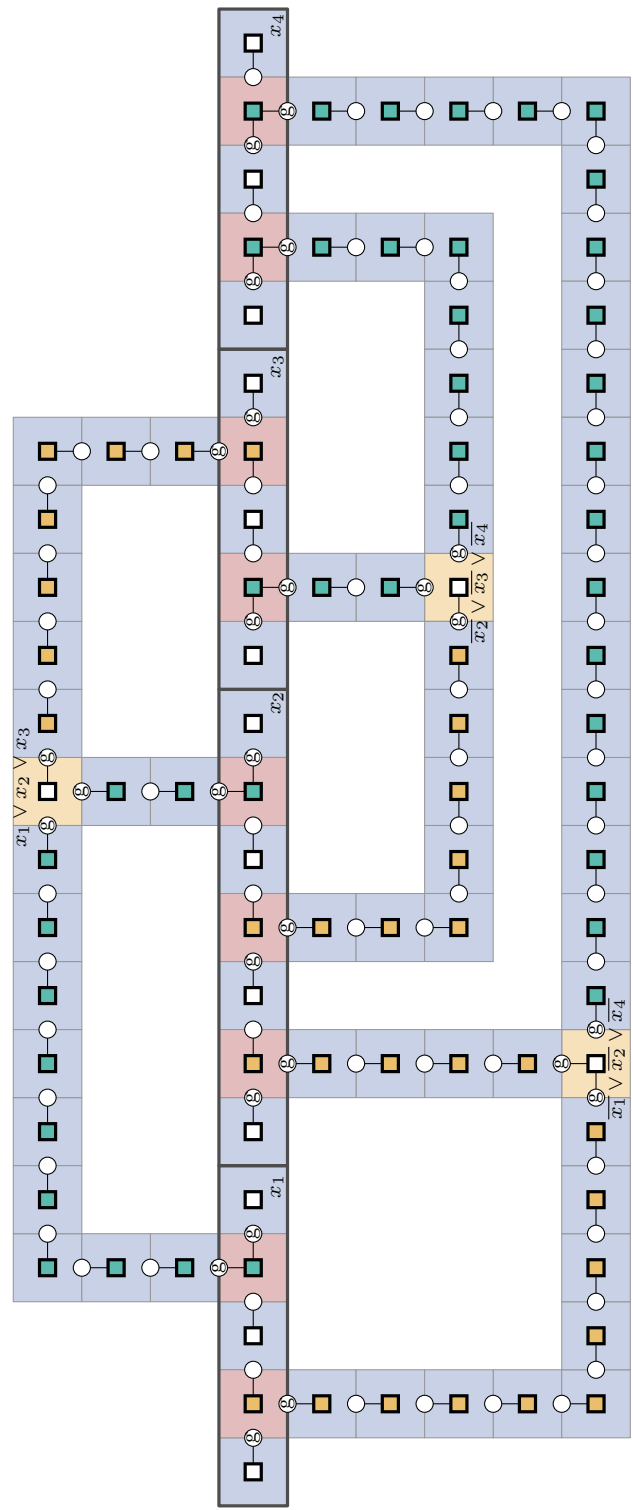}
  \caption{The constructed instance $I'$ for instance $I = (\set{x_1, x_2, x_3, x_4}, \set{x_1 \lor x_2 \lor x_3, \overline{x_1} \lor \overline{x_2} \lor \overline{x_4}, \overline{x_2} \lor \overline{x_3} \lor \overline{x_4}})$.
  In the unlabeled groups, there are $2g$ sources.
  A possible solution is setting $x_1$ and $x_2$ to true and $x_3$ and $x_4$ to false.
  The colors indicate the configuration of a gadget, green for true and orange for false.
  Note that distances between non-adjacent segments may be smaller than required for better readability.}
  \label{fig:instance}
\end{figure}



\subparagraph*{Canonical Layout}

In the description of our construction of $I_\Phi$ above, we sometimes made the implicit assumption that each source group is connected to a distinct Steiner vertex that is placed at the same position as the group and that each Steiner vertex is connected to a sink at distance $1$.
We call any layout that satisfies this property a \emph{canonical} layout.
We note that this definition includes layouts that are no solutions to $I_\Phi$, i.e., a canonical layout may violate the sink capacities.

Note that connecting a source group of size $k$ to a sink at distance $1$ costs $k^\alpha \cdot 1$.
Thus, every canonical layout of $I_\Phi$ has the same cost, which we refer to as \emph{canonical cost}.

\begin{observation}
  \label{lemma:canonical-cost}
  Let $n_g$ and $n_{2g}$ be the number of groups of size $g$ and $2g$ in the \flamecast instance $I_\Phi$, respectively.
  Then the cost of every canonical layout if $I_\Phi$ is $g^\alpha \cdot n_{g} + (2g)^\alpha \cdot n_{2g}$.
\end{observation}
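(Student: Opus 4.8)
The plan is to compute the cost of a canonical layout directly from the definition of the cost function, observing that the only edges contributing positive cost are those connecting a Steiner vertex to a sink. First I would recall that in a canonical layout each source group is connected to its own dedicated Steiner vertex placed at the same position as the group. Consequently, every edge from a source to its Steiner vertex has length $\dist(\cdot,\cdot) = 0$, and hence cost $0$ by the definition $\cost(uv) = \dist(u,v) \cdot \ell(uv)^\alpha$; these edges can be ignored entirely.

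Next I would analyze the single remaining edge per group, namely the edge from the Steiner vertex to a sink. By the definition of a canonical layout this edge has length exactly $1$, and its load equals the number of sources routed through the Steiner vertex, which is precisely the size $k$ of the source group. Thus its cost is $1 \cdot k^\alpha = k^\alpha$. Crucially, this value does not depend on which of the two adjacent sinks the group connects to, since by construction each source group lies at distance $1$ from both sinks sharing it; this is exactly why every canonical layout has the same cost.

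Finally I would sum these contributions over all source groups. Each of the $n_g$ groups of size $g$ contributes $g^\alpha$ and each of the $n_{2g}$ groups of size $2g$ contributes $(2g)^\alpha$, yielding a total of $g^\alpha \cdot n_g + (2g)^\alpha \cdot n_{2g}$, as claimed. There is no genuine obstacle here: the statement is an immediate bookkeeping consequence of the definitions. The only point worth stating explicitly is that the shared-group structure of $I_\Phi$ does not cause double counting—each group is counted once and contributes its cost exactly once regardless of the routing choice—which is precisely what makes the cost of every canonical layout identical.
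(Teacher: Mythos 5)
Your proof is correct and takes the same approach as the paper, which treats this as an immediate observation: source-to-Steiner edges have length $0$ and hence cost $0$, and each Steiner-to-sink edge of length $1$ and load $k$ costs $k^\alpha \cdot 1$, summing to $g^\alpha \cdot n_g + (2g)^\alpha \cdot n_{2g}$. Your explicit remark that a shared group's cost is independent of which of its two adjacent sinks it connects to is a point the paper leaves implicit, but it is the same argument.
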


\subparagraph*{Correctness of the Reduction}
We show that the formula $\Phi$ has a satisfying assignment if and only if the \flamecast instance $I_\Phi$ has a solution of cost $g^\alpha \cdot n_{g} + (2g)^\alpha \cdot n_{2g}$ (i.e., the canonical cost).
The argument goes as follows.
We show that any non-canonical layout has cost strictly larger than the canonical cost.
Thus, for $I_\Phi$ to have a solution of the required cost, it must have a valid canonical layout.
It then remains to show that $\Phi$ has a satisfying assignment if and only if $I_\Phi$ has a valid canonical layout.
We start with the latter.

\begin{lemma}
  \label{lemma:np-hard-correctness}
  The \textsc{Planar Monotone 3-SAT} formula $\Phi$ is satisfiable if and only if the \flamecast instance $I_\Phi$ has a valid canonical layout.
\end{lemma}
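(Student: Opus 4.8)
The plan is to treat a canonical layout purely combinatorially. Since every source group sits at distance $1$ from exactly the two sinks of the two building blocks that share it and, in a canonical layout, connects to a single Steiner vertex at its own position and from there to one sink, a canonical layout is nothing but a choice, for each source group, of one of its two incident sinks; validity means that the total size of the groups routed to any sink is at most its capacity $2g$. I would first fix the dictionary between these choices and truth values. For a literal gadget, its first canonical configuration (the $2g$ group at the literal sink, both $g$ groups pushed to the neighbouring gadgets) encodes the literal being \emph{satisfied}, and the second (both $g$ groups at the literal sink, the $2g$ group pushed up its leg) encodes \emph{unsatisfied}; the direction of the $2g$ group at the foot of a leg is the signal that travels towards the clause. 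Throughout I would use the single structural fact that a sink of capacity $2g$ receiving a $2g$ group is saturated, so no further group can attach to it.

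For the forward direction I would turn a satisfying assignment $\beta$ directly into a canonical layout. Put every variable gadget into the configuration of \cref{fig:variable} matching $\beta(x)$, which fixes all choices inside the variable gadget so that each of its sinks receives either one $2g$ group or two $g$ groups, i.e.\ load at most $2g$. Then propagate each literal's value up its leg through the line and corner gadgets towards the clause, where an unsatisfied literal pushes its leg's clause-end $g$ group onto the clause sink and a satisfied literal lets that group return downward. Because $\beta$ satisfies every clause, each clause has at most two false literals, so at most two of its three $g$ groups attach to the clause sink, giving load at most $2g$. Verifying that every sink stays within $2g$ is then a finite per-gadget case distinction, and by \cref{lemma:canonical-cost} the resulting valid canonical layout has the canonical cost.

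For the backward direction I would read an assignment off a given valid canonical layout. First, since a clause sink has capacity $2g$ and three incident $g$ groups, at most two attach, so for every clause at least one leg routes its clause-end group \emph{downward}. Next I would show this downward choice forces the corresponding literal into its satisfied configuration: were the literal unsatisfied, its $2g$ group would be pushed up and saturate the first leg sink; since all interior leg groups have size $2g$, this saturation chains along the leg, each group in turn being forced onto the next (already-saturated-below) sink, until the final $g$ group is forced onto the clause sink, contradicting that it went down. Hence every clause contains a literal in its satisfied configuration, and I would define $\beta$ to make exactly these literals true.

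The hard part will be proving that $\beta$ is well defined, i.e.\ that within a single variable no positive and no negative literal can simultaneously be satisfied; this is the synchronisation enforced by the variable gadget. The key is that the negative literals lie to the left of the positive ones and are separated along the variable line by at least one line gadget carrying a $2g$ group. If a negative and a positive literal were both satisfied, both would release their $g$ groups outward onto the line; tracing the forced placements inward from both sides, the separating $2g$ group would have no incident sink with spare capacity $2g$ and so could not be placed, contradicting validity. This lets me set $x$ to satisfy whichever sign occurs among its satisfied literals (arbitrarily if none), so that every clause is satisfied by its guaranteed satisfied literal. I expect the main obstacle to be making the two chaining arguments fully rigorous — the forced propagation along a leg and the squeezing-out of the separating $2g$ group along the variable line — across corner gadgets and the $g$/$2g$ size pattern, where exactly which sink is saturated at each step must be tracked with care.
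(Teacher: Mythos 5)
Your overall plan---recasting a canonical layout as a choice of one incident sink per source group, propagating decisions along the legs via an ``each transport sink holds at most one group'' cascade, counting capacity at the clause sink, and proving synchronization along the variable line by colliding cascades---is exactly the strategy of the paper's proof. However, it is built on a misreading of the gadget geometry of $I_\Phi$, and this breaks the argument. In the paper's construction, the literal gadget's group of size $2g$ lies on the \emph{variable line} (for a negative literal it is shared with the gadget to its right), while the group on the clause leg has size $g$; you instead assume the $2g$ group sits on the leg and both $g$ groups sit on the line. As a consequence your truth dictionary is inverted: you declare ``$2g$ group at the literal sink'' to be the \emph{satisfied} configuration, but in $I_\Phi$ this is precisely the configuration in which the literal sink is saturated, so the leg's $g$ group is forced up the leg and, by the cascade, the clause-end group is forced onto the clause sink; the paper accordingly calls this the \emph{false} configuration. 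With your dictionary and the actual instance, a clause all of whose literals are true would receive all three $g$ groups at its sink, exceeding capacity $2g$, so your forward direction fails. Indeed, your own propagation sentence (``an unsatisfied literal pushes its leg's clause-end $g$ group onto the clause sink'') contradicts your stated dictionary once the real gadget is used; it is consistent only under your assumed geometry.

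The mechanisms you invoke are sound and do appear in the paper's proof after relabeling: every interior sink on a leg or on the variable line has two incident groups of total size exceeding $2g$, so it can host at most one of them, which yields both the leg cascade in the backward direction and the impossibility of a positive and a negative literal being simultaneously true (well-definedness of $\beta$; the paper phrases this as a single rightward cascade, starting from a true negative literal, that forces all positive literals into their false configuration). So the gap is not a missing idea but an incorrect identification of which group has size $2g$ and, with it, which configuration encodes ``true.'' Swapping the roles of the $2g$ group and the leg group and flipping your dictionary turns your proposal into essentially the paper's proof.
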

\begin{proof}
  For each literal gadget, there are two possibilities: either the source group of size $2g$ is connected to the sink, which we call the false configuration, or not, which we call the true configuration.
  We first show that this is consistent for all literal gadgets in a variable gadget in the sense that is impossible that both a positive and a negative literal are in their true configuration.

  Assume there is a literal gadget representing a negative occurrence that is in its true configuration, i.e., the group of size $2g$ is not connected to the sink.
  Then, this group must be connected to the sink to its right.
  By construction, no two groups on the $x$-axis can be connected to the same sink, thus, all groups on the $x$-axis to the right of the negative literal gadget we consider are also connected to their sink on the right.
  In particular, this is true for all positive literal gadgets, which are to the right of negative literal gadgets by construction.
  This means that for these, the group of size $2g$ is connected to the sink of the corresponding literal gadget, and thus, each positive literal gadget is in its false configuration.
  Conversely, if a positive literal gadget is in its true configuration, we argue analogously that all negative literal gadgets are in their false configuration.

  Thus, we also get two possibilities for a variable gadget: either all positive literal gadgets are false, which we call the false configuration, or not, which we call the true configuration.

  If $\Phi$ is a yes-instance, then there is a truth assignment of the variables such that each clause evaluates to true.
  For each literal gadget, we choose the corresponding configuration.
  The decision at each literal gadget is propagated to the clause gadget by transport gadgets, starting at the literal gadgets:
  if the sink we currently consider is unused, then we connect the following source group to the sink.
  Otherwise, we connect the following source group to the next sink.
  Since the sink of each transport gadget is connected to exactly one source group, the capacity constraints at each transport gadget are satisfied.
  At a clause gadget, an adjacent source group is not connected to the sink if and only if the corresponding literal evaluates to true.
  Since we have a satisfying assignment of $\Phi$, there is at least one source group at each clause gadget that is not connected to the sink, and thus, the sink capacity at each clause gadget is respected.

  Assume now that there is a valid canonical layout for the \flamecast instance $I_\Phi$.
  We set a variable to true if the corresponding variable gadget is in its true configuration and false otherwise.
  We show that each clause in $\Phi$ contains a literal that evaluates to true.
  In each clause gadget, there is a source group that is not connected to the sink due to the limited capacity of the sink.
  Thus, this source group must be connected to the other adjacent gadget.
  Since only one source group may be connected to a sink of a transport gadget, starting at the clause gadget, we can follow the chain back to the variable gadget, where each source group is connected to its previous sink.
  At the variable gadget, the corresponding source group of the literal gadget is connected to its sink which means that the corresponding literal evaluates to true.
  If the literal is positive, then the corresponding variable is set to true, and otherwise to false.
  Thus, such a truth assignment is a solution for the \textsc{Planar Monotone 3-SAT} instance $I$.
\end{proof}

It remains to show that any non-canonical layout has cost strictly larger than the canonical cost.
Note that this is a statement about layouts in general, including layouts where capacity constraints may be violated.
The idea is to start with a non-canonical layout and to iteratively modify it to make it ``more canonical'' until we reach a canonical layout.
We show that after each step, the cost of the layout does not increase, and there is a step that strictly reduces the cost.
In the following, we first list the possible modifications.
%
\subparagraph{Operations making a layout canonical}
\label{par:operations}
To make a non-canonical layout canonical, we apply the following operations in the specified order.
More specifically, each type of operation is applied exhaustively before moving on to the next type of operation.
\begin{description}
    \item[Operation 1]\label{op:low-dist-to-intermediate}Let $s$ be a source that is connected to a Steiner vertex $v$ with $\dist(s, v) \geq 1$.
    Add a new Steiner vertex $v_s$ at the location of $s$, reconnect $s$ to $v_s$, and connect $s$ to a sink $t$ that has distance $1$ to $v_s$.
    \item[Operation 2]\label{op:distance-to-sink}Let $v$ be a Steiner vertex connected to a sink $t$, where $t$ has distance more than $1$ to at least one child of $v$.
    If there is a sink $t'$ that has distance at most $1$ to every child of $v$, reconnect $v$ to $t'$.
    \item[Operation 3]\label{op:unified}Let $A$ be a source group, where a subset $A_v \subseteq A$ is connected to a Steiner vertex $v$ and a subset $A_w \subseteq A$ is connected to a Steiner vertex $w \neq v$.
    Either reconnect each source in $A_v$ to $w$ or each source in $A_w$ to $v$.
    \item[Operation 4]\label{op:no-shared-intermediate}Let $v$ be a Steiner vertex connected to two distinct source groups $A$ and $B$.
    Add a Steiner vertex $v_A$ at the location of $A$, connect $v_A$ to a sink of distance $1$ and reconnect $A$ to $v_A$.
    Repeat for $B$, adding a Steiner vertex $v_B$ at the location of $B$.
  \end{description}
\begin{figure}
  \centering
  \includegraphics{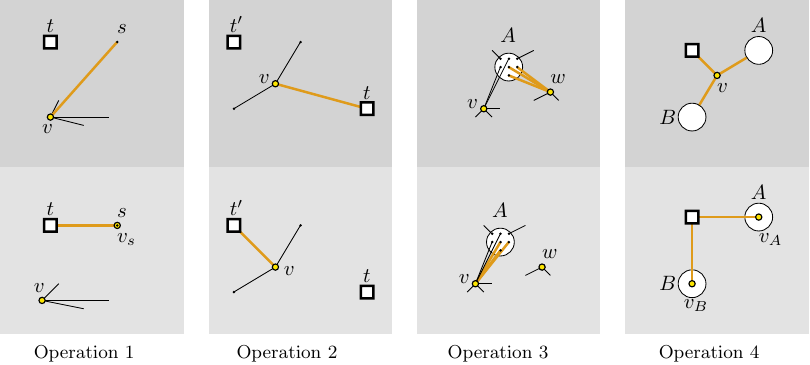}
  \caption{Each operation that is used to make a non-canonical layout canonical.
  The darker area depicts the state before the operation, the lighter area shows the state after the operation.
  The changed edges are marked in orange for each operation.}
  \label{fig:operations}
\end{figure}
Each operation can be seen in Figure~\ref{fig:operations}.
By proving the following properties after each step, we first observe that the resulting layout is indeed canonical.
\begin{observation}
    \label{obs:canonical-after-ops}
    Let $L$ be a non-canonical layout of $I_\Phi$.
    Exhaustively applying Operations~$1$ to $4$ in that order to $L$ yields a canonical layout.
    Moreover, let $L_1, L_2, L_3, L_4$ be a sequence of layouts, where $L_i$ is a resulting layout after having exhaustively applied Operations~$1$ to~$i$ to $L$.
    For each $L_i$, Properties~$1$ to~$i$ hold, which are the following.
\begin{enumerate}
  \item \label{item:low-dist-to-intermediate}The distance between a source and its parent is at most $1$.
  Moreover, children of a Steiner vertex have pairwise distance at most $\sqrt{2}$,
  and each Steiner vertex is connected to sources of at most two distinct source groups.
  \item \label{item:distance-to-sink}Every source is  connected to a sink at distance~$1$.
  \item \label{item:unified}Sources of the same group share a parent.
  \item \label{item:no-shared-intermediate}All children of a Steiner vertex belong to the same source group.
\end{enumerate}
\end{observation}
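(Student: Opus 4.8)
The plan is to prove both claims of the observation simultaneously by induction on the operation index $i \in \{1,2,3,4\}$, with $L_0 = L$. For each $i$ I would show that applying Operation~$i$ exhaustively to $L_{i-1}$ is always well-defined, terminates, yields a layout $L_i$ satisfying Property~$i$, and preserves Properties~$1,\dots,i-1$. Once Properties~$1$–$4$ all hold for $L_4$, I deduce that $L_4$ is canonical. Throughout I rely on the geometry fixed in the construction of $I_\Phi$: every source group lies at distance exactly~$1$ from its (two) adjacent sinks, sinks have capacity~$2g$ with $g \ge 7$, and the scaling by~$6$ forces source groups of non-adjacent gadgets to be at distance at least~$6$.

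\emph{Well-definedness and termination.} For each operation I first check that the vertex or sink it invokes exists. Operation~1 needs a sink at distance~$1$ from $v_s = s$, which exists since every group lies at distance~$1$ from a sink; Operation~2 needs a common sink at distance at most~$1$ from all children of $v$, which I obtain from the fact that, under Property~$1$, these children span at most two groups at pairwise distance at most~$\sqrt2$, and any two such adjacent groups share a sink at distance~$1$; Operation~4 again uses the distance-$1$ sink of each group. Termination follows from a monotone potential: Operation~1 strictly decreases the number of sources whose parent Steiner vertex is at distance at least~$1$, Operation~2 strictly decreases the number of Steiner vertices whose parent sink is too far from some child, and Operation~4 strictly decreases the number of Steiner vertices serving two groups. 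For Operation~3 I invoke Lemma~\ref{lemma:no-split-op} and Corollary~\ref{cor:no-split}: each application strictly reduces the cost for $\alpha \in (0,1)$ and does not increase it for $\alpha = 0$, so the process cannot cycle and reaches a state in which every group has a single parent.

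\emph{Establishing and preserving the properties.} After exhausting Operation~1, no source has a parent Steiner vertex at distance at least~$1$, which gives the first part of Property~$1$; combined with Lemma~\ref{lem:properties:convex steiner} (each Steiner vertex lies in the convex hull of its children) and the placement of the groups in $I_\Phi$, the remaining two parts follow, because the construction guarantees that at most two groups can be simultaneously within distance~$1$ of a common point and that any two such groups are at pairwise distance at most~$\sqrt2$. The core of the argument is that later operations do not destroy earlier properties. Operation~2 only changes the parent sink of a Steiner vertex and leaves all source–Steiner edges untouched, so it maintains Property~$1$ while creating Property~$2$. Operation~3 moves a subset $A_v$ of a group to a vertex $w$ that already served part of the same group; since $w$ was at distance at most~$1$ from $A$ and already respected Property~$1$, absorbing the rest of $A$ keeps all source–parent distances at most~$1$, keeps the served groups at most two and at pairwise distance at most~$\sqrt2$, and preserves Property~$2$, while achieving Property~$3$. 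Operation~4 replaces a two-group Steiner vertex by two single-group Steiner vertices placed at the group locations and attached to distance-$1$ sinks, trivially maintaining Properties~$1$–$3$ and establishing Property~$4$. I expect this preservation bookkeeping — in particular verifying that Operation~3 can never create a Steiner vertex violating the ``at most two groups at pairwise distance~$\sqrt2$'' part of Property~$1$ (and handling the degenerate case of sources attached directly to sinks) — to be the main obstacle, since it is the only place where the invariants genuinely interact.

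\emph{From the properties to canonicity.} Finally, assume $L_4$ satisfies Properties~$1$–$4$. Properties~$3$ and~$4$ together yield a bijection between source groups and Steiner vertices: each group has exactly one parent Steiner vertex, and each Steiner vertex serves exactly one group. For such a Steiner vertex $v$ serving a group of size $k \in \{g, 2g\}$, its neighbors are the $k$ sources of the group (each incident edge of load~$1$, hence weight~$1$) together with a single parent sink (incident edge of load~$k$, hence weight~$k^\alpha$). Since $k > k^\alpha$ for $k \ge 7$ and $\alpha < 1$, more than half of the total weight sits at the group's position, so Lemma~\ref{lem:properties:majority-weber-point} places $v$ exactly there. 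Combined with Property~$2$, which forces the parent sink to lie at distance~$1$, this shows that every source group is connected to a distinct Steiner vertex at its own location, itself connected to a sink at distance~$1$, which is precisely the definition of a canonical layout.
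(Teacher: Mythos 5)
Your overall structure --- establish Property~$i$ after exhausting Operation~$i$, check that later operations preserve the earlier properties, then read off canonicity --- is the same as the paper's proof, and that part of your bookkeeping matches the paper's: the key facts you use (groups within distance $2$ of each other are at distance exactly $\sqrt{2}$, no three groups are pairwise close, and two groups at distance $\sqrt{2}$ belong to a common gadget having a sink at distance $1$ from both) are exactly the construction facts the paper invokes.

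The genuine problem is your final step. Lemma~\ref{lem:properties:majority-weber-point} is a statement about \emph{optimal} embeddings of a topology, but $L_4$ is not optimally embedded, nor claimed to be: none of Operations~1--4 ever repositions a pre-existing Steiner vertex. Concretely, take a layout that agrees with a canonical one except that one Steiner vertex serving a single full group sits at distance $\frac{1}{2}$ from its group; no operation applies to it, so exhaustive application returns it unchanged, and no appeal to the Weber-point lemma can move it. Hence Properties~1--4 do not imply the strict definition of canonical stated earlier in the paper (Steiner vertex \emph{placed at} the group's position), and your claim that the lemma ``places $v$ exactly there'' is invalid for $L_4$. The paper sidesteps this by effectively working with the relaxed notion of canonical it makes explicit just before Lemma~\ref{lemma:cost-gap} (all sources of a group share a parent and each group is assigned a sink at distance~$1$), under which Properties~2--4 immediately give canonicity; alternatively, one can append a repositioning step and argue it cannot increase the cost, precisely because the group's position satisfies the majority condition and is therefore an optimal location for $v$. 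Your proof needs one of these two fixes. A second, smaller slip: for $\alpha = 0$ you argue Operation~3 terminates because the cost ``does not increase,'' but non-increase alone does not exclude cycling; use instead the potential that each application strictly decreases the number of distinct parents over all groups.
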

\begin{proof}
    After exhaustively applying Operation~1, it is clear that the distance between a source and its parent is less than $1$.
    As a consequence, no two sources with distance at least $2$ have the same parent since one of the groups would have distance at least $1$ to its parent by the triangle inequality.
    It directly follows from our construction of $I_\Phi$ that if source groups have distance less than $2$ from each other, then the distance between them is $\sqrt{2}$.
    Furthermore, there are no three source groups with pairwise distance less than~$2$.
    Thus, Property~\ref{item:low-dist-to-intermediate} holds for $L_1$.

    We now consider the layout after exhaustively applying Operation~2.
    For sources that do not share their parent with sources of another group, it is obvious that Property~\ref{item:distance-to-sink} holds.
    Any other Steiner vertex $v$ is connected to sources of exactly two source groups $A$ and $B$ that have distance $\sqrt{2}$ by Property~\ref{item:low-dist-to-intermediate}.
    By construction, $A$ and $B$ are part of the same literal, clause or corner gadget.
    Thus, the corresponding sink that has distance $1$ to both $A$ and $B$, and Property~\ref{item:distance-to-sink} holds for all Steiner vertices.

    After exhaustively applying Operation~3, sources of the same group trivially share a parent, i.e., Property~\ref{item:unified} holds.
    Operation~4 splits any pair of source groups with distance $\sqrt{2}$ that have the same parent.
    In the resulting layout, every group is connected to a sink at distance $1$ via a distinct Steiner vertex, which is captured by Property~\ref{item:no-shared-intermediate}.
    In other words, the resulting layout is canonical.
\end{proof}

It remains to show that applying all possible operations in the specified order results in a layout with lower cost.
We first prove the following technical lemma, which can later be used to compare the cost of certain non-canonical structures to the cost after they are made canonical.
\begin{lemma}
  \label{lemma:cost-difference}
  Let $a, b \in \mathbb{N}$ with $b \geq a$.
  Let
  \begin{equation*}
    f(\alpha, (d_t, d_a, d_b)) = (a + b)^\alpha d_t + ad_a + bd_b - a^\alpha - b ^\alpha .
  \end{equation*}
  Then it is
    \begin{align*}
    &f(\alpha, (d_t, d_a, d_b)) \\
    &\geq \min\left(f(0, (d_t, d_a, d_b)), f\left(1, \left(\frac{\log(a)a + \log(b)b}{\log(a + b)(a+b)}, d_a, d_b\right)\right), f(1, (d_t, d_a, d_b))\right)
  \end{align*}
  for each $d_t, d_a, d_b \in \mathbb{R}_{\geq 0}$ and $\alpha \in [0, 1)$.
\end{lemma}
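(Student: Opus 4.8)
The plan is to exploit the fact that $f$ is affine in $d_a$ and $d_b$ with coefficients $a$ and $b$ that do not depend on $\alpha$, so that the quantity $a d_a + b d_b$ is common to all four expressions in the inequality. Subtracting it, it suffices to prove the one-variable statement $h(\alpha) \ge \min\bigl(h(0),\, \psi(1),\, h(1)\bigr)$ for every $\alpha \in [0,1)$, where $h(\alpha) = d_t (a+b)^\alpha - a^\alpha - b^\alpha$ is the $\alpha$-dependent part of $f$, and $\psi(1) = \tfrac{a\log a + b\log b}{\log(a+b)} - (a+b)$ is the corresponding part of $f\bigl(1,(d_t^\ast, d_a, d_b)\bigr)$ with $d_t^\ast = \tfrac{a\log a + b\log b}{(a+b)\log(a+b)}$. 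The whole problem thus reduces to locating the minimum of the single-variable function $h$ on $[0,1]$ and comparing it to these three reference values.

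First I would compute $h'(\alpha) = (a+b)^\alpha \log(a+b)\,\bigl(d_t - \phi(\alpha)\bigr)$, where $\phi(\alpha) = \tfrac{a^\alpha \log a + b^\alpha \log b}{(a+b)^\alpha \log(a+b)}$. Since the prefactor $(a+b)^\alpha \log(a+b)$ is positive, the sign of $h'$ equals the sign of $d_t - \phi(\alpha)$, so the monotonicity of $h$ is governed entirely by the threshold curve $\phi$. The key structural fact is that $\phi$ is non-increasing on $[0,1]$: rewriting $\phi(\alpha) = \tfrac{1}{\log(a+b)}\bigl[(\tfrac{a}{a+b})^\alpha \log a + (\tfrac{b}{a+b})^\alpha \log b\bigr]$ exhibits it as a nonnegative combination of the non-increasing functions $(\tfrac{a}{a+b})^\alpha$ and $(\tfrac{b}{a+b})^\alpha$, whose bases lie in $(0,1]$. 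Note also that $\phi(1) = d_t^\ast$, so $\alpha = 1$ is a critical point of $h$ exactly when $d_t = d_t^\ast$; this is precisely what singles out the value $d_t^\ast$ in the statement.

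Then I would split into three cases according to where $d_t$ falls relative to $\phi(0)$ and $\phi(1) = d_t^\ast$. If $d_t \ge \phi(0)$, then $d_t \ge \phi(\alpha)$ throughout, $h$ is non-decreasing, and $h(\alpha) \ge h(0)$. If $d_t \le \phi(1)$, then $h$ is non-increasing and $h(\alpha) \ge h(1)$. In the remaining case $\phi(1) < d_t < \phi(0)$, the decreasing curve $\phi$ meets $d_t$ at a unique $\alpha_0 \in (0,1)$; since $h$ decreases before $\alpha_0$ and increases after it, $\alpha_0$ is the global minimizer of $h$ on $[0,1]$. Here the crucial trick is to evaluate $h(\alpha_0)$ using the critical-point identity $d_t (a+b)^{\alpha_0} = \tfrac{a^{\alpha_0}\log a + b^{\alpha_0}\log b}{\log(a+b)}$ to eliminate $d_t$, which collapses $h(\alpha_0)$ into $\psi(\alpha_0)$, where $\psi(\beta) = \tfrac{1}{\log(a+b)}\bigl[a^\beta \log\tfrac{a}{a+b} + b^\beta \log\tfrac{b}{a+b}\bigr]$.

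Finally I would observe that $\psi$ is non-increasing in $\beta$: its coefficients $\tfrac{\log(a/(a+b))}{\log(a+b)}$ and $\tfrac{\log(b/(a+b))}{\log(a+b)}$ are negative, while $a^\beta$ and $b^\beta$ are non-decreasing. Hence $h(\alpha_0) = \psi(\alpha_0) \ge \psi(1)$ since $\alpha_0 \le 1$, and $\psi(1)$ is exactly the $h$-part of the middle term $f\bigl(1,(d_t^\ast,d_a,d_b)\bigr)$. Adding back the common quantity $a d_a + b d_b$ in every case yields the claimed bound. I expect the main obstacle to be the third case: recognizing that the minimum at the interior critical point can be rewritten, after substituting the critical-point equation, as the monotone function $\psi$ evaluated at $\alpha_0$, which is precisely what lets us compare it to the $\alpha = 1$ endpoint value $\psi(1)$. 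The degenerate case $a = b = 1$, where $\phi$ is constant and $h$ is monotone, can be disposed of separately at the start.
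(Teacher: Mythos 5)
Your proposal is correct and follows essentially the same route as the paper's proof: both treat $f$ as a one-variable function of $\alpha$, observe that its minimum on $[0,1]$ is attained either at an endpoint (giving the first and third terms) or at an interior critical point, and at such a critical point use the identity $d_t(a+b)^{\alpha_0}\log(a+b) = a^{\alpha_0}\log a + b^{\alpha_0}\log b$ to eliminate $d_t$ and then exploit that the resulting expression (your $\psi$) is non-increasing in the exponent to compare against its value at $\alpha = 1$, which is exactly the middle term. Your write-up is somewhat more careful than the paper's (explicit monotonicity of the threshold curve $\phi$ and the three-way case split on $d_t$), but the decomposition and the key substitution are the same.
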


\begin{proof}
  Let $x = (d_t, d_a, d_b) \in \mathbb{R}^3_{\geq 0}$ be arbitrary but fixed.
  Then, $f(\alpha, x)$ is a function in $\alpha$.
  We consider values for $\alpha$ where $f(\alpha, x)$ may be minimal.
  There are two types of minima: minima at boundaries or local minima.
  The function is always at least the value of the smallest minimum on the interval $\alpha \in [0, 1]$.
  Moreover, we consider local minima of $f(\alpha, x)$ by determining the derivative $f'$ with respect to $\alpha$, which is
  \begin{equation*}
    f'(\alpha, x) = \log(a + b)(a+b)^\alpha d_t - \log(a)a^\alpha - \log(b)b^\alpha .
  \end{equation*}
  By simple rearranging, we see that this is equal to $0$ if and only if
  \begin{equation*}
        d_t = \frac{\log(a)a^\alpha + \log(b)b^\alpha}{\log(a + b)(a + b)^\alpha}.
  \end{equation*}
  We capture the right side of the equation by the function $g(\alpha)$.
  Then, at any local minimum $\alpha^*$, it is $g(\alpha^*) = d_t$.
  For such an $\alpha^*$ with $g(\alpha^*) = d_t$, the value of $f(\alpha^*, d_t, d_a, d_b)$ is
  \begin{align*}
    f(\alpha^*, d_t, d_a, d_b) &= (a + b)^{\alpha^*} g(\alpha^*) + a d_a + b d_b - a^{\alpha^*} - b^{\alpha^*} \\
    &= (a + b)^{\alpha^*} \frac{\log(a)a^{\alpha^*} + \log(b)b^{\alpha^*}}{\log(a + b)(a + b)^{\alpha^*}} + a d_a + b d_b - a^{\alpha^*} - b^{\alpha^*} \\
    &= \frac{\log(a)}{\log(a + b)} a^{\alpha^*} + \frac{\log(b)}{\log(a + b)} b^{\alpha^*} + a d_a + b d_b - a^{\alpha^*} - b^{\alpha^*} \\
    &= \left(\frac{\log(a)}{\log(a + b)} - 1\right) a^{\alpha^*} + \left(\frac{\log(b)}{\log(a + b)} - 1\right) b^{\alpha^*} + a d_a + b d_b \\
    &> \left(\frac{\log(a)}{\log(a + b)} - 1\right) a + \left(\frac{\log(b)}{\log(a + b)} - 1\right) b + a d_a + b d_b \\
    &= \frac{\log(a)a + \log(b)b}{\log(a + b)} + a d_a + b d_b - a - b \\
    &= (a + b)\frac{\log(a)a + \log(b)b}{\log(a + b)(a + b)} + a d_a + b d_b - a - b \\
    &= f\left(1, \left(\frac{\log(a)a + \log(b)b}{\log(a + b)(a+b)}, d_a, d_b\right)\right) .
  \end{align*}
\end{proof}
The following corollary makes the previous lemma usable for comparing costs of different layouts by choosing suitable values for the variables in the previous lemma.
The setting is shown in Figure~\ref{fig:cost-difference}.
    \begin{figure}
    \centering
    \includegraphics[page=7]{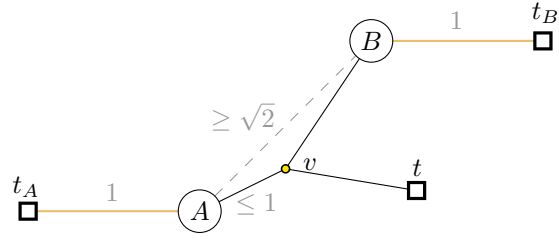}
    \caption{The setting of \cref{cor:cost-difference}. The orange edges mark the alternative layout $L^*$.}
    \label{fig:cost-difference}
  \end{figure}
\begin{corollary}
  \label{cor:cost-difference}
  Let $I$ be an instance of \flamecast with one intermediate layer and $\alpha \in [0, 1)$.
  Let $A$ and $B$ be two source groups with $\dist(A, B) \geq \sqrt{2}$, $|B| \geq |A|$, and $|B| \geq 7$.
  Let $t_A$ and $t_B$ be sinks with $\dist(A, t_A) = 1$ and $\dist(B, t_B) = 1$.
  Furthermore, let $L$ be a layout where $A$ and $B$ are connected via the same Steiner vertex $v$ to a sink $t$ with $\dist(A, v) \leq 1$.
  Let $L^*$ be a modified layout of $L$, where $A$ is connected to $t_A$ and $B$ is connected to $t_B$ using a distinct Steiner vertex for each group.
  Then, it is $\cost(L) - \cost(L^*) \geq \min\left(1, \sqrt{2}|A| + \dist(B, t) - 2, |A|(\dist(A, t) - 1) + |B|(\dist(B, t) - 1)\right)$.
\end{corollary}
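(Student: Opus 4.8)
The plan is to translate the geometry of $L$ into the three scalar arguments of Lemma~\ref{lemma:cost-difference} and then read off the bound. Writing $a = |A|$, $b = |B|$, $d_a = \dist(A, v)$, $d_b = \dist(B, v)$ and $d_t = \dist(v, t)$, I would first note that the only edges in which $L$ and $L^*$ differ are the $a+b$ unit-load edges from the sources of $A$ and $B$ to $v$ together with the edge $vt$ of load $a+b$ in $L$, as opposed to the two canonical group connections of cost $a^\alpha$ and $b^\alpha$ in $L^*$. Hence $\cost(L) - \cost(L^*) = (a+b)^\alpha d_t + a\,d_a + b\,d_b - a^\alpha - b^\alpha = f(\alpha, (d_t, d_a, d_b))$, which is exactly the quantity bounded in Lemma~\ref{lemma:cost-difference}. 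Applying that lemma reduces the claim to showing that each of the three terms $f(0, (d_t, d_a, d_b))$, $f(1, (d_t, d_a, d_b))$ and the local-minimum term $f\left(1, \left(\tfrac{\log(a)a+\log(b)b}{\log(a+b)(a+b)}, d_a, d_b\right)\right)$ is at least the minimum stated in the corollary.

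The second ingredient is three triangle inequalities: $d_a + d_b \ge \dist(A, B) \ge \sqrt 2$, $d_t + d_a \ge \dist(A, t)$ and $d_t + d_b \ge \dist(B, t)$, together with the hypothesis $d_a \le 1$. The term $f(1, (d_t, d_a, d_b)) = a(d_t + d_a - 1) + b(d_t + d_b - 1)$ then drops immediately to $|A|(\dist(A,t) - 1) + |B|(\dist(B,t)-1)$, matching the third term of the minimum. For $f(0, (d_t, d_a, d_b)) = d_t + a\,d_a + b\,d_b - 2$ I would split $b\,d_b$ into $a\,d_b + (b-a)d_b$: if $b > a$ then $a(d_a + d_b) \ge \sqrt 2\,a$ and $(b-a)d_b + d_t \ge d_b + d_t \ge \dist(B,t)$, giving the second term $\sqrt 2\,|A| + \dist(B,t) - 2$; if $a = b$ then $|B| \ge 7$ forces $a = b \ge 7$, so $f(0, (d_t, d_a, d_b)) \ge \sqrt 2\,a - 2 \ge 1$, matching the first term. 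In either case $f(0, (d_t, d_a, d_b))$ is at least the corollary's minimum.

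The main obstacle is the local-minimum term $f\left(1, \left(\tfrac{\log(a)a+\log(b)b}{\log(a+b)(a+b)}, d_a, d_b\right)\right) = \tfrac{\log(a)a + \log(b)b}{\log(a+b)} + a\,d_a + b\,d_b - a - b$, since here $d_t$ has been replaced and no triangle inequality involving $d_t$ is available. I would instead lower-bound $a\,d_a + b\,d_b$ subject to $d_a \le 1$ and $d_a + d_b \ge \sqrt 2$; because $b \ge a$ the minimizer pushes the cheap variable up to $d_a = 1$, so $a\,d_a + b\,d_b \ge a + b(\sqrt 2 - 1)$. Substituting gives a lower bound of $\tfrac{\log(a)a + \log(b)b}{\log(a+b)} - (2 - \sqrt 2)\,b$, which after discarding the nonnegative $\log(a)a$ term and using $\log(a+b) \le \log(2b)$ is at least $b\big(\tfrac{\log b}{\log 2 + \log b} - (2 - \sqrt 2)\big)$. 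The remaining step is a one-variable estimate: this expression is increasing in $b$ and already exceeds $1$ at $b = 7$, which is precisely where the hypothesis $|B| \ge 7$ enters. Collecting the three bounds and taking the minimum yields the corollary.
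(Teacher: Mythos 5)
Your proposal is correct and follows essentially the same route as the paper's proof: the same reduction of $\cost(L)-\cost(L^*)$ to $f(\alpha,(d_t,d_a,d_b))$ from Lemma~\ref{lemma:cost-difference}, the same three-case analysis via triangle inequalities (with the $a=b$ subcase of the $f(0)$ bound absorbed into the constant $1$), and the same final numeric estimate $7\left(\sqrt{2} + \frac{\log 7}{\log 14} - 2\right) \geq 1$ at $b = 7$. The only cosmetic difference is that in the third case you minimize $a\,d_a + b\,d_b$ as a small linear program under $d_a \leq 1$, $d_a + d_b \geq \sqrt{2}$, whereas the paper substitutes $d_b \geq \dist(A,B) - 1$ directly; both yield the identical bound $a + b(\sqrt{2}-1)$.
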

\begin{proof}
  Let $a = |A|$ and $b = |B|$, and let $d_t = \dist(v, t)$, $d_a = \dist(A, v)$ and $d_b = \dist(B, v)$ for shorter notation.
  Then, the cost for embedding the edges incident to $v$ is $c(\alpha) = (a + b)^\alpha d_t + ad_a + bd_b$.
  The cost for connecting $A$ to $t_A$ and $B$ to $t_B$ is $c^*(\alpha) = a^\alpha + b^\alpha$.
  The difference $c(\alpha) - c^*(\alpha)$ is equal to $f(\alpha, (d_t, d_a, d_b))$ in Lemma~\ref{lemma:cost-difference}.

  By Lemma~\ref{lemma:cost-difference}, one of three specified values is a lower bound for the cost difference.
  We differ between the three possible cases.
  In the first case, it is
  \begin{align*}
    f(\alpha, (d_t, d_a, d_b)) &\geq f(0, (d_t, d_a, d_b)) \\
    &= d_t + ad_a + bd_b - 2 \\
    &\geq (d_t + d_b) + a (d_a + d_b) + (b - a - 1) d_b - 2 .
  \end{align*}
  We use the triangle inequality to obtain lower bounds.
  If $a = b$, then this is at least $b \dist(A, B) - 2$.
  If $a < b$, then this is at least $a\dist(A, B) + \dist(B, t) - 2 \geq \sqrt{2} a + \dist(B, t) - 2$.

  In the second case, it is
  \begin{align*}
    f(\alpha, (d_t, d_a, d_b)) &\geq f(1, (d_t, d_a, d_b)) \\
    &= (a + b) d_t + a d_a + b d_b - a - b \\
    &\geq a(d_a + d_t - 1) + b(d_b + d_t - 1) \\
    &\geq a(\dist(A, t) - 1) + b(\dist(B, t) - 1) ,
  \end{align*}
  where the last line again follows from triangle inequality.

  In the last case, it is
  \begin{align*}
    f(\alpha, (d_t, d_a, d_b)) &\geq f\left(1, \left(\frac{\log(a)a + \log(b)b}{\log(a + b)(a+b)}, d_a, d_b\right)\right) \\
    &\geq \frac{\log(a)a + \log(b)b}{\log(a + b)} + a d_a + b d_b - a - b \\
    &\geq \frac{\log(b)b}{\log(a + b)} + a (d_a + d_b) + (b - a) d_b - a - b \\
    \intertext{By the triangle inequality, it is $d_a + d_b \geq \dist(A, B)$. Since $d_a \leq 1$, it is $d_b \geq \dist(A, B) - 1$. Thus,}
    &\geq \frac{\log(b)b}{\log(a + b)} + a \dist(A, B) + (b - a) (\dist(A, B) - 1) - a - b \\
    &= \frac{\log(b)b}{\log(a + b)} + a (\dist(A, B) - \dist(A, B) + 1 - 1) + b (\dist(A, B) - 2) \\
    &= b \left(\dist(A, B) + \frac{\log(b)}{\log(a + b)} - 2\right) \\
    &\geq b \left(\dist(A, B) + \frac{\log(b)}{\log(2 b)} - 2\right) \\
    &\geq 7 \left(\sqrt{2} + \frac{\log(7)}{\log(14)} - 2\right) \\
    &\geq 1 .
  \end{align*}
  Note that $\frac{\log(b)}{\log(2b)}$ is increasing for larger $b$.
  For $b \geq 3$, the last case dominates the lower bound $b\dist(A, B) - 2$ from the first case.
\end{proof}
We are now ready to prove that every cost-optimal layout is canonical.
In fact, we prove an even stronger statement, namely that the gap between the canonical cost and the cost of a valid layout for $I_\Phi$ if $\Phi$ is a no-instance is at least $1 + (g-1)^\alpha - g^\alpha$.
From Lemma~\ref{lemma:np-hard-correctness}, we know that any valid layout of a no-instance is non-canonical.
Roughly, we consider the sequence of operations that makes a non-canonical layout canonical as in Observation~\ref{obs:canonical-after-ops}.
In particular, we show that no operation increases the cost and analyze how much the cost decreases in the very last step.

To simplify the analysis, we can state even stronger properties for how a valid layout of a no-instance may look.
In particular, it is not only non-canonical itself, but it is also impossible to make it a valid canonical layout by applying the specified operations.
This property rules out layouts where only Operation~4 may be applied since applying Operation~4 does not affect the validity of the layout.
Thus, instead of considering the cost of all possible valid non-canonical layouts, we may restrict ourselves to valid non-canonical layouts where Operations~1, 2, or 3 can be applied.

We capture this by broadening the notion of canonical layouts.
In the following, we say that a layout is canonical if all sources of a group share the same parent and each group is assigned a sink at distance $1$.
Note that this property is ensured after applying Operation~3 exhaustively.
A canonical layout is \emph{normalized canonical} if additionally, no two source groups share the same parent; otherwise it is \emph{non-normalized canonical}.

The following lemma provides a lower bound for the gap between the cost of a non-canonical layout and a normalized canonical layout.
The idea is to consider a sequence of operations to make a non-canonical layout canonical (either normalized or non-normalized).
We analyze how large the gap is between the cost of the layout before the last operation of the sequence and the canonical cost.
\begin{lemma}
  \label{lemma:cost-gap}
  Let $\cost^*$ be the canonical cost of $I_\Phi$.
  Let $L$ be a non-canonical layout of $I_\Phi$.
  Then, $\cost(L) \geq \cost^* + 1 + (g - 1)^\alpha - g^\alpha$, where $g \geq 7$ is the minimum number of sources in a group.
\end{lemma}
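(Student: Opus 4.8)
The plan is to start from an arbitrary non-canonical layout $L$ and drive it to a canonical layout using the operations of \cref{obs:canonical-after-ops}, tracking the cost along the way. First I would apply Operations~1--3 exhaustively in the prescribed order to obtain a canonical layout $L_c$, and verify that none of these operations increases the cost: for Operation~1 and Operation~2 this follows from the triangle inequality together with the fact that removing a leaf only decreases the loads (and hence the edge costs $\ell(e)^\alpha\len(e)$) along the path to the root, while for Operation~3 it is exactly \cref{lemma:no-split-op}. Moreover, un-sharing a Steiner vertex via Operation~4 never increases the cost, since \cref{cor:cost-difference} with $t_A=t_B=t$ and $\dist(A,t)=\dist(B,t)=1$ gives a gap of $\min(1,\sqrt2\abs{A}-1,0)=0$; applying Operation~4 exhaustively turns $L_c$ into a normalized canonical layout of cost $\cost^*$, so $\cost(L_c)\ge\cost^*$. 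As every operation is cost-non-increasing, writing $M$ for the layout immediately before the \emph{last} operation of the sequence, we have $\cost(L)\ge\cost(M)$, and it suffices to show $\cost(M)\ge\cost^*+1+(g-1)^\alpha-g^\alpha$.

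The heart of the argument is a case distinction on the type of this last operation, exploiting that operations are applied in order so that $M$ already satisfies every property established by the earlier operation types. The tight case is when the last operation is Operation~3: here $M$ has a single remaining group $A$ of size $g$ split as $A_v$ at $v$ and $A_w$ at $w\neq v$, with every source already at distance~$1$ from its sink by the properties guaranteed after Operations~1 and~2. The two pieces then contribute at least $\abs{A_v}^\alpha+\abs{A_w}^\alpha$ to $\cost(M)$ through their two length-$1$ edges, whereas a canonical layout charges only $g^\alpha$ for the whole group. Since $z\mapsto z^\alpha$ is concave for $\alpha\in[0,1)$, the sum $x^\alpha+(g-x)^\alpha$ over integer splits $1\le x\le g-1$ is minimized at the extreme split $x=1$, giving
\begin{equation*}
  \abs{A_v}^\alpha+\abs{A_w}^\alpha-g^\alpha \;\ge\; 1+(g-1)^\alpha-g^\alpha .
\end{equation*}
Because $1+(k-1)^\alpha-k^\alpha$ is increasing in the group size $k$ for $\alpha\in[0,1)$, the smallest group size $g$ yields the smallest such gap, so this is exactly the claimed bound and it is tight.

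It then remains to check that the two other possibilities for the last operation produce a gap that is at least as large, which is easy since $1+(g-1)^\alpha-g^\alpha\le 1$. If the last operation is Operation~1, the separated source $s$ (at distance at least~$1$ from its parent, load~$1$) is being peeled off its group, so $M$ realizes a $1$-versus-$(g-1)$ split of the same flavor and the same concavity bound applies. If it is Operation~2, some group under a Steiner vertex $v$ is routed to a sink $t$ at distance more than~$1$; by the grid geometry (the drawing was scaled by $6$, so non-adjacent segments are at distance at least~$6$ and the only sinks within distance~$1$ of a group are its two incident gadget sinks) this forces $\dist(v,t)$ to be large, yielding an overhead far exceeding~$1$. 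For the sub-cases in which a single Steiner vertex simultaneously carries two groups and routes them to a common or distant sink, I would invoke \cref{cor:cost-difference} directly, whose $\min(1,\dots)$ lower bound equals $1$ whenever $\abs{A}\ge g\ge 7$.

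The main obstacle I expect is the bookkeeping of this case analysis. Making each operation provably cost-non-increasing requires careful tracking of how loads---and hence edge costs of the form $\ell(e)^\alpha$---change along the affected root paths, and one must exclude boundary configurations that superficially trigger an operation (for instance a Steiner vertex sitting exactly on its incident sink with source--vertex distance precisely~$1$) but in fact have canonical cost and so are not genuinely non-canonical. Pinning the extremal split to $x=1$ via concavity and confirming that the Operation~1 and Operation~2 cases never undercut the Operation~3 bound is where the argument must be assembled most carefully; the quantitative engine for every shared-vertex sub-case is \cref{cor:cost-difference}, whereas the single clean value $1+(g-1)^\alpha-g^\alpha$ is produced purely by the concavity of $z\mapsto z^\alpha$.
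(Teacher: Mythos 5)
Your proposal follows the paper's proof in all essentials: drive $L$ to a (normalized) canonical layout with Operations~1--4, check that no operation increases the cost (Operation~3 via Lemma~\ref{lemma:no-split-op}, Operation~4 via Corollary~\ref{cor:cost-difference}), and then extract the gap from the layout just before the last operation, the tight case being Operation~3 with the value $1+(g-1)^\alpha-g^\alpha$ coming from concavity of $z\mapsto z^\alpha$. Two of your steps need repair, however. First, Operation~1 can never be the last operation: it moves the far source $s$ onto a brand-new Steiner vertex with $s$ as its only child, so $s$'s group (of size at least $g\geq 7$) is still split over two parents afterwards and the layout is not canonical. The paper disposes of this case with precisely that observation; your ``$1$-versus-$(g-1)$ split'' reading of it is backwards (Operation~1 creates a split rather than resolving one), though nothing is lost because the case is vacuous.

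Second, and more substantively, your tight case silently assumes that the two fragments $A_v$ and $A_w$ sit alone under $v$ and $w$. The broadened notion of canonical used here allows a Steiner vertex to carry two whole groups, so $v$ may additionally carry a full group $B$; then the edge from $v$ to its sink costs $(|A_v|+|B|)^\alpha\cdot\dist(v,t)$, and since $(|A_v|+|B|)^\alpha<|A_v|^\alpha+|B|^\alpha$, the per-piece accounting ``$\cost(M)$ pays at least $|A_v|^\alpha+|A_w|^\alpha$ where canonical pays $|A|^\alpha$'' no longer closes: the canonical layout pays $|A|^\alpha+|B|^\alpha$ for these sources, and subadditivity works against you. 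Your fallback, Corollary~\ref{cor:cost-difference}, is the right tool, but your quantitative claim about it is wrong: its bound is not $1$ ``whenever $|A|\geq g\geq 7$''. It equals $1$ only when a distance condition pushes the last two entries of the $\min$ above $1$ (as in the Operation~2 case, where $\dist(A,t)\geq\sqrt{5}$); for a shared Steiner vertex whose sink is at distance $1$ from both child sets, and in particular in the Operation~3 sub-case with $|A_v|=1$, the bound is $\min\left(1,\sqrt{2}-1,0\right)=0$. That weaker bound still suffices, and this is exactly how the paper proceeds: use the $\geq 0$ bound to split $B$ off $v$ without increasing the cost, and only then run your concavity argument on the pure split of $A$, which yields the claimed gap.
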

\begin{proof}
  Let $L$ be a non-canonical layout of $I_\Phi$.
  Consider a sequence of operations as described in \ref{par:operations} that are applied to make $L$ normalized canonical.
  We first show that no operation in the sequence increases the cost.

  \begin{claim}
    No operation in the sequence increases the cost.
  \end{claim}
  \begin{claimproof}
    If there is a group $s$ with parent $v$ and $\dist(s, v) \geq 1$, then the cost for the edge from $s$ to $v$ is at least $1$.
    However, reconnecting $s$ to a new Steiner vertex $v_s$ that is connected to a sink $t$ with $\dist(s, t)$ costs $1$.
    Thus, the cost does not increase by applying Operation~1.

    For Operation~2, consider a Steiner vertex $v$ connected to sink $t$ such that there is a child $s$ of $v$ with $\dist(s, t) > 1$.
    If $v$ is only connected to sources of one group, then connecting $v$ to a sink $t^*$ with $\dist(s, t^*) = 1$ reduces the cost.
    Otherwise, $v$ is connected to sources of two source groups $A$ and $B$.
    We know that after Operation~1, $A$ and $B$ have distance $\sqrt{2}$ and there is a sink $t^*$ with $\dist(A, t^*) = 1$ and $\dist(B, t^*) = 1$.
    Consider the square that is uniquely defined by the corners $A$ and $B$; see Figure~\ref{fig:far-sink}.
    By Property~\ref{item:low-dist-to-intermediate}, it is $\dist(A,v) < 1$ and $\dist(B, v) < 1$ and thus $v$ must be contained within the square.
    But then, it is $\dist(v, t) \geq \dist(v, t^*)$ by construction, and applying Operation~2 (connecting $v$ to $t^*$) does not increase the cost.
%
      \begin{figure}
    \centering
    \includegraphics[page=3]{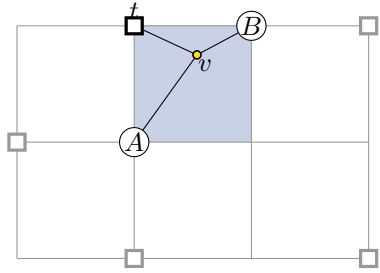}
    \caption{Two groups $A$ and $B$ with distance $\sqrt{2}$ that are a part of a literal, a corner or a clause gadget.
      The orange area indicates possible positions of their parent $v$.
      The sinks in gray mark the possible positions of other sinks that may be relevant.
      These are at least as far away from $v$ as $t$.}
    \label{fig:far-sink}
  \end{figure}

  For Operation~3, we have already shown in Lemma~\ref{lemma:no-split-op} that it does not increase the cost.

  Finally, for Operation~4, consider two source groups $A$ and $B$ with $|B| \geq |A| \geq g \geq 7$ that share a parent $v$.
  By Property~\ref{item:distance-to-sink}, $A$ and $B$ have distance $\sqrt{2}$ and are connected to a sink $t$ that has distance $1$ to both $A$ and $B$.
  By Corollary~\ref{cor:cost-difference}, the cost decrease is at least
  $\min(1, |A| \dist(A, B) + \dist(B, t) - 2, |A|(\dist(A, t) - 1) + |B|\dist(B,t - 1)) \geq \min(1, 7 \sqrt{2} + 1 - 2, 0) \geq 0$.
  Thus, the cost does not increase by Operation~4.
  \end{claimproof}

  It remains to show that the cost difference between $L$ and a normalized canonical layout is at least the proposed lower bound.
  For this, we consider the last layout before it is canonical (normalized or non-normalized).
  Denote this layout by $L'$.
  We compare the cost of $L'$ to the cost of a normalized canonical layout, and we distinguish the following cases depending on the operation that makes $L'$ canonical.
  Figure~\ref{fig:last-operations} shows for each possible last operation an example for $L'$ and $L'$ after the last operation.

  \begin{figure}
  \centering
  \includegraphics[page=2]{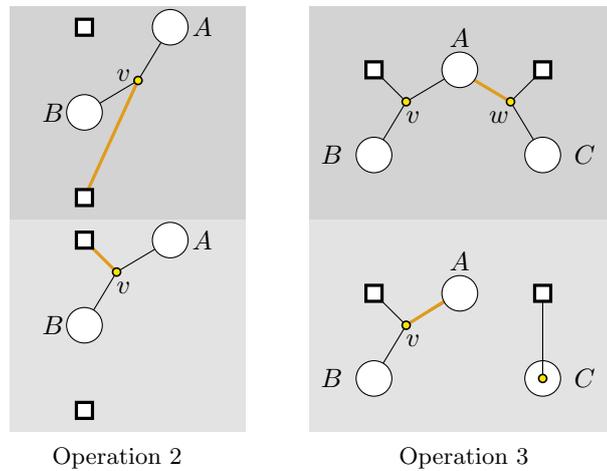}
  \caption{Examples of layouts before and after the last operation. Orange edges mark the change.}
  \label{fig:last-operations}
\end{figure}

  First, we note that Operation~1 is never the last operation as the resulting layout contains a Steiner vertex with a single child, and thus is not canonical.
  Operation~2 ensures that every source is connected to a sink with distance $1$ by changing the parent $t$ of a Steiner vertex $v$ to $t^*$.
  If we assume that such an operation is the last, then sources of the same group share a parent in $L'$.
  If the layout is normalized canonical afterwards, then in $L'$, $v$ is only connected to the sources of a single source group $A$, which has distance at least $\sqrt{5}$ by construction.
  The cost for the incident edges of $v$ in $L'$ is at least $|A|^\alpha \cdot \sqrt{5}$,
  while the cost for $v$ if connecting it to $t^*$ is $|A|^\alpha$.
  Thus, the cost decreases by at least $|A|^\alpha (\sqrt{5} - 1) \geq g^\alpha \geq 1$.

  If the layout is non-normalized canonical afterwards, then $v$ is connected to all sources of two source groups $A$ and $B$ with $\dist(A, B) = \sqrt{2}$ by Property~\ref{item:low-dist-to-intermediate}.
  We assume without loss of generality that $\dist(A, t) \geq \sqrt{5}$ and $\dist(B,t) \geq 1$.
  We compare the cost to the normalized canonical cost.
  By Corollary~\ref{cor:cost-difference}, the cost of a normalized canonical layout is less than $L'$ by at least
  \begin{align*}
    &\min(1, |A| \dist(A, B) + \dist(B, t) - 2,|A|(\dist(A, t) - 1) + |B|(\dist(B, t) - 1)) \\
    &\geq \min(1, 7 \cdot \sqrt{5} + 1 - 2, 7 \cdot (\sqrt{5} - 1)) \geq 1.
  \end{align*}
  Thus, in all cases, the cost of a normalized canonical layout is less than $L'$ by at least $1$.

  After Operation~3, sources of the same group share a parent.
  Consider the last such operation, and let $A$ be the corresponding group with subgroups $A_v$ (connected to Steiner vertex $v$) and $A_w$ (connected to Steiner vertex $w$).
  Since this operation makes the layout canonical, each group except for $A$ already shares a parent, and sources in $A$ are either connected to $v$ or $w$.
  Moreover, only source groups with distance $\sqrt{2}$ may share a parent, and each source has distance $1$ to its assigned sink.

  We first show that the cost does not increase if we assume that $v$ and $w$ are only connected to sources of $A$.
  First, assume that $v$ is connected to two different source groups $A_v$ and $B$ with $|A_v| \geq 1$ and $|B| \geq g$.
  By Corollary~\ref{cor:cost-difference}, the cost does not increase if we connect $A_v$ and $B$ using distinct Steiner vertices since the cost difference is at least
  \begin{align*}
    &\min(1, |A_v| \dist(A_v, B) + \dist(B, t) - 2, |A_v|(\dist(A_v, t) - 1) + |B|(\dist(B, t) - 1)) \\
    &\geq \min(1, \sqrt{2} + 1 - 2, 0) \geq 0.
  \end{align*}
  The same holds for $w$ if $w$ is connected to sources of $A_w$ and $B$.
  Thus, we can assume that $v$ and $w$ are only connected to $A_v$ and $A_w$, respectively.
  In this case, the total cost in $L'$ for both is $|A_w|^\alpha + |A_v|^\alpha$ while after the operation, the cost for connecting $A$ is $|A|^\alpha$.
  The cost difference is minimized for $|A_v| = 1$, which gives us a cost decrease of at least
  \begin{equation*}
    (|A| - 1)^\alpha + 1 - |A|^\alpha \geq (g - 1)^\alpha + 1 - g^\alpha .
  \end{equation*}
  Note that Operation~4 is never needed to make a non-canonical layout non-normalized canonical.
  In total, we have shown that a non-canonical layout costs at least $\min(1 + (g - 1)^\alpha - g^\alpha, 1) = 1 + (g - 1)^\alpha - g^\alpha$ more than a normalized canonical layout, where $g$ is the minimum number of sources in a group.
\end{proof}
To summarize, it follows from Lemma~\ref{lemma:cost-difference} that any layout of the \flamecast instance $I_\Phi$ that has at most the canonical cost is canonical.
Thus, $I_\Phi$ has a valid layout of canonical cost if and only if it has a valid canonical layout.  Due to Lemma~\ref{lemma:np-hard-correctness}, this is the case if and only if $\Phi$ is satisfiable.
This immediately yields the following theorem.
\begin{theorem}
  \label{thm:np-hard-three-layers}
  \flamecast is NP-hard for one intermediate layer and $\alpha < 1$.
\end{theorem}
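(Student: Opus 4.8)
The plan is to assemble the two structural lemmas proved above into a polynomial-time many-one reduction from \textsc{Planar Monotone 3-SAT}. Given an instance $\Phi$ together with its rectilinear drawing, I would build the \flamecast instance $I_\Phi$ exactly as constructed above; this takes polynomial time since the drawing has polynomial size, uses a single intermediate layer, and the parameter $\alpha < 1$ is inherited unchanged. The decision problem I reduce to asks whether $I_\Phi$ admits a valid layout of cost at most the canonical cost $\cost^* = g^\alpha \cdot n_g + (2g)^\alpha \cdot n_{2g}$ from Observation~\ref{lemma:canonical-cost}. It then remains to establish the equivalence that $\Phi$ is satisfiable if and only if $I_\Phi$ has a valid layout of cost at most $\cost^*$.

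For the forward direction, if $\Phi$ is satisfiable then Lemma~\ref{lemma:np-hard-correctness} yields a \emph{valid canonical} layout of $I_\Phi$, and by Observation~\ref{lemma:canonical-cost} every canonical layout has cost exactly $\cost^*$, so the threshold is met. For the backward direction, suppose $I_\Phi$ has a valid layout $L$ with $\cost(L) \le \cost^*$. Lemma~\ref{lemma:cost-gap} tells us that any non-canonical layout of $I_\Phi$ has cost at least $\cost^* + 1 + (g-1)^\alpha - g^\alpha$. The one point that genuinely requires the hypothesis $\alpha < 1$ is that this additive penalty is strictly positive: since $x \mapsto x^\alpha$ has derivative at most $\alpha < 1$ on $[1,\infty)$ for $\alpha \in [0,1)$, the mean value theorem gives $g^\alpha - (g-1)^\alpha < 1$, hence $1 + (g-1)^\alpha - g^\alpha > 0$. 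Consequently $L$ must be canonical, and applying the reverse direction of Lemma~\ref{lemma:np-hard-correctness} produces a satisfying assignment for $\Phi$.

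Since the construction is polynomial and the cost threshold cleanly separates yes- from no-instances, the reduction shows that \flamecast is NP-hard for one intermediate layer and every fixed $\alpha < 1$. I do not expect any real obstacle in this final step, because the heavy lifting has already been done: Lemma~\ref{lemma:np-hard-correctness} encodes the Boolean logic into the combinatorics of canonical layouts, and Lemma~\ref{lemma:cost-gap} quantifies the cost of deviating from canonicity. The only care needed is verifying the strict positivity of $1 + (g-1)^\alpha - g^\alpha$, which is exactly where the boundary case $\alpha = 1$ (solvable in polynomial time by Theorem~\ref{thm:poly-case}) is excluded, as there the penalty degenerates to $0$.
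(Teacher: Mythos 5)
Your proposal is correct and follows essentially the same route as the paper: construct $I_\Phi$, use Lemma~\ref{lemma:np-hard-correctness} for the equivalence between satisfiability and valid canonical layouts, and use Lemma~\ref{lemma:cost-gap} to conclude that any valid layout at the canonical cost threshold must itself be canonical. Your explicit verification that $1 + (g-1)^\alpha - g^\alpha > 0$ via the mean value theorem is a nice touch that the paper leaves implicit, and it correctly pinpoints where the hypothesis $\alpha < 1$ enters.
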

By a simple reduction, we generalize this result for more layers.
\begin{corollary}
  \flamecast is NP-hard for instances with at least one intermediate layer and $\alpha < 1$.
\end{corollary}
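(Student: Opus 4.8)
The plan is to reduce from the one-intermediate-layer instance $I_\Phi$ produced in the proof of \cref{thm:np-hard-three-layers}, turning it into an equivalent instance $I'$ with any prescribed number $\lambda \geq 1$ of intermediate layers by inserting \emph{capacity-one dummy layers} directly below the bundling layer. Concretely, keep $S$, $T$ and $\alpha$ exactly as in $I_\Phi$, and set the capacity tuple of $I'$ to $C' = (c_0, c_1, \ldots, c_{\lambda+1})$ with $c_0 := 2g$, $c_1 := 2g$, and $c_j := 1$ for all $2 \leq j \leq \lambda+1$. This tuple is non-increasing, so it is a legal \flamecast instance, and $n = |S|$ is unchanged, so the approximation factor $1 + \frac{1+6^\alpha-7^\alpha}{14^\alpha n}$ will transfer verbatim. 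The layer $1$ adjacent to the sinks plays the role of the single intermediate layer of $I_\Phi$, while layers $2,\ldots,\lambda$ carry capacity $1$ and hence cannot bundle anything.

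The heart of the argument is the claim that the dummy layers grant no additional power. Every Steiner vertex in a layer $i \geq 2$ has load at most $c_i = 1$; since it is an internal vertex it has at least one source in its subtree, so its load is exactly $1$, and therefore it can only lie on a load-one edge, i.e.\ on the private path of a single source up to the point where that source first merges with others at layer $1$. By \cref{lem:properties:convex steiner}, such a degree-two Steiner vertex lies in the convex hull of its two neighbors, hence on the segment joining them; placing it there is cost-neutral, and any other placement only increases the length. Thus the dummy vertices can at best subdivide single-source segments without changing the cost and can never merge sources of different groups.

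With this in hand I would show $\opt(I') = \opt(I_\Phi)$ in both directions. For ``$\leq$'', note that a canonical (height-$2$) layout of $I_\Phi$ has sources at layer $2$ (load $1 \leq c_2$), Steiner vertices at layer $1$ (load $\leq 2g = c_1$) and sinks at layer $0$; since the height bound of $I'$ is only ``at most $\lambda+1$'', this layout is itself a valid layout of $I'$ of the same cost, and the dummy layers simply remain empty. For ``$\geq$'', take any layout of $I'$ and \emph{contract} every load-one chain of layer-$\geq 2$ Steiner vertices onto its source; by the previous paragraph this does not increase the cost and yields a valid height-$2$ layout of $I_\Phi$. Hence the two optima coincide, and moreover a \emph{valid} layout of $I'$ of at most canonical cost contracts to a valid canonical layout of $I_\Phi$. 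By \cref{lemma:np-hard-correctness} this happens exactly when $\Phi$ is satisfiable, and the cost gap of at least $1 + (g-1)^\alpha - g^\alpha$ from \cref{lemma:cost-gap} is preserved under contraction, giving the same inapproximability threshold.

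The step I expect to require the most care is the contraction claim: one must verify that \emph{every} vertex in a capacity-one layer genuinely sits on a load-one edge (ruling out, e.g., a source placed deep in the hierarchy with a capacity-one ancestor that somehow carries more flow), and that collapsing these chains never increases cost even when the intermediate placements were suboptimal. Both follow from the load accounting above together with \cref{lem:properties:convex steiner}, but they are the only places where the insertion of dummy layers could secretly change the reachable cost landscape. Once these are settled, the reduction is polynomial (it only prepends $\lambda-1$ entries to the capacity tuple) and establishes NP-hardness, and NP-hardness of the same-factor approximation, for every $\lambda \geq 1$ and every $\alpha < 1$.
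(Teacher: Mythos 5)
Your proposal is correct and takes essentially the same approach as the paper: the paper also neutralizes the extra layers by giving them capacity one and splicing out the resulting load-one chains via the triangle inequality, though it packages this as a generic inductive reduction from $\lambda$ to $\lambda+1$ intermediate layers (appending one capacity-one layer at a time to an arbitrary instance) rather than a one-shot construction on $I_\Phi$. Your version additionally carries the inapproximability gap through the reduction explicitly, which is a harmless strengthening beyond what the corollary itself claims.
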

\begin{proof}
  We show that \flamecast with $\lambda$ intermediate layers can be polynomially reduced to \flamecast with $\lambda + 1$ intermediate layers.
  Let $I$ be an instance of \flamecast with $\lambda \in \mathbb{N}_0$ intermediate layers for some $\alpha \in [0, 1)$.
  We construct an instance $I'$ of \flamecast with one more layer by adding a layer after the sources with capacity $1$.
  We show that if $I'$ is feasible, then there is a cost-optimal layout for $I'$ where the underlying forest has $\lambda$ intermediate layers.
  Let $L$ be a valid layout of $I'$ with $\lambda + 1$ intermediate layers.
  By construction, every Steiner vertex in layer $\lambda + 1$ has a single source as child.
  Removing such a Steiner vertex $v$ and its incident edges and directly connecting the child of $v$ to the parent of $v$ does not increase the cost.
  By repeating this for every Steiner vertex in layer $\lambda + 1$, we obtain a valid layout with $\lambda$ intermediate layers whose cost does not exceed the cost of $L$.
  As \flamecast with $\lambda = 1$ is NP-hard by Theorem~\ref{thm:np-hard-three-layers}, this extends to any $\lambda \geq 1$.
\end{proof}
The gap shown in Lemma~\ref{lemma:cost-difference} additionally gives us a lower bound for an approximation factor if P $\neq$ NP.
\begin{theorem}
  \label{thm:approx-np-hard}
  Approximating \flamecast within a factor of $1 + \frac{1 + (g - 1)^\alpha - g^\alpha}{n \cdot (2g)^\alpha}$ is NP-hard, where $n$ is the number of sources and $g \geq 7$ is the minimum size of groups.
\end{theorem}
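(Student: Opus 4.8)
The plan is to derive the approximation-hardness result directly from the cost gap established in Lemma~\ref{lemma:cost-gap} together with the correctness of the reduction in Lemma~\ref{lemma:np-hard-correctness}. Recall that $\Phi$ is satisfiable if and only if $I_\Phi$ has a valid layout of cost exactly $\cost^* = g^\alpha \cdot n_g + (2g)^\alpha \cdot n_{2g}$ (the canonical cost), and that by Lemma~\ref{lemma:cost-gap} any valid layout of a no-instance has cost at least $\cost^* + 1 + (g-1)^\alpha - g^\alpha$. The idea is that an approximation algorithm with a sufficiently small factor could distinguish these two cases and hence solve \textsc{Planar Monotone 3-SAT}, which is NP-hard.

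First I would formalize the gap argument. Suppose an approximation algorithm achieves factor $\rho$. On a yes-instance, the optimal cost equals $\cost^*$, so the algorithm returns a layout of cost at most $\rho \cdot \cost^*$. On a no-instance, every valid layout---and in particular the one returned---has cost at least $\cost^* + 1 + (g-1)^\alpha - g^\alpha$. Thus the algorithm can separate the two cases precisely when
\begin{equation*}
  \rho \cdot \cost^* < \cost^* + 1 + (g-1)^\alpha - g^\alpha,
\end{equation*}
which rearranges to $\rho < 1 + \frac{1 + (g-1)^\alpha - g^\alpha}{\cost^*}$. An algorithm with factor strictly below this threshold would decide satisfiability in polynomial time, so achieving such a factor is NP-hard.

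It remains to convert the bound $\cost^*$ in the denominator into the stated expression in terms of $n$ and $g$. Here I would simply overestimate the canonical cost: since every source group has at most $2g$ sources, each source contributes at most $(2g)^\alpha$ to any source's share of the canonical cost, and there are $n = |S|$ sources in total distributed among the groups. More precisely, writing $\cost^* = g^\alpha n_g + (2g)^\alpha n_{2g}$ and bounding $g^\alpha \le (2g)^\alpha$, we get $\cost^* \le (2g)^\alpha(n_g + n_{2g})$; since the number of groups is at most the number of sources and $g^\alpha, (2g)^\alpha \le (2g)^\alpha$, one obtains $\cost^* \le n \cdot (2g)^\alpha$. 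Replacing $\cost^*$ by this larger quantity only shrinks the fraction, so the threshold $1 + \frac{1 + (g-1)^\alpha - g^\alpha}{n \cdot (2g)^\alpha}$ is a valid (weaker) hardness bound, yielding the theorem.

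The main obstacle I anticipate is not conceptual but bookkeeping: carefully justifying the inequality $\cost^* \le n \cdot (2g)^\alpha$ so that the denominator cleanly matches the statement, and ensuring the numerator $1 + (g-1)^\alpha - g^\alpha$ is genuinely positive (which it is for $\alpha \in [0,1)$ since $t \mapsto t^\alpha$ is strictly concave, so $g^\alpha - (g-1)^\alpha < 1$). One should also confirm that the reduction instance $I_\Phi$ has polynomial size and is constructible in polynomial time---already established---so that the whole argument indeed reduces \textsc{Planar Monotone 3-SAT} to approximating \flamecast within the claimed factor.
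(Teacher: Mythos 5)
Your proposal is correct and takes essentially the same route as the paper's proof: both combine the correctness of the reduction (Lemma~\ref{lemma:np-hard-correctness}), the cost gap $1 + (g-1)^\alpha - g^\alpha$ of Lemma~\ref{lemma:cost-gap}, and the bound $\cost^* < n \cdot (2g)^\alpha$ on the canonical cost (Observation~\ref{lemma:canonical-cost}) to conclude that an approximation within the stated factor would decide \textsc{Planar Monotone 3-SAT}. The only nuance is that you should use the \emph{strict} bound $\cost^* < n \cdot (2g)^\alpha$ (which holds because each group has at least $g \geq 7$ sources, so the number of groups is strictly less than $n$), as the paper does, so that the stated factor lies strictly below the distinguishing threshold $1 + \frac{1 + (g-1)^\alpha - g^\alpha}{\cost^*}$.
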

\begin{proof}
  Let $n_g +n_{2g}$ be the number of source groups.
  By Lemma~\ref{lemma:canonical-cost}, the optimal cost for $I_\Phi$ if $\Phi$ is a yes-instance is at most $\opt = n_g \cdot g^\alpha + n_{2g} \cdot (2g)^\alpha < n \cdot (2g)^\alpha$.
  On the other hand, if $\Phi$ is a no-instance, then the optimal cost for $I_\Phi$ is at least $\opt + 1 + (g - 1)^\alpha - g^\alpha$.
  This means that if there is a polynomial algorithm with an approximation factor of less than
  \begin{align*}
    \frac{\opt + 1 + (g - 1)^\alpha - g^\alpha}{\opt} = 1 + \frac{1 + (g - 1)^\alpha - g^\alpha}{\opt} > 1 + \frac{1 + (g - 1)^\alpha - g^\alpha}{n \cdot (2g)^\alpha} ,
  \end{align*}
  then \textsc{Planar monotone 3-SAT} can be solved in polynomial time.
\end{proof}
By choosing $g = 7$ as the minimum number of sources in a group, \flamecast cannot be approximated within a factor of $1 + \frac{1 + 6^\alpha - 7^\alpha}{14^\alpha n}$ for $\alpha \in [0, 1)$, unless $\text{P}=\text{NP}$.
For $\alpha = 0$, this is $1 + \frac{1}{n}$, and it decreases for larger $\alpha$.

\section{Convex Instances}
\label{sec:convex}

In this section, we consider instances where the sources are in convex position.
More formally, an instance $I = (S, T, C, \alpha)$ is \emph{convex} if all sources in $S$ are located on the boundary of the convex hull of $S \cup T$.
We show that \flamecast remains NP-hard in such a setting.
In fact, we show that \flamecast is hard even if all sources lie on a circle and all sinks on the circle's center.
We call this class of instances \emph{circular instances}.
If additionally, the sources are distributed evenly on the circle, then we call the instance a \emph{source-equally-spaced} circular instance.
If the sources are divided into groups, where the sources of the same group share a position, and the groups are distributed evenly on the circle, then we call the instance a \emph{group-equally-spaced} circular instance.
Note that every source-equally-spaced circular instance is also a group-equally-spaced instance but not vice versa.

We show two NP-hardness results for different types of $\alpha$ for group-equally-spaced circular instances with one intermediate layer.
The first result shows NP-hardness if $\alpha < 1$ converges sufficiently fast to $1$ for $n \to \infty$, while the second result proves NP-hardness for constant $\alpha$.
Recall from Theorem~\ref{thm:poly-case} that \flamecast can be solved in polynomial time if $\alpha = 1$.
Thus, our hardness proofs only leave a small gap for the case where $\alpha$ goes slowly to $1$ for $n \to \infty$.
Moreover, we prove that there is no polynomial approximation algorithm for group-equally-spaced circular instances with one intermediate layer with approximation factor $1 + \frac{1}{n^2}$ unless $\text{P} = \text{NP}$.

We complement these negative results with two positive results.
First, we propose a polynomial approximation algorithm with approximation factor $1 + \frac{1}{n^2}$ and runtime $O(n^2\log^3(n))$ for source-equally-spaced circular instances with one intermediate layer and one sink.
The quality of the approximation only depends on the quality of the chosen algorithm to embed a given topology, i.e., if the embedding algorithm is optimal, then we find a cost-optimal layout.
Some observations leading to this result can be generalized to convex instances with some caveats.
For this setting with one intermediate layer, one sink, no capacities and $\alpha = 0$, we propose a polynomial approximation algorithm that approximates a solution within a factor of $1 + \varepsilon$ in $O(n^8\log^3(n))$ time.
As for the circular case, the quality of the approximation only depends on the quality of the embedding.

\subsection{Hardness of Group-Equally-Spaced Circular Instances with One Intermediate Layer}
For both constant $\alpha$ and sufficiently large $\alpha$, we reduce from \textsc{3-Partition}, a well-known strongly NP-hard problem~\cite{garey2002}.

\subparagraph{3-Partition}
The input consists of an integer $t$ and a (multi-)set $Z=\set{z_1,\dots, z_{m}}$ of $m=3k$ positive integers with $\frac{t}{4} < z_i < \frac{t}{2}$ for all $i \in [3k]$ such that $\sum_{i=1}^{m} z_i = k \cdot t$.
We ask if there is a partition of $Z$ into triplets $S_1,\dots,S_k$ such that each triple sums to $t$, i.e., $\sum_{z \in S_i} z = t$ for all $i \in k$.

\subparagraph{Reduction}
Given an instance $Z = \set{z_1, \ldots, z_m}$ of \textsc{3-Partition} with $m = 3k$ for some $k\in \mathbb{N}$, we construct a circular instance $I_Z$ of \flamecast with a similar construction for both hardness results.
The instance consists of $k$ sinks and $m$ groups of sources where group $i \in \set{1, \ldots, m}$ consists of $z_i + \hat{c}$ sources for a globally chosen offset $\hat{c}$.
Sources of the same group share the same position, and the groups are distributed evenly on the unit circle.
The sinks have capacity $t + 3 \cdot \hat{c}$ each, and the intermediate layer has capacity $\frac{t}{2} + \hat{c}$.
We ask if there is a valid layout that costs at most $\sum_{i = 1}^{m} (z_i + \hat{c})^\alpha$.

We call a layout \emph{canonical} if the sources of each source group are connected to a Steiner vertex that is located at the group.
The cost of such a layout is $\sum_{i = 1}^m (z_i + \hat{c})^\alpha$, and thus, any valid canonical layout is a solution to the constructed instance $I_F$.
We call $\sum_{i = 1}^m (z_i + \hat{c})^\alpha$ the \emph{canonical cost}.
Figure~\ref{fig:np-circle} shows such an instance and a corresponding canonical layout.
  \begin{figure}
    \centering
    \includegraphics[page=3]{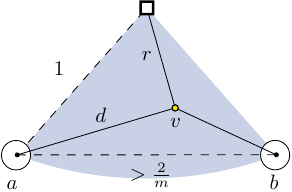}
    \caption{A canonical layout of a group-equally-spaced circular instance, where each circle represents a group with $z_i + \hat{c}$ sources. The sinks are located in the center. }
    \label{fig:np-circle}
  \end{figure}

We later show for both hardness results separately that any layout with cost at most the canonical cost is canonical.
Assuming that this is true for now, we argue why the reduction is correct.
In a canonical layout, there are $m$ intermediate vertices with loads $z_1 +\hat{c}, \ldots, z_m + \hat{c}$.
Note that the capacity constraints in the intermediate layer are not violated.
The only choices we make are which intermediate vertices to connect to which sink such that the sink capacities are not exceeded.
Since there are $k$ sinks with capacity $t + 3 \cdot \hat{c}$ each, assigning intermediate vertices to sinks is equivalent to partitioning the elements $\set{z_1 +\hat{c}, \ldots, z_m + \hat{c}}$ into $k$ triples with sum $t + 3 \cdot \hat{c}$ each.
This is equivalent to partitioning $Z$ into $k$ triples with sum $t$ each, which directly yields the following lemma.
\begin{lemma}
  \label{lemma:np-circle-correctness-large-alpha}
  The \textsc{3-Partition} instance $Z$ has a solution if and only if there is a valid canonical layout for the \flamecast instance $I_Z$.
\end{lemma}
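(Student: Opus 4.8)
The plan is to establish the biconditional by exhibiting an explicit correspondence between solutions of the \textsc{3-Partition} instance $Z$ and valid canonical layouts of $I_Z$, relying on the assumption (proved separately for each hardness result) that the sink-assignment is the only remaining degree of freedom once we restrict to canonical layouts. The key structural fact to exploit is that in a canonical layout each source group $i$ is bundled into a single intermediate (Steiner) vertex of load exactly $z_i + \hat{c}$, so that the intermediate-layer capacity $\tfrac{t}{2} + \hat{c}$ is never exceeded: indeed $z_i < \tfrac{t}{2}$ gives $z_i + \hat{c} < \tfrac{t}{2} + \hat{c}$. Thus validity of a canonical layout is equivalent to validity of the assignment of these $m$ intermediate vertices to the $k$ sinks, and the entire question collapses to a combinatorial partitioning problem.

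First I would prove the forward direction: assume $Z$ admits a partition into triples $S_1, \dots, S_k$ with $\sum_{z \in S_j} z = t$ for each $j$. I build the canonical layout by placing one intermediate vertex at each group, and for each triple $S_j = \{z_a, z_b, z_c\}$ I connect the three corresponding intermediate vertices to sink $j$. The load on sink $j$ is then $(z_a + \hat{c}) + (z_b + \hat{c}) + (z_c + \hat{c}) = t + 3\hat{c}$, which exactly meets the sink capacity $t + 3\hat{c}$, so no sink capacity is violated. Since each triple in a \textsc{3-Partition} solution has exactly three elements and there are $k$ sinks accommodating all $m = 3k$ groups, every intermediate vertex is assigned, giving a valid canonical layout of canonical cost.

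For the reverse direction, I start from a valid canonical layout. By definition each group is bundled into one intermediate vertex of load $z_i + \hat{c}$, and validity forces each sink to receive intermediate vertices of total load at most $t + 3\hat{c}$. I would argue that each sink receives \emph{exactly} three intermediate vertices: the lower bound $z_i + \hat{c} > \tfrac{t}{4} + \hat{c}$ means four or more groups at one sink would exceed $t + 4\hat{c} > t + 3\hat{c}$ (using $4 \cdot \tfrac{t}{4} = t$ together with the strict inequality), while the total load $kt + 3k\hat{c}$ spread over $k$ sinks each of capacity $t + 3\hat{c}$ forces every sink to be saturated, ruling out fewer than three. Hence each sink corresponds to a triple whose $z$-values sum to exactly $t$, yielding the desired partition of $Z$. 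Combining both directions establishes the lemma.

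The main obstacle is ensuring the counting argument in the reverse direction is tight: it hinges on the strict \textsc{3-Partition} bounds $\tfrac{t}{4} < z_i < \tfrac{t}{2}$, which guarantee that each sink must receive precisely three groups — not two (too little load to saturate) and not four (overflow). This is where the offset $\hat{c}$ and the exact choice of sink capacity $t + 3\hat{c}$ must be handled carefully so that the "sum to $t + 3\hat{c}$ per sink" condition translates back cleanly to "sum to $t$ per triple," but the calculation itself is routine once the partition-into-triples structure is pinned down.
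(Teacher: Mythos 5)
Your proof is correct and takes essentially the same route as the paper: both reduce validity of a canonical layout to the assignment of the $m$ intermediate vertices (with loads $z_i + \hat{c}$) to the $k$ sinks, which is then equivalent to partitioning $Z$ into $k$ triples each summing to $t$. You actually spell out the counting step (four groups per sink overflow by $z_i > \tfrac{t}{4}$, and exact saturation of total capacity rules out fewer than three) that the paper leaves implicit, so your write-up is a more detailed version of the paper's argument.
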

It remains to show that the cost of every non-canonical layout is higher than the canonical cost, which we show for constant $\alpha$ and sufficiently large $\alpha$ separately.

\subsubsection{Sufficiently Large \texorpdfstring{\boldmath$\alpha$}{𝛼}}
In this setting, we choose $\hat{c} = 0$ and a sufficiently large $\alpha$, namely $\alpha \geq \log_n\left(m - 1 + \cos\left(\frac{\pi}{m}\right)\right)$, depending on the number of source groups $m$ and number of sources $n$.
We later show that $\alpha$ asymptotically behaves like $1 - \frac{1}{O(n^3\log(n))}$ for $n \to \infty$.
The idea to prove that any optimal layout is canonical is as follows.
In a non-canonical layout, there are sources of different groups that share one Steiner vertex.
These sources take a detour on the way to their sinks compared to a canonical layout, whereas the path from each source to the sink is a straight line.
On the other hand, the total cost may decrease if an edge to the sink is shared by more sources.
We however show that the benefits of sharing are not worth the detour.
\begin{lemma}
  \label{lemma:np-circle-canonical-large-alpha}
  Let $I$ be a group-equally-spaced circular instance of \flamecast with one intermediate layer and $n$ sources that are split into $m$ groups.
  If $\alpha \geq \log_n\left(m - 1 + \cos\left(\frac{\pi}{m}\right)\right)$, then the cost of every non-canonical layout is more than the canonical cost.
\end{lemma}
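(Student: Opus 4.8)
The plan is to prove the contrapositive, for \emph{valid} layouts, which is exactly what the reduction needs: any valid layout of cost at most the canonical cost $\sum_{i=1}^{m} z_i^\alpha$ must be canonical. The decisive feature I would exploit is the capacity of the intermediate layer. Since every group has $z_i>\tfrac t4$ sources while an intermediate vertex may carry at most $\tfrac t2<z_i+z_j$ of them, no intermediate vertex can serve the sources of two full groups in a valid layout; it can hold one group together only with a strictly smaller fragment of others. This caps the sole cost-saving mechanism available for $\alpha<1$, namely the subadditivity $(x+y)^\alpha<x^\alpha+y^\alpha$ on the shared intermediate-to-sink edges, and it is precisely this capped saving that the threshold on $\alpha$ must play off against the geometric detour forced by the even spacing.

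First I would set up a per-vertex cost decomposition. With a single intermediate layer every source reaches its sink either directly, at cost exactly $1$ since it sits on the unit circle with the sink at the centre $O$, or through one intermediate vertex $v$. Writing $C(v)=\sum_j a_j\,\dist(p_j,v)+\ell(v)^\alpha\dist(v,O)$ for a vertex carrying $a_j$ sources of the group at position $p_j$, the total cost is $\sum_v C(v)$ plus the direct contributions. By \cref{lem:properties:convex steiner} each $v$ lies in the convex hull of its neighbours, and by \cref{lem:properties:no crossing} incident source edges do not cross, so I may restrict to geometrically tame configurations. The single-group case is then immediate: $\dist(p_j,v)+\dist(v,O)\ge 1$ forces $C(v)\ge \ell(v)^\alpha$, with equality exactly when $v$ coincides with the group's position, so any displaced intermediate vertex already incurs a strict surcharge.

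The substance lies in the cross-group case, where the capacity bound combines with the spacing. I would reduce the analysis to the cheapest admissible deviation from canonical, relocating sources of one group onto the intermediate vertex of a neighbouring group. Because the groups are evenly spaced, two distinct groups are at angular distance at least $2\pi/m$, so each relocated source must travel at least the chord $2\sin(\pi/m)$, while a vertex pooling two neighbours sits no farther from $O$ than the chord midpoint at radius $\cos(\pi/m)$. The saving is the change in the two intermediate-to-sink edges, controlled by the monotonicity of the marginal $z\mapsto z^\alpha-(z-1)^\alpha$ exactly as in \cref{lemma:no-split-op}, and it is largest for the extremal pair allowed by $z_i\in(t/4,t/2)$. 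Summing the resulting per-vertex surcharges, and folding any genuine splitting of a group into the subadditivity of $x\mapsto x^\alpha$, I would obtain $\cost(L)>\sum_i z_i^\alpha$ for every valid non-canonical $L$ once $\alpha\ge\log_n\!\bigl(m-1+\cos(\pi/m)\bigr)$, the regime in which the detour provably outweighs the capped subadditive saving. Together with \cref{lemma:np-circle-correctness-large-alpha} this yields correctness of the reduction.

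I expect the main obstacle to be identifying and bounding the worst-case valid configuration rather than a single local move. Capacity permits reducing the number of unit-length radial edges below the $m$ of a canonical layout by pooling neighbouring groups, which saves on the now-higher loads through subadditivity; the constant $m-1+\cos(\pi/m)$ should arise precisely from pitting the $m$ canonical radial edges against such a maximally pooled alternative, coupling the combinatorics of how many radial edges the capacity lets one save against the geometry of where a pooling vertex can sit. Making this global balance rigorous, ruling out that many simultaneous fragment moves or an off-position Weber point in $\CH\{p_A,p_B,O\}$ beat the sum of the individual estimates while keeping the $m$-dependence sharp enough to land on the stated threshold, is the delicate part; the even spacing, which pins the smallest chord to $2\sin(\pi/m)$ and the largest relevant radius to $\cos(\pi/m)$, is what keeps the geometric constants tight enough for this to go through.
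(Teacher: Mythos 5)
Your proposal does not prove the stated lemma; it proves (at best) a different, weaker statement, and even that with the key step missing. The lemma quantifies over \emph{arbitrary} group-equally-spaced circular instances: nothing in its hypotheses fixes the intermediate capacity or the group sizes. Your ``decisive feature'' --- that the capacity $\tfrac{t}{2}$ together with $z_i > \tfrac{t}{4}$ prevents any Steiner vertex from serving two full groups --- is a property of the specific reduction instances $I_Z$, not of the instances the lemma is about. For an instance with unlimited intermediate capacity (which the lemma covers), that mechanism simply does not exist, yet the conclusion must still hold. The paper's proof is deliberately capacity-free: after using Corollary~\ref{cor:no-split} to assume that same-group sources share parents, it lets $r$ be the maximum distance from a sink to a Steiner vertex serving two groups, bounds the \emph{total} gain obtainable from sharing by comparing against the $\alpha = 1$ cost, namely $\cost_L(1) - \cost_L(\alpha) < r\,(n - n^\alpha)$ (pure subadditivity, no capacity enters), and then shows that the detour $d + r - 1$ of a \emph{single} source attached to such a shared vertex, with $d \geq \sqrt{1 + r^2 - 2r\cos\left(\frac{\pi}{m}\right)}$ from the law of cosines (the angle at the center is at least $\frac{\pi}{m}$), already exceeds that total gain once $n^\alpha \geq m - 1 + \cos\left(\frac{\pi}{m}\right)$. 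Restricting attention to valid layouts, as you do, is a further weakening of the statement (the lemma asserts the bound for every layout, capacity-violating ones included), although that restricted version would indeed suffice for the reduction.

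The second, more serious gap is that the quantitative heart of the lemma --- why the threshold $\alpha \geq \log_n\left(m - 1 + \cos\left(\frac{\pi}{m}\right)\right)$ suffices --- is never derived in your sketch. You assert that the constant ``should arise'' from pitting the $m$ canonical radial edges against a maximally pooled alternative and explicitly defer ``making this global balance rigorous'' as the delicate part; but that balance \emph{is} the lemma. Moreover, the intermediate geometric claims you lean on are not justified for arbitrary layouts: a relocated source need not travel the full chord $2\sin\left(\frac{\pi}{m}\right)$, since the shared Steiner vertex can sit anywhere (a non-canonical layout does not place shared vertices at group positions), and a vertex pooling two neighbouring groups is not confined to radius $\cos\left(\frac{\pi}{m}\right)$ --- for loads weighted with $\alpha$ close to $1$ its Weber point migrates toward the sink. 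The paper sidesteps both problems by never classifying ``admissible deviations'' at all: the single inequality $d \geq \sqrt{1 + r^2 - 2r\cos\left(\frac{\pi}{m}\right)}$, valid for any position of the shared vertex, is the only geometry it needs, and the global bound $r\,(n - n^\alpha)$ absorbs every possible combination of pooling moves simultaneously.
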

\begin{proof}
  Consider a layout $L$ for $I$ that is not canonical and denote the sizes of the source groups by $z_1, \ldots, z_m$.
  We prove that the cost of $L$ is larger than the canonical cost.
  In the following, we may assume that sources of the same group share a parent in $L$ since otherwise there is a layout $L'$ with strictly lower cost by Lemma~\ref{cor:no-split}.
  If every group in $L$ is connected to a distinct Steiner vertex, it is clear that a canonical layout is optimal if we additionally place the Steiner vertices at each group's position.

  Otherwise, there are Steiner vertices in $L$ that are connected to sources of at least two different groups.
  Let $r$ be the maximum distance between such a Steiner vertex and a sink.
  If $r = 0$, then every Steiner vertex is placed at the center, and the cost of $L$ is $\sum_{i=1}^{m} z_i$ which is larger than $\sum_{i=1}^{m}z_i^\alpha$ for $\alpha < 1$.
  Thus, we assume $r > 0$ in the following.

  Sources that are connected to Steiner vertices with children from multiple groups take a detour compared to how they are connected in a canonical layout.
  On the other hand, the cost for connecting sources is reduced if the sources are bundled by a Steiner vertex.
  In the following, we first give an upper bound for how much sharing benefits us and then give a lower bound for how much we pay for the detours.

  We denote the cost for $L$ by $\cost_L(\alpha)$, depending on $\alpha$.
  We quantify how much we save at most by bundling different groups with one Steiner vertex, which is bounded by how much we save by enabling sharing (setting $\alpha < 1$ instead of $\alpha = 1$).
  This is given by the cost difference $\cost_L(1) - \cost_L(\alpha)$.
  Let $m'$ be the number of intermediate vertices $v_1, \ldots, v_{m'}$ in $L$.
  Moreover, we denote the distance between the origin and Steiner vertex $v_i$ by $r_i$ and the number of sources that are connected to $v_i$ by $b_i$ for each $i \in \set{1, \ldots, m'}$.
  Compared to the cost of $L$ with $\alpha = 1$, the cost of each edge between an Steiner vertex $v_i$ and its sink decreases by $r_i \cdot \left(b_i - b_i^\alpha\right)$.
  In total, the cost reduces by $\sum_{i = 1}^{m'} r_i \cdot \left(b_i - b_i^\alpha\right)$ if we have $\alpha < 1$.
  Thus, we get the following inequality.
  \begin{align*}
    \cost_L(1) - \cost_L(\alpha) &=  \sum_{i = 1}^{m'} r_i \cdot \left(b_i - b_i^\alpha\right) \\
    &\leq r \sum_{i = 1}^{m'} \left(b_i - b_i^\alpha\right) \\
    &< r \left(\sum_{i = 1}^{m'} b_i - \left(\sum_{i = 1}^{m'} b_i\right)^\alpha\right) \\
    &= r \left(n - n ^\alpha\right) ,
  \end{align*}
  which yields the inequality
  \begin{align}
    \label{eq:compare-caL-c1L}
    \cost_L(\alpha) > \cost_L(1) - r \left(n - n ^\alpha\right) .
  \end{align}

  In the following, we determine the lower bound of the detour in $L$ compared to a canonical layout with $\alpha = 1$, which is given by $\cost_L(1) - \sum_{i = 1}^{m} z_i$.
  To bound this cost difference, we only consider the detour taken by a single source.
  Consider a Steiner vertex~$v$ with maximum distance $r$ from the center that is connected to sources of at least two different groups.

  Let $v$ be connected to two sources $a$ and $b$ that are not in the same group.
  We assume without loss of generality that $\dist(v, a) \geq \dist(v, b)$.
  Furthermore, let $d$ be the distance between Steiner vertex $v$ and source $a$.
  This can be seen in Figure~\ref{fig:np-circle-large-alpha}.
  \begin{figure}
    \centering
    \includegraphics{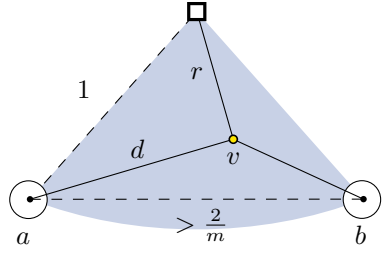}
    \caption{Two sources of different groups connected to the same parent~$v$.}
    \label{fig:np-circle-large-alpha}
  \end{figure}

  The path from $a$ to its sink in $L$ is longer by $d + r - 1$ compared to the length of the path in a canonical layout.
  Thus, we have
  \begin{align}
    \label{eq:compare-c1L-ca}
    \cost_L(1) \geq \sum_{i=1}^{m}z_i + d + r - 1 > \sum_{i=1}^{m}z_i^\alpha + d + r - 1
  \end{align}
  for $\alpha < 1$.
  Putting the inequalities of \ref{eq:compare-caL-c1L} and \ref{eq:compare-c1L-ca} together, we get
  \begin{align*}
    \cost_L(\alpha) &> \cost_L(1) - r\left(n - n^\alpha\right) > \sum_{i=1}^{m}z_i^\alpha + d + r - 1 - r\left(n - n^\alpha\right)
  \end{align*}
  It remains to show that the detour costs more than what sharing saves, i.e.
    $d + r - 1 \geq r \cdot \left(n - n^\alpha\right)$
  for our chosen $\alpha$.
  Since the angle between $a$ and $v$ at the center is at least~$\frac{\pi}{m}$, we get
    $d > \sqrt{1 + r^2 - 2r\cos\left(\frac{\pi}{m}\right)}$
  using the law of cosines.
  With this and our chosen $\alpha \geq \log_n(m - 1 + \cos(\frac{\pi}{m}))$, the rest of the proof is just substituting and rearranging as follows
  \begin{align*}
    d + r - 1 - r \cdot \left(n - n^\alpha\right) &> \sqrt{1 + r^2 - 2r\cos\left(\frac{\pi}{m}\right)} + r - 1 - r \cdot \left(m - \left(m - 1 + \cos\left(\frac{\pi}{m}\right)\right)\right) \\
    &= \sqrt{1 + r^2 - 2r\cos\left(\frac{\pi}{m}\right)} - 1 + r\cos\left(\frac{\pi}{m}\right) \\
    &> \sqrt{1 + r^2 \cos\left(\frac{\pi}{m}\right)^2 - 2r\cos\left(\frac{\pi}{m}\right)} - 1 + r\cos\left(\frac{\pi}{m}\right) \\
    &= 1 - r \cos\left(\frac{\pi}{m}\right) - 1 + r\cos\left(\frac{\pi}{m}\right) \\
    &= 0 .
  \end{align*}
  This shows that even the best possible sharing does not account for the detour that the single source $a$ takes.
  Thus, every non-canonical layout costs more than a canonical layout.
\end{proof}
Together with Lemma~\ref{lemma:np-circle-correctness-large-alpha}, this proves the correctness of the reduction.
Since the reduction is polynomial, we obtain the following theorem.

\begin{theorem}
  \flamecast is NP-hard on group-equally-spaced circular instances with one intermediate layer and $\alpha \geq \log_n(m - 1 + \cos\left(\frac{\pi}{m}\right))$ where $n$ is the number of sources.
\end{theorem}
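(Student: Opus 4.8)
The plan is to assemble the theorem from the two lemmas already proved for the reduction $Z \mapsto I_Z$ together with the strong NP-hardness of \textsc{3-Partition}. Recall that for this hardness variant we set the offset $\hat c = 0$, so group $i$ consists of exactly $z_i$ sources and the canonical cost is $\sum_{i=1}^m z_i^\alpha$. I would fix the cost parameter of the constructed instance to the threshold value $\alpha = \log_n\!\left(m - 1 + \cos\!\left(\frac{\pi}{m}\right)\right)$, which is precisely the hypothesis under which Lemma~\ref{lemma:np-circle-canonical-large-alpha} applies; since $n$ is determined by $I_Z$, this $\alpha$ is well defined once the instance is fixed.

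The heart of the argument is a chain of two equivalences: $Z$ is a yes-instance if and only if $I_Z$ has a valid canonical layout, which in turn holds if and only if $I_Z$ has a valid layout of cost at most $\sum_{i=1}^m z_i^\alpha$. The first equivalence is exactly Lemma~\ref{lemma:np-circle-correctness-large-alpha}. For the second, the forward direction is immediate because a canonical layout has cost exactly the canonical cost; the backward direction is the contrapositive of Lemma~\ref{lemma:np-circle-canonical-large-alpha}, which for the chosen $\alpha$ guarantees that every non-canonical layout costs strictly more than $\sum_{i=1}^m z_i^\alpha$, so any valid layout meeting the cost bound must in fact be canonical. Composing the two equivalences shows that the decision version of \flamecast on $I_Z$ answers the \textsc{3-Partition} instance $Z$.

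The step I expect to require the most care is verifying that the reduction runs in polynomial time, and this is where the \emph{strong} NP-hardness of \textsc{3-Partition} is essential. The number of sources of $I_Z$ is $n = \sum_{i=1}^m z_i = kt$, so under a binary encoding of the $z_i$ the instance could have exponentially many sources; strong NP-hardness lets us assume the integers are polynomially bounded (equivalently, encoded in unary), so that the number of groups $m = 3k$, the number of sinks $k$, and the total number of sources $n$ are polynomial in the input size, and $I_Z$ (including the evenly spaced source positions and the capacities) is constructible in polynomial time. I would also record that the chosen $\alpha$ lies in $[0,1)$, as required by the lemmas and consistent with the poly-time case $\alpha = 1$ of Theorem~\ref{thm:poly-case}: since $\cos\!\left(\frac{\pi}{m}\right) < 1$ we have $m - 1 + \cos\!\left(\frac{\pi}{m}\right) < m \leq n$, giving $\alpha < 1$, and the argument is at least $1$ for $m \geq 2$, giving $\alpha \geq 0$. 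With correctness and polynomiality established, NP-hardness of \flamecast on group-equally-spaced circular instances with one intermediate layer and this $\alpha$ follows, and by Lemma~\ref{lemma:np-circle-canonical-large-alpha} the same conclusion holds for every $\alpha$ in the range $\left[\log_n\!\left(m - 1 + \cos\!\left(\frac{\pi}{m}\right)\right), 1\right)$.
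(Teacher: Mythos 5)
Your proposal is correct and takes essentially the same route as the paper: compose Lemma~\ref{lemma:np-circle-correctness-large-alpha} with Lemma~\ref{lemma:np-circle-canonical-large-alpha} (any valid layout of at most the canonical cost must be canonical, hence the decision threshold $\sum_{i=1}^m z_i^\alpha$ answers the \textsc{3-Partition} instance) and observe that the reduction is polynomial. Your explicit appeal to the \emph{strong} NP-hardness of \textsc{3-Partition} to justify that $n = kt$ sources can be constructed in polynomial time, and your check that the threshold $\alpha$ lies in $[0,1)$, are details the paper leaves implicit but are fully consistent with its argument.
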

The following observation describes the behavior of $\alpha$ as a polynomial function in $n$.
\begin{observation}
  \label{obs:alpha-asymptotic}
  The function $f(n) = \log_n(n - 1 + \cos\left(\frac{\pi}{n}\right))$ asymptotically behaves like $1 - \frac{1}{O(n^3 \log(n))}$ for $n \to \infty$.
\end{observation}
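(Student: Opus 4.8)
The plan is to reduce the statement to a routine asymptotic expansion. By the change-of-base formula I would write
\begin{equation*}
  f(n) = \log_n\!\left(n - 1 + \cos\!\left(\tfrac{\pi}{n}\right)\right) = \frac{\ln\!\left(n - 1 + \cos\!\left(\tfrac{\pi}{n}\right)\right)}{\ln n},
\end{equation*}
so that it suffices to control the numerator. First I would Taylor-expand the cosine around $0$, using $\cos(x) = 1 - \tfrac{x^2}{2} + O(x^4)$ with $x = \pi/n$, to obtain $\cos(\pi/n) = 1 - \tfrac{\pi^2}{2n^2} + O(n^{-4})$. Substituting this into the argument of the logarithm and factoring out $n$ gives
\begin{equation*}
  n - 1 + \cos\!\left(\tfrac{\pi}{n}\right) = n - \frac{\pi^2}{2n^2} + O(n^{-4}) = n\left(1 - \frac{\pi^2}{2n^3} + O(n^{-5})\right).
\end{equation*}

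Next I would take logarithms, using $\ln(1 + u) = u + O(u^2)$ with $u = -\tfrac{\pi^2}{2n^3} + O(n^{-5})$ (so that $u^2 = O(n^{-6})$ is absorbed into the error), to get
\begin{equation*}
  \ln\!\left(n - 1 + \cos\!\left(\tfrac{\pi}{n}\right)\right) = \ln n - \frac{\pi^2}{2n^3} + O(n^{-5}).
\end{equation*}
Dividing by $\ln n$ then yields
\begin{equation*}
  f(n) = 1 - \frac{\pi^2}{2 n^3 \ln n} + O\!\left(\frac{1}{n^5 \ln n}\right),
\end{equation*}
so that $1 - f(n) = \frac{\pi^2}{2 n^3 \ln n}\,(1 + o(1)) = \Theta\!\left(\frac{1}{n^3 \ln n}\right)$, which is exactly the claimed behaviour $1 - \frac{1}{O(n^3 \log n)}$; the base of the logarithm is irrelevant inside the $\Theta$ since $\log n = \Theta(\ln n)$.

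The computation is elementary, so the only point requiring a little care is the bookkeeping of error terms, and in particular making sure that after factoring out $n$ the perturbation is of order $n^{-3}$ rather than $n^{-2}$. This is the crux: the $-1$ in $n - 1$ must cancel against the leading constant $1$ in the cosine expansion, and it is precisely this cancellation that produces the cubic rather than quadratic denominator. Beyond verifying that cancellation I do not anticipate any genuine obstacle, as the remaining steps are standard first-order expansions of $\cos$ and $\ln$.
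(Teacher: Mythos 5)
Your proof is correct, and it reaches the same quantitative conclusion as the paper (namely $1 - f(n) \sim \frac{\pi^2}{2\,n^3 \ln n}$), but by a genuinely different and more direct route. The paper first rewrites $1 - f(n) = \frac{1}{h(n)}$ with $h(n) = \log(n)/\log\bigl(\tfrac{n}{n - 1 + \cos(\pi/n)}\bigr)$, then proves $\lim_{n\to\infty} n^3\log(n)/h(n) = \tfrac{\pi^2}{2}$ by squeezing $\cos(\pi/n)$ between two truncations of its alternating series (a two-term lower bound and a three-term upper bound) and applying L'Hospital's rule separately to each of the resulting limits. You instead do a single forward expansion: Taylor-expand the cosine, factor out $n$, and expand $\ln(1+u)$, carrying explicit $O(\cdot)$ error terms throughout. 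This avoids L'Hospital entirely, avoids the need for matching upper and lower bounds, and yields a sharper statement with an explicit second-order error $O\bigl(\tfrac{1}{n^5\ln n}\bigr)$, whereas the paper's argument only certifies the limit. The cancellation you flag as the crux --- the $-1$ in $n-1$ absorbing the leading $1$ of $\cos(\pi/n)$ so that the perturbation is of order $n^{-3}$ rather than $n^{-2}$ --- is exactly the same cancellation that occurs implicitly in the paper when the series bounds are substituted (their expressions collapse to $n - \tfrac{\pi^2}{2n^2} + \cdots$); making it explicit is the cleaner way to see where the cubic rate comes from. One cosmetic point: since the paper's statement uses $\log$ without a fixed base, your closing remark that the base only shifts constants inside the $\Theta$ is worth keeping, as it is what makes the change to $\ln$ harmless.
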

\begin{proof}
  We first rewrite $f(n)$ to
  \begin{equation*}
    1 - \left(1 - \frac{\log(n - 1 + \cos\left(\frac{\pi}{n}\right))}{\log(n)}\right) = 1 - \frac{\log(n) - \log(n - 1 + \cos\left(\frac{\pi}{n}\right))}{\log(n)} = 1 - \frac{\log\left(\frac{n}{n - 1 + \cos\left(\frac{\pi}{n}\right)}\right)}{\log(n)}
  \end{equation*}
  and show for $g(n) = n^3\log(n)$ and $h(n) = \frac{\log(n)}{\log\left(\frac{n}{n - 1 + \cos\left(\frac{\pi}{n}\right)}\right)}$ that
  \begin{equation*}
    \lim_{n \to \infty} \frac{g(n)}{h(n)} = \lim_{n \to \infty} n^3 \cdot \log\left(\frac{n}{n - 1 + \cos\left(\frac{\pi}{n}\right)}\right)
  \end{equation*}
  is constant.
  For this, we use the series expansion of $\cos\left(\frac{\pi}{n}\right)$ at $\infty$, which is
  \begin{equation*}
    \sum _{k=0}^{\infty } \frac{(-1)^k \pi ^{2 k} \left(\frac{1}{n}\right)^{2 k}}{(2 k)!} .
  \end{equation*}
  This provides a lower bound if we take the first two terms
  \begin{equation*}
    \cos\left(\frac{\pi}{n}\right) \geq 1 - \frac{\pi^2}{2n^2}
  \end{equation*}
  and an upper bound if we take the first three terms
  \begin{equation*}
    \cos\left(\frac{\pi}{n}\right) \leq 1 - \frac{\pi^2}{2n^2} + \frac{\pi^4}{24n^4} .
  \end{equation*}
  Using the lower bound, we get
  \begin{align*}
    \lim_{n \to \infty} \frac{g(n)}{h(n)} &= \lim_{n \to \infty} n^3 \cdot \log\left(\frac{n}{n - 1 + \cos\left(\frac{\pi}{n}\right)}\right)
    \leq \lim_{n \to \infty} n^3 \cdot \log\left(\frac{n}{n - 1 + 1 - \frac{\pi^2}{2n^2}}\right) \\
    &= \lim_{n \to \infty} n^3 \cdot \log\left(\frac{n}{n - \frac{\pi^2}{2n^2}}\right)
    = \lim_{n \to \infty} \frac{\log\left(\frac{n}{n - \frac{\pi^2}{2n^2}}\right)}{\frac{1}{n^3}} .
    \intertext{Since both the numerator and the denominator go to $0$ for $n \to \infty$ and both are differentiable for $n > 0$, we apply L'Hospital's rule to obtain}
    &= \lim_{n \to \infty} \frac{\frac{3\pi^2}{\pi^2n - 2n^4}}{\frac{-3}{n^4}} = \lim_{n \to \infty} \frac{3\pi^2n^4}{6n^4-3\pi^2n} = \frac{\pi^2}{2} .
  \end{align*}
  We use the upper bound analogously and get
    \begin{align*}
    \lim_{n \to \infty} \frac{g(n)}{h(n)} &= \lim_{n \to \infty} n^3 \cdot \log\left(\frac{n}{n - 1 + \cos\left(\frac{\pi}{n}\right)}\right) \\
    &= \lim_{n \to \infty} n^3 \cdot \log\left(\frac{n}{n - \frac{\pi^2}{2n^2} + \frac{\pi^4}{24n^4}}\right)
    = \lim_{n \to \infty} \frac{\log\left(\frac{n}{n - \frac{\pi^2}{2n^2} + \frac{\pi^4}{24n^4}}\right)}{\frac{1}{n^3}} .
    \intertext{Again, we apply L'Hospital's rule to obtain}
    &= \lim_{n \to \infty} \frac{\frac{5 \pi ^4-36 \pi ^2 n^2}{24 n^6-12 \pi ^2 n^3+\pi ^4 n}}{\frac{-3}{n^4}} = \lim_{n \to \infty} \frac{36\pi^2n^6-5\pi^4n^4}{72n^6 - 36\pi^2n^3 + 3\pi^4n} = \frac{\pi^2}{2} .
  \end{align*}
  As the bounds match, we have
  \begin{equation*}
    \lim_{n \to \infty} \frac{g(n)}{h(n)}= \frac{\pi^2}{2} ,
  \end{equation*}
  which is a constant.
\end{proof}

\subsubsection{Constant \texorpdfstring{\boldmath$\alpha$}{𝛼}}
\label{sec:hardn-circ-inst-const-alpha}

Let $\alpha$ be fixed but arbitrary in the following.
The offset $\hat{c}$ is suitably chosen for every instance $I_Z$, depending on the \textsc{3-Partition} instance $Z$.
In the following, we show that every optimal layout is canonical if we choose $\hat{c}$ large enough.
For this, we define operations that make a non-canonical layout canonical and prove that each operation strictly reduces the cost.
\begin{description}
    \item[Operation 1] Let $A$ be a source group, where a subset $A_v \subseteq A$ is connected to a Steiner vertex $v$ and a subset $A_w \subseteq A$ is connected to a Steiner vertex $w \neq v$.
    Either reconnect each source in $A_v$ to $w$ or vice versa.
    \item[Operation 2] Let $v$ be a Steiner vertex connected to two distinct source groups $A$ and $B$.
    Add a Steiner vertex $v_A$ at the location of $A$, connect $v_A$ to a sink and reconnect $A$ to $v_A$.
    Repeat for $B$, adding a Steiner vertex $v_B$ at the location of $B$.
  \end{description}
We apply each type of operation exhaustively before moving to the next one.
It is clear that the resulting layout is canonical.
The following lemma shows how much the cost difference between a canonical and a non-canonical layout is.
%
\begin{lemma}
  \label{lemma:circular-cost-diff}
  Let $\hat{c} \geq \max\left(\sqrt[1-\alpha]{2m}, \frac{t}{2}\right)$.
  The cost of any non-canonical layout for $I_Z$ is more than $\frac{1}{m}$ higher than the canonical cost.
\end{lemma}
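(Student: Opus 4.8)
The plan is to reuse the two operations and show that turning any non-canonical layout into a canonical one never increases the cost, while at least one applied operation lowers it by more than $\frac1m$. I would first pass to the cost-optimal embedding of the given topology, which can only decrease the cost; since all sinks coincide at the center $O$ and all groups sit on the unit circle, \cref{lem:properties:majority-weber-point} places every Steiner vertex that serves a single group exactly at that group (the $a$ unit-load source edges carry total weight $a\ge a^\alpha$, the weight of the single edge toward $O$). Hence, after this normalization, a layout fails to be canonical only if some group is split over two Steiner vertices or some Steiner vertex serves two distinct groups. I would then apply Operation~1 exhaustively; by \cref{lemma:no-split-op} and \cref{cor:no-split} this never increases the cost and afterwards the sources of each group share a parent.

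The quantitative heart is Operation~2. Suppose a Steiner vertex $v$ at distance $d_t=\dist(v,O)$ serves two full groups $A$ and $B$ (and possibly others), and write $a=|A|$, $b=|B|$, $d_A=\dist(A,v)$, $d_B=\dist(B,v)$. The unit-load source edges from $A$ and $B$ to $v$ contribute $a\,d_A+b\,d_B$, whereas after moving $A$ and $B$ onto their own vertices at $p_A,p_B$ (each at distance $1$ from $O$) the two replacement edges cost only $a^\alpha+b^\alpha$, and the remaining edge out of $v$ can only get cheaper (or vanishes). Thus the decrease is at least $a\,d_A+b\,d_B-a^\alpha-b^\alpha$. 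Using $a,b>\hat c$, the triangle inequality $d_A+d_B\ge\dist(A,B)$, and the fact that any two of the $m$ evenly spaced groups are at chord distance at least $2\sin\!\big(\frac{\pi}{m}\big)$, this is at least $2\hat c\sin\!\big(\frac{\pi}{m}\big)-a^\alpha-b^\alpha$.

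To finish, I would bound $a^\alpha+b^\alpha<2^{1+\alpha}\hat c^\alpha\le 4\hat c^\alpha$, which is legitimate because $\hat c\ge\frac t2$ forces every group size below $2\hat c$. The hypothesis $\hat c\ge\sqrt[1-\alpha]{2m}$, i.e. $\hat c=\hat c^{1-\alpha}\hat c^\alpha\ge 2m\,\hat c^\alpha$, turns the detour term into $2\hat c\sin\!\big(\frac{\pi}{m}\big)\ge 4m\sin\!\big(\frac{\pi}{m}\big)\hat c^\alpha$; since $m\sin\!\big(\frac{\pi}{m}\big)$ increases with minimum $\frac{3\sqrt3}{2}$ at $m=3$, the decrease exceeds $(6\sqrt3-4)\hat c^\alpha>6>\frac1m$. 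In the complementary case where no vertex serves two groups, the only remaining defect is a split group, whose two co-located parents contribute $a_v^\alpha+a_w^\alpha$ instead of the canonical $a^\alpha$; minimizing over the split ($a_v=1$) yields a decrease of at least $1-\big(a^\alpha-(a-1)^\alpha\big)$, which is close to $1$ and in particular exceeds $\frac1m$ because $a\ge\hat c$ is large. As every operation is non-increasing and at least one drops the cost by more than $\frac1m$, every non-canonical layout costs more than the canonical cost plus $\frac1m$. The main obstacle is precisely the Operation~2 estimate: one must make the detour term, which is linear in the group sizes and hence of order $\hat c$, dominate the subadditive sharing gain of order $\hat c^\alpha$, and this is exactly what the choice $\hat c\ge\sqrt[1-\alpha]{2m}$ secures.
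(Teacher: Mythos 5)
Your overall strategy mirrors the paper's: apply the unify operation (Operation~1) exhaustively, then the split operation (Operation~2), argue every step is non-increasing, and exhibit one step whose decrease exceeds $\frac{1}{m}$. Your Operation~2 estimate is essentially the paper's and is sound (the paper obtains a decrease of at least $2$; you obtain roughly $(6\sqrt{3}-4)\hat{c}^\alpha$). The genuine gap lies in the complementary branch, where the layout becomes canonical through Operation~1 alone. There you assume that a split group $A$ has \emph{two co-located parents serving only $A$}, giving a decrease of about $1-\bigl(a^\alpha-(a-1)^\alpha\bigr)$. But Operation~1 can also be the final operation when one of the two parents, say $w$, additionally serves a \emph{full second group} $B$: moving $A_w$ to $v$ leaves $w$ serving exactly $B$, so the layout becomes canonical without Operation~2 ever being applied. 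Your bound does not cover this configuration, and it cannot be rescued by your Operation~2 estimate either, since that estimate requires both groups at the shared vertex to have size at least $\hat{c}$, whereas here $|A_w|$ can be $1$; in fact, splitting $A_w$ off $w$ onto its own Steiner vertex would \emph{increase} the cost by roughly $1-\dist(A,B)>0$, so the only viable move is the reunification, whose gain must be bounded directly.

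This missing subcase is precisely the bottleneck that produces the $\frac{1}{m}$ in the statement, and it is where the paper spends most of its effort: it first shows via Lemma~\ref{lem:properties:majority-weber-point} and $\hat{c}^{1-\alpha}\geq 2m$ that $w$ must be located at $B$'s position, then writes out the exact cost difference $(|A|-1)^\alpha+\dist(A,B)+(|B|+1)^\alpha-|A|^\alpha-|B|^\alpha$, and finally bounds $|A|^\alpha-(|A|-1)^\alpha\leq\alpha\hat{c}^{\alpha-1}=\frac{\alpha}{2m}<\frac{1}{m}$ by the mean value theorem, so that with $\dist(A,B)>\frac{2}{m}$ the decrease still exceeds $\frac{1}{m}$ --- but only barely, not ``close to $1$'' as in your easy subcase. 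As written, your argument establishes the claimed gap only for layouts in which no Steiner vertex ever serves a full group together with a fragment of another group, so the proof is incomplete exactly where the constant in the lemma is determined.
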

\begin{proof}
  Let $L$ be a non-canonical layout for $I_Z$.
  Consider the sequence of operations we apply to $L$ as specified.
  We distinguish two cases depending on the type of the last operation before the layout is canonical.
  We denote the layout before the last operation by $L'$.

  We begin with Operation~2 and show that applying Operation~2 reduces the cost by at least $2$.
  If Operation~2 is applied, each group already shares the same parent.
  Let $A$ and $B$ be the groups that are split by the last operation.
  Denote the shared parent of $A$ and $B$ in $L'$ by $v$.
  Then, connecting the sources in $A$ and $B$ to $v$ in $L'$ costs at least
  \begin{align*}
    |A| \cdot \dist(A, v) + |B| \cdot \dist(B, v) &\geq \hat{c} \cdot \dist(A, v) + \hat{c} \cdot\dist(B, v)
    \geq \hat{c} \cdot \dist(A, B) ,
  \end{align*}
  while connecting $A$ and $B$ in $L$ using two distinct Steiner vertices costs $|A|^\alpha + |B|^\alpha$.
  Using basic geometry, the distance between two groups of sources is at least $2\sin(\frac{\pi}{m}) > \frac{2}{m}$.
  Moreover, we know that $|A|, |B| \leq \hat{c} + \frac{t}{2}$ by construction.
  With $\hat{c} \geq \frac{t}{2}$, it is $|A|^\alpha + |B|^\alpha \leq 2 \left(\hat{c} + \frac{t}{2}\right)^\alpha \leq 2 \cdot (2\hat{c})^\alpha$.

  Thus, the cost difference is at least
\begin{align*}
  \hat{c} \cdot \dist(A, B) - |A|^\alpha - |B|^\alpha
  &\geq \hat{c} \cdot \frac{2}{m} - 2 \cdot (2\hat{c})^\alpha \\
  &= \hat{c}^\alpha \cdot \hat{c}^{1 - \alpha} \cdot \frac{2}{m} - 2 \cdot (2\hat{c})^\alpha \\
  &\geq \hat{c}^\alpha \cdot (2m)^{\frac{1 - \alpha}{1-\alpha}} \cdot \frac{2}{m} - 2 \cdot (2\hat{c})^\alpha \\
  &= 4\hat{c}^\alpha - 2 \cdot (2 \hat{c})^\alpha \\
  &\geq \left(4 - 2 \cdot 2^\alpha\right) (2m)^{\frac{\alpha}{1-\alpha}} ,
\end{align*}
which is at least $2$ for $m \geq 1$.
This means that applying Operation~2 reduces the cost by at least $2$.

By Lemma~\ref{lemma:no-split-op}, the cost does not increase by applying Operation~1.
  If Operation~1 is the last operation, then there is a group $A$ in $L'$ that is connected to exactly two distinct Steiner vertices $v$ and $w$.
  Since we assume that only one more operation is needed, it is not possible that both $v$ and $w$ are connected to sources of another group.
  However, it is possible that either $v$ or $w$ is connected to sources of another group in addition to $A$.

  We start with the easier case, where both $v$ and $w$ are only connected to sources in $A$.
  Then, the cost of connecting $A$ in $L'$ is $|A_v|^\alpha + |A_w|^\alpha$, while it is $|A|^\alpha$ after the last operation.
  Thus, the cost difference is $|A_v|^\alpha + |A_w|^\alpha - |A|^\alpha$, which is minimal for $|A_v| = |A| - 1$ and $|A_w| = 1$.
  This means that the cost difference in this case is at least $1 - (|A|^\alpha - (|A| - 1)^\alpha)$.

  In the other case, we assume that $w$ is shared with another group $B$ in $L'$.
  Since we only consider the last operation, we know that in $L'$, all sources in $B$ are connected to $w$, that $v$ is only connected to sources in $A$, and that all sources in $A_w$ are reconnected to $v$.
  This is shown in Figure~\ref{fig:np-circle-constant-alpha-op1}.
    \begin{figure}
    \centering
    \includegraphics[page=4]{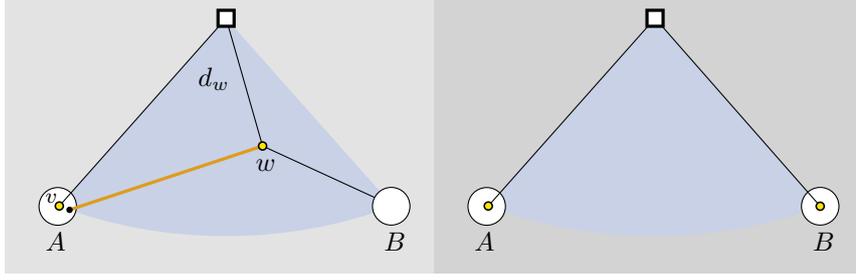}
    \caption{The layout $L'$ before and after the last operation in the sequence, Operation~1.}
    \label{fig:np-circle-constant-alpha-op1}
  \end{figure}
  By Lemma~\ref{lem:properties:majority-weber-point}, in an optimal embedding, $v$ is located at the same position as $A$.
  Let $d_w$ be the distance of $w$ to its sink.
  Then, the total cost for connecting $A$ and $B$ in $L'$ consists of the cost for connecting $A_v$ via $v$ and the cost for connecting $A_w$ and $B$ via $w$, which is
  \begin{align*}
    \cost(L') = |A_v|^\alpha + |A_w| \dist(A, w) + |B| \dist(B, w)  + d_w (|B| + |A_w|)^\alpha.
  \end{align*}
  This is minimal for $|A_w| = 1$ since all sources in $A_w$ are reconnected to $v$ in the last operation.
  Thus, the cost difference between $L'$ and a canonical layout in this case is at least
  \begin{align}
    \label{eq:cost-difference}
    (|A| - 1)^\alpha + \dist(A, w) + |B| \dist(B, w)  + d_w (|B| + 1)^\alpha - |A|^\alpha - |B|^\alpha .
  \end{align}
  Note that if $|B| = 0$ and if $w$ is located at $A$, this covers exactly the case where $w$ is only connected to sources of group $A$.
  Thus, it is sufficient to determine the cost difference in this more general case.
  We first show that for $|A_w| = 1$, $w$ is located at the same position as group $B$, i.e., $\dist(B, w) = 0$ and $d_w = 1$.
  It is
  \begin{align*}
    \sum_{\substack{v \in N(w) \\ \pos(v) = \pos(B)}} \ell(vw)^\alpha = |B| \qquad \text{and} \qquad \sum_{\substack{v \in N(w) \\ \pos(v) \neq \pos(B)}} \ell(vw)^\alpha = 1 + (|B| + 1)^\alpha .
  \end{align*}
  Since
  \begin{align*}
    |B| = |B|^\alpha \cdot |B|^{1- \alpha} \geq |B|^\alpha \cdot \hat{c}^{1-\alpha} \geq |B|^\alpha \cdot 2m \geq (|B| + 1)^\alpha + 1
  \end{align*}
  holds for all $m \geq 1$, the Steiner vertex $w$ shares the same position as $B$ by Lemma~\ref{lem:properties:majority-weber-point}.

  Plugging $d_w = 1$, $\dist(B, w) = 0$ and $\dist(A, w) = \dist(A,B) > \frac{2}{m}$ into Equation~\ref{eq:cost-difference}, the cost difference between $L'$ and the canonical cost is more than
  \begin{align*}
    &(|A| - 1)^\alpha + \frac{2}{m} + (|B| + 1)^\alpha - |A|^\alpha - |B|^\alpha\\
    &= \frac{2}{m} - (|A|^\alpha - (|A| - 1)^\alpha) + (|B| + 1)^\alpha - |B|^\alpha \\
    &\geq \frac{2}{m} - (|A|^\alpha - (|A| - 1)^\alpha) .
  \end{align*}

  We prove $|A|^\alpha - (|A| - 1)^\alpha < \frac{1}{m}$.
  Consider the function $f(x) = x^\alpha$ for $\alpha \in [0, 1)$.
  The derivative of $f(x)$ is $f'(x) = \alpha x^{\alpha - 1}$.
  By the mean value theorem, there is a $z \in (x, x+1)$ such that $f'(z) = \frac{f(x + 1) - f(x)}{x + 1 - x} = f(x + 1) - f(x)$.
  Let $z \in (|A| - 1, |A|)$ such that $f'(z) = f(|A|) - f(|A| - 1)$.
  Since $f'$ is strictly decreasing, it is
  \begin{align*}
    |A|^\alpha - (|A| - 1)^\alpha = f'(z) \leq f'(|A| - 1) \leq f'(\hat{c}) = \alpha \cdot (2m)^{\frac{\alpha - 1}{1 - \alpha}} = \frac{\alpha}{2m} < \frac{1}{m} .
  \end{align*}
  Thus, if the last operation is Operation~1, then the cost is reduced by at least $\frac{2}{m} - (|A|^\alpha - (|A| - 1)^\alpha) > \frac{2}{m} - \frac{1}{m} = \frac{1}{m}$.

  In total, the cost difference between a non-canonical layout and a canonical layout is more than $\frac{1}{m}$.
\end{proof}
It follows from the previous lemma that any cost-optimal layout for $I_Z$ is canonical, and thus, the reduction is correct if we choose a sufficiently large value for $\hat{c}$.
For constant $\alpha$, the reduction is polynomial since the size of $\hat{c}$ is polynomial in the input.
Combining these findings, we obtain the following theorem.
\begin{theorem}
  The \flamecast problem for fixed but arbitrary $\alpha < 1$ is NP-hard, even if restricted to group-equally-spaced circular instances with one intermediate layer.
\end{theorem}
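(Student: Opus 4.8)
The plan is to obtain the theorem as a direct consequence of Lemma~\ref{lemma:np-circle-correctness-large-alpha} and Lemma~\ref{lemma:circular-cost-diff}, which together certify the correctness of the reduction $Z \mapsto I_Z$ from \textsc{3-Partition}. Given a \textsc{3-Partition} instance $Z$, I would fix the offset to the smallest integer meeting the hypothesis of Lemma~\ref{lemma:circular-cost-diff}, namely $\hat{c} = \max\left(\lceil (2m)^{1/(1-\alpha)}\rceil,\, \lceil t/2 \rceil\right)$, and build $I_Z$ exactly as described above: $k$ sinks of capacity $t + 3\hat{c}$, an intermediate layer of capacity $\tfrac{t}{2} + \hat{c}$, and $m$ groups evenly spaced on the unit circle with group $i$ holding $z_i + \hat{c}$ sources. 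The decision question posed to $I_Z$ is whether it admits a valid layout of cost at most the canonical cost $\sum_{i=1}^m (z_i + \hat{c})^\alpha$.

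First I would establish the equivalence ``$Z$ is a yes-instance if and only if $I_Z$ admits a valid layout of cost at most the canonical cost.'' The forward direction is immediate from Lemma~\ref{lemma:np-circle-correctness-large-alpha}: a solution of $Z$ yields a valid canonical layout, whose cost is exactly the canonical cost. For the converse, suppose $I_Z$ has a valid layout $L$ of cost at most the canonical cost. By Lemma~\ref{lemma:circular-cost-diff}, every non-canonical layout costs strictly more than the canonical cost (by more than $\tfrac{1}{m}$), so $L$ must be canonical; Lemma~\ref{lemma:np-circle-correctness-large-alpha} then extracts the desired partition of $Z$ into triples summing to $t$. This chain is robust because the two lemmas already carry all the geometric and combinatorial work.

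The step I expect to require the most care is verifying that the reduction runs in polynomial time, and this is precisely where the \emph{strong} NP-hardness of \textsc{3-Partition} is indispensable. The size of $I_Z$ is dominated by the total number of sources $\sum_{i=1}^m (z_i + \hat{c}) = kt + m\hat{c}$. For fixed $\alpha < 1$ the factor $(2m)^{1/(1-\alpha)}$ is polynomial in $m$, so $\hat{c}$ is polynomially bounded; and since \textsc{3-Partition} remains NP-hard even under a unary encoding of the integers, $t$ is polynomially bounded in the input length. Hence $kt + m\hat{c}$ is polynomial, and the evenly spaced group positions on the unit circle can be recorded to sufficient precision within the same budget, so the whole construction is polynomial. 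Combining the equivalence with this polynomial bound yields NP-hardness of \flamecast on group-equally-spaced circular instances with one intermediate layer for every fixed $\alpha < 1$.
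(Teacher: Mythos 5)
Your proposal is correct and follows essentially the same route as the paper: it combines Lemma~\ref{lemma:np-circle-correctness-large-alpha} and Lemma~\ref{lemma:circular-cost-diff} to certify the reduction from \textsc{3-Partition}, and uses the strong NP-hardness of \textsc{3-Partition} together with the constancy of $\alpha$ to bound $\hat{c}$ and the instance size polynomially. Your explicit treatment of the polynomial-time verification is in fact slightly more detailed than the paper's own remark.
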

Since $\hat{c}$ grows exponentially if $\alpha$ increases with the input size, we cannot easily generalize this approach to non-constant $\alpha$.

Additionally, the previous lemma also implies that every valid layout for $I_Z$ that corresponds to a no-instance of \textsc{3-Partition} is at least $\frac{1}{m}$ more expensive than the canonical cost.
This gives us a lower bound on an approximation factor of a polynomial approximation algorithm, assuming $\text{P}\neq\text{NP}$.
\begin{theorem}
  \label{thm:circular-approx-hard}
  If $\text{P}\neq\text{NP}$, then there is no polynomial approximation algorithm for group-equally-spaced circular instances with one intermediate layer with approximation factor $1 + \frac{1}{n^2}$, where $n$ is the number of sources.
\end{theorem}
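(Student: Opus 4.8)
The plan is to convert the additive cost gap established in Lemma~\ref{lemma:circular-cost-diff} into a multiplicative gap strictly larger than $1 + \frac{1}{n^2}$, so that any polynomial $\left(1 + \frac{1}{n^2}\right)$-approximation could decide \textsc{3-Partition}. First I would fix $\hat c \ge \max\!\left(\sqrt[1-\alpha]{2m}, \frac{t}{2}\right)$ as in Lemma~\ref{lemma:circular-cost-diff} and denote the canonical cost of $I_Z$ by $\cost^* = \sum_{i=1}^m (z_i + \hat c)^\alpha$. By Lemma~\ref{lemma:np-circle-correctness-large-alpha}, $Z$ is a yes-instance precisely when $I_Z$ admits a valid canonical layout; hence in the yes-case the optimum satisfies $\opt \le \cost^*$, whereas in the no-case every valid layout is non-canonical, so Lemma~\ref{lemma:circular-cost-diff} gives $\opt > \cost^* + \frac{1}{m}$.

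The quantitative heart of the argument is the bound $m \cdot \cost^* \le n^2$, which guarantees $\frac{1/m}{\cost^*} \ge \frac{1}{n^2}$. I would argue it as follows: $n = \sum_{i=1}^m (z_i + \hat c)$ is the total number of sources, and since $\alpha < 1$ and each group has more than one source, $(z_i + \hat c)^\alpha \le z_i + \hat c$, so $\cost^* \le n$ (in fact $\cost^* < n$). Combined with $m \le n$ (there are $m$ groups, each nonempty), this yields $m \cdot \cost^* < n^2$, and consequently
\[
  \frac{\opt_{\mathrm{no}}}{\opt_{\mathrm{yes}}} \ge \frac{\cost^* + \frac1m}{\cost^*} = 1 + \frac{1}{m \cdot \cost^*} > 1 + \frac{1}{n^2}.
\]

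Finally I would assemble the reduction. Assuming a polynomial algorithm that always returns a layout of cost at most $\left(1 + \frac{1}{n^2}\right)\opt$, its output is at most $\left(1 + \frac{1}{n^2}\right)\cost^*$ on any yes-instance, but at least $\opt > \cost^* + \frac{1}{m} \ge \left(1 + \frac{1}{n^2}\right)\cost^*$ on any no-instance, the last step being exactly $m \cdot \cost^* \le n^2$. Comparing the returned cost to the threshold $\left(1 + \frac{1}{n^2}\right)\cost^*$ therefore decides \textsc{3-Partition}; since $\alpha$ is constant, $\hat c$ and hence the whole instance $I_Z$ have polynomial size, so this would place \textsc{3-Partition} in P. The routine parts are the two case bounds on $\opt$, which are immediate from the cited lemmas; the only genuinely delicate step is the estimate $m \cdot \cost^* < n^2$ together with checking that strictness survives, since this is what pins the threshold strictly below the true gap and makes the distinguisher correct.
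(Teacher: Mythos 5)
Your proposal is correct and follows essentially the same route as the paper: both use Lemma~\ref{lemma:np-circle-correctness-large-alpha} and the additive gap of Lemma~\ref{lemma:circular-cost-diff}, then convert it to a multiplicative gap via the bounds $\cost^* \leq n$ and $m < n$ (the paper writes this as $\opt \leq n$, $m < n$, giving $1 + \frac{1}{\opt \cdot m} > 1 + \frac{1}{n^2}$). Your write-up merely makes explicit the threshold comparison and the polynomiality of $\hat{c}$, which the paper leaves implicit.
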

\begin{proof}
  Let $\opt$ be the canonical cost.
  By \cref{lemma:circular-cost-diff}, if $Z$ is a no-instance, then every valid layout for $I_Z$ costs more than $\opt + \frac{1}{m}$, where $m$ is the number of source groups.
  Since $\opt \leq n$ and $m < n$, we get
   \begin{align*}
    \frac{\opt + \frac{1}{m}}{\opt} = 1 + \frac{1}{\opt \cdot m} > 1 + \frac{1}{n^2} .
  \end{align*}

  This means that if there is a polynomial approximation algorithm with approximation factor $1 + \frac{1}{n^2}$, then 3-\textsc{Partition} is solvable in polynomial time.
\end{proof}

\subsubsection{Discussion}
We have shown for both sufficiently large $\alpha$ that increases depending on the number of sources and for constant $\alpha$ that \flamecast on circular instances is NP-hard.
Both reductions boil down to the fact that any optimal layout is canonical for the constructed instances, i.e., that sources of distinct groups are connected to distinct Steiner vertices.
Although the construction for both types of $\alpha$ is similar, the reasoning why any optimal layout is canonical is very different.
Very roughly speaking, sharing an edge for sufficiently large $\alpha$ is not helpful enough that it is worth a detour.
On the other hand, for constant $\alpha$ and sufficiently large groups of sources, sharing is so helpful that we want to share edges as soon as possible, i.e., immediately at a group.
This difference in arguments explains the gap that exists for instances with $\alpha$ that grows depending on $n$ but not sufficiently fast.
In this gap, it may be beneficial to accept a small detour to share an edge with more sources.

\subsection{Solving Circular and Convex Instances}
So far we have seen that \flamecast is NP-hard to approximate within a certain factor even when restricted to circular instances.
This leaves the question of whether there are non-trivial families of instances that can be solved efficiently.

In this section, we propose dynamic programs for source-equally-spaced circular instances with one intermediate layer and one sink, as well as for convex instances with one intermediate layer, one sink, unlimited capacity and $\alpha = 0$.
In both cases, we show that for every feasible instance, there is an optimal layout with certain structural properties.
For all convex instances, we obtain a natural ordering of the sources by going counterclockwise around the convex hull.
This yields a \emph{cyclic ordering} with no clear start and end.
We can derive a \emph{linear ordering} from the cyclic order by designating the start source.
In a layout of a convex instance with one intermediate layer, we call a Steiner vertex $v$ \emph{$k$-consecutive} with respect to a given ordering if the children of $v$ can be partitioned into at most $k$ sequences of consecutive sources. 
A layout $L$ is \emph{$k$-consecutive} if every Steiner vertex in $L$ is $k$-consecutive.
Figure~\ref{fig:3-consecutive} shows an example of a $3$-consecutive layout.

We show that every source-equally-spaced circular instance with one intermediate layer and one sink has an optimal layout that is $1$-consecutive with respect to the cyclic order by showing that such a layout exists for every linear ordering.
Building on this property, we propose a dynamic program for such instances.

However, this property does not directly translate to convex instances in general.
In fact, there is a convex instance with one intermediate layer, one sink, unlimited capacity and $\alpha = 0$, where no optimal layout is $2$-consecutive with respect to any cyclic ordering.
However, every such instance has an optimal layout that is $3$-consecutive with respect to the cyclic order.

\begin{figure}
  \centering
  \includegraphics[page=6]{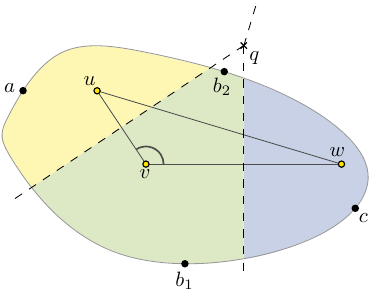}
  \caption{An example for a $3$-consecutive layout for an instance with one intermediate layer and one sink.}
  \label{fig:3-consecutive}
\end{figure}
Using this property, we formulate a dynamic program for convex instances with one intermediate layer, one sink, unlimited capacity and $\alpha = 0$.

\subsubsection{Solving Source-Equally-Spaced Circular Instances with One Intermediate Layer and One Sink}
\label{sec:convex:circle}
Compared to the previous hardness results for group-equally-spaced circular instances with one intermediate layer, we now make the two additional assumptions that there is a single sink $t$ and that the instance is additionally source-equally-spaced.
This means that the angle between adjacent sources along the circle is $\frac{2\pi}{n}$, where $n$ is the number of sources.

We first show that there is an optimal layout for every source-equally-spaced instance $I$ with one intermediate layer and one sink where every Steiner vertex is $1$-consecutive.
We denote the optimal cost of connecting $k$ consecutive sources to the sink via a single Steiner vertex in a layout for $I$ by $w_k$.
Note that this is well-defined in the sense that $w_k$ is the same for every interval of $k$ consecutive sources since the sources are evenly spaced.

%
\begin{lemma}
  \label{lem:convex:circular:contiguous}
  For every feasible source-equally-spaced circular instance $I = (S, \set{t}, C, \alpha)$ with one intermediate layer and one sink and every linear ordering of $S$, there is an optimal valid layout that is $1$-consecutive.
\end{lemma}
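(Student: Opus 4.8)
The plan is to fix the given linear ordering $s_1,\dots,s_n$ of $S$ and to transform an arbitrary optimal valid layout into a \emph{crossing-free} one of no larger cost; a crossing-free layout is automatically $1$-consecutive. The key observation is that, since there is a single sink $t$ at the centre of the circle, every Steiner vertex attaches to $t$, so the whole layout is one tree rooted at $t$ whose leaves are exactly the sources on the circle. If no two edges cross, this is a plane rooted tree, and the leaves of every subtree form a contiguous arc in the cyclic order around the circle. As the subtree of a Steiner vertex consists precisely of its children, this forces every Steiner vertex to be $1$-consecutive with respect to the cyclic order.

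I would first reduce the claim to producing a \emph{single} optimal layout that is $1$-consecutive for the cyclic order, and then recover the statement for an arbitrary linear order using equal spacing. Rotating a layout by a multiple of $2\pi/n$ maps sources onto sources and fixes $t$, hence yields another valid layout of the same cost. If the cyclic-consecutive layout has at least two groups, one boundary between consecutive arcs can be rotated onto the gap where the given linear order is cut, so that no group wraps around the cut and the layout is $1$-consecutive for that linear order; a single all-encompassing group is just $\set{s_1,\dots,s_n}$ and is already consecutive for every cut. This is exactly why the statement can be established for \emph{every} linear ordering.

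To obtain the cyclic-consecutive optimum, among all optimal valid layouts I would pick one minimizing the potential $\Phi=\sum_v \mathrm{span}(v)$, the total cyclic span of the Steiner vertices, where each source attached directly to $t$ counts as a singleton of span $0$; note $\Phi$ is a nonnegative integer. By \cref{lem:properties:no crossing} the edges incident to distinct sources do not cross (the sources here are pairwise distinct), so the only crossings left correspond to two groups interleaving on the circle, i.e. to some Steiner vertex having a source of another group inside its cyclic span. If the layout is not yet $1$-consecutive, such an interleaving pair $v,w$ exists, and I would re-partition the combined child set $N(v)\cup N(w)$ into two cyclically consecutive arcs of the \emph{same two sizes} and reconnect them to $v$ and $w$. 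Since the two loads are preserved, the $\ell(\cdot)^\alpha$ weights and hence all Steiner-to-sink edge costs are unchanged, so it suffices to compare the source-edge lengths at the fixed positions of $v$ and $w$ and afterwards re-embed each Steiner vertex at its Weber point (\cref{lem:properties:convex steiner}), which can only decrease the cost. Such a re-partition keeps the loads, hence preserves validity, and strictly decreases $\Phi$ only at $v$ and $w$, contradicting minimality once we know it does not increase the cost.

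The hard part will therefore be this geometric exchange inequality: that some cyclically consecutive re-partition of $N(v)\cup N(w)$ has total source-edge length no larger than the interleaving one. Here the tools I would lean on are the non-crossing property, which forces the fans of chords from $v$ and from $w$ to their children to be nested or disjoint and thereby restricts how the two arcs may overlap, and the fact that the perpendicular bisector of the segment $vw$ meets the circle in two points, cutting it into two arcs, so that the cheaper swap across this bisector moves a boundary source from its farther to its nearer Steiner vertex without lengthening any other edge. Intuitively, the connection cost of a group grows with its angular spread, and equal spacing—which makes the per-group cost $w_k$ well defined and lets the candidate re-partitions be compared uniformly—quantifies this so that concentrating each group into an arc cannot cost more. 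Establishing this inequality cleanly, together with the routine bookkeeping for the degenerate configurations excluded by \cref{lem:properties:no crossing} (collinear source–parent–sink triples and sources joined directly to $t$), is where the real work lies; termination is then immediate because $\Phi$ is a bounded integer that strictly decreases at each exchange.
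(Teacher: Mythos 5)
Your proposal contains two genuine gaps. The first you name yourself: the ``geometric exchange inequality'' --- that for two Steiner vertices $v,w$ at fixed positions whose child sets interleave, some re-partition of $N(v)\cup N(w)$ into two consecutive arcs of the \emph{same} two sizes has total source-edge length no larger than the original assignment --- is the entire content of the lemma, and it is only motivated, not proved. The bisector heuristic does not close it: the perpendicular bisector of $vw$ need not split $N(v)\cup N(w)$ into parts of sizes $\abs{N(v)}$ and $\abs{N(w)}$, and once sources have to be moved across the bisector to restore the sizes (which you must do to keep the loads, hence validity and the $\ell(\cdot)^\alpha$ factors, unchanged), the ``move each boundary source to its nearer Steiner vertex'' comparison no longer applies. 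The second gap is unacknowledged: \cref{lem:properties:no crossing} only forbids crossings between two \emph{source--parent} edges, and this rules out interleaving but not \emph{nesting}. A layout in which all children of $w$ lie strictly inside the gap between two children of $v$ (or in which a single source connected directly to $t$ sits inside $v$'s span) has no crossing source edges and no interleaving pair $a_1,b_1,a_2,b_2$, yet is not $1$-consecutive; the offending crossing involves a Steiner-to-sink edge, which the lemma does not cover. Since your exchange step is triggered only by interleaving pairs, the potential argument never repairs such configurations, so even granting the exchange inequality, the minimizer of $\Phi$ need not be $1$-consecutive.

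The paper's proof avoids both issues by being global rather than local. It first shows a per-vertex lower bound: for any Steiner vertex with $k$ children, the cost of its incident edges is at least $w_k$, the (by equal spacing, well-defined) optimal cost of attaching $k$ \emph{consecutive} sources to one Steiner vertex. The argument is a swap at a single vertex, needing no partner: since $\dist(v,s)=\sqrt{r^2+\dist(v,t)^2-2r\dist(v,t)\cos\beta}$ is increasing in the angle $\beta$ between $v$ and $s$ at $t$, a skipped source lying angularly between two children is strictly closer to $v$ than one of those two children, so exchanging them strictly decreases the incident cost; hence the cheapest choice of $k$ children is consecutive. Then, instead of repairing the given optimal layout, the paper rebuilds one: if the optimal layout has Steiner vertices with $k_1,\dots,k_m$ children, its cost is at least $\sum_i w_{k_i}$, and tiling the given linear ordering by consecutive intervals of sizes $k_1,\dots,k_m$ yields a valid layout of cost exactly $\sum_i w_{k_i}$, which is therefore optimal and $1$-consecutive. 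Equal spacing guarantees every block of $k_i$ consecutive sources really costs $w_{k_i}$, and because the tiling works along any linear ordering, the ``for every linear ordering'' quantifier comes for free, without your rotation step. If you wish to salvage your local-exchange scheme, this per-vertex swap is the missing engine, but the global rebuild makes the potential function, the crossing analysis, and the nesting case all unnecessary.
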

\begin{proof}
    \begin{figure}
    \centering
    \includegraphics[page=5]{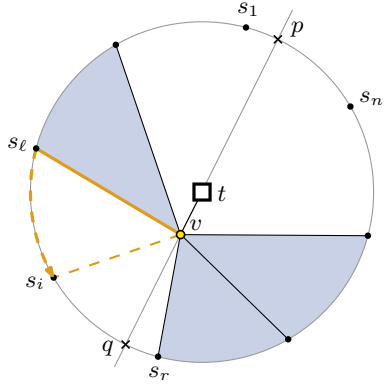}
    \caption{A Steiner vertex $v$ that is not $1$-consecutive. The cost decreases if $v$ is connected to $s_i$ instead of $s_\ell$.}
    \label{fig:circle-1-consecutive}
  \end{figure}
  We first show that the cost for a Steiner vertex $v$ with $k$ children is at least $w_k$, which is the optimal cost for a $1$-consecutive Steiner vertex with $k$ children.
  Let $L$ be a layout of $I$, and let $v$ be a Steiner vertex with $k$ children in $L$.
  We assume that the embedding cost for $v$ is optimal.
  If $v$ is $1$-consecutive, we are done.
  Otherwise $v$ is $k$-consecutive for some $k \geq 2$.
  If $v$ is located at the sink, we copy $v$ for each child such that each child has a distinct parent, and we are done.

  Assume in the following that $v$ is not located at the sink.
  Consider the line through $t$ and $v$ that cuts the circle into two halves.
  We denote the intersection with the circle that has larger distance to $v$ by $p$ and the other intersection by $q$.
  Enumerate the sources in $S$ in counterclockwise order starting at the first source after $p$ by $s_1, \ldots, s_n$.
  Since $L$ is not $1$-consecutive, there are three sources $s_\ell, s_i, s_r$ with $\ell < i < r$ such that $s_\ell$ and $s_r$ are children of $v$, but $s_i$ is not.
  Such a situation is depicted in Figure~\ref{fig:circle-1-consecutive}.
  If $s_i$ comes before $q$ along the circle, then we connect $v$ to $s_i$ instead of to $s_\ell$.
  Otherwise, we connect $v$ to $s_i$ instead of $s_r$.
  We claim that in both cases, the cost decreases.
  Assume without loss of generality that $s_i$ comes before $q$ along the circle.
  The distance between $v$ and a source $s \in S$ is $\dist(v, s) = \sqrt{r^2 + \dist(v,t)^2 - 2r\dist(v,t)\cos(\beta)}$, where $r$ is the radius of the circle and $\beta$ is the angle in $t$ between $v$ and $s$.
  As $\dist(v, s)$ decreases with smaller $\beta$ and the corresponding angle is smaller for $s_i$ than for $s_\ell$, it is $\dist(v, s_i) < \dist(v, s_\ell)$.
  Thus, the cost for $v$ in $L$ is not optimal, which is a contradiction.
  This means that a Steiner vertex with $k$ children is cost-optimal if it is $1$-consecutive, i.e., if the cost for the incident edges of $v$ in $L$ is $w_k$.

  These optimal building blocks allow us to construct a new layout $L'$.
  Assume that $L$ is an optimal valid layout for $I$ with $m$ Steiner vertices $v_1, \dots, v_{m}$, where $v_i$ has $k_i$ children for every $i \in [m]$.
  By the previous claim, the cost for $L$ is at least $\sum_{i = 1}^{m} w_{k_i}$.
  The layout $L'$ where $v_i$ is connected to $k_i$ consecutive sources costs exactly $\sum_{i = 1}^{m} w_{k_i}$ and is thus a $1$-consecutive valid layout with optimal cost.
\end{proof}
We can now formulate a dynamic program that solves source-equally-spaced circular instances with one intermediate layer and one sink.
Let $I$ be such an instance with sources $S = \set{s_1, \ldots, s_n}$, one intermediate layer and capacity $c_1$ in the intermediate layer.
Without loss of generality, we assume $c_1 \leq n$.
Note that if the capacity of the sink is less than the number of sources, then $I$ is trivially infeasible.

First, we compute $w_k$ for every $k \in [c_1]$, the cost of connecting $k$ consecutive sources to the same parent.
This gives us building blocks for a straightforward dynamic program that computes the optimal cost for connecting the sources $s_1, \ldots, s_i$ to the sink for increasing $i \in [n]$.
We denote this cost by $\DP[i]$ for every $i \in [n]$.
As our base case, we use the empty interval and clearly $\DP[0] = 0$.
Assume that we know $\DP[i]$ for all $i < j$.
An optimal layout on $s_1,\dots,s_j$ can be broken down into the subtree containing $s_j$ and an optimal layout of the first $j-k$ sources.
Let $k$ be the size of the subtree containing $s_j$.
Clearly, the cost of such a solution is $\DP[j-k] + w_k$.
Thus, the optimal cost to connect the sources $s_1, \ldots, s_j$ is the minimum over all subtree sizes up to the capacity $c_1$, i.e., $\DP[j] = \min_{k \leq c_1} \DP[j - k] + w_k$.
After running the dynamic program, $\DP[n]$ contains the optimal cost to connect all sources in $S$.
Using the values $\DP[i]$, it is easy to reconstruct an optimal valid layout that is $1$-consecutive.
The correctness of this dynamic program follows directly from Lemma~\ref{lem:convex:circular:contiguous}.

However, there is a caveat to this dynamic program, as it relies on determining the cost $w_k$ of connecting consecutive sources to one parent.
This boils down to finding the Weber point, which can only be done numerically.
Assuming that we get an $(1+\varepsilon)$-approximation for $w_k$ in $\Tweber\left(n, \varepsilon\right)$ time, this approximation factor directly translates to the result of our dynamic program, in the sense that it approximates the cost for an optimal layout within a factor of $1 + \varepsilon$.
\begin{lemma}
  \label{lemma:solve-circular-instance}
  Source-equally-spaced circular instances of \flamecast with one intermediate layer and one sink can be approximated within a factor of $1 + \varepsilon$ in $O(n \cdot \Tweber\left(n, \varepsilon\right) + n^2)$ time.
\end{lemma}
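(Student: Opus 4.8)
The plan is to combine the structural guarantee of Lemma~\ref{lem:convex:circular:contiguous} with the interval dynamic program described above, and then to track how the per-block approximation of the Weber-point cost $w_k$ propagates to the cost of the reconstructed layout.

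First I would establish correctness under the idealized assumption that each $w_k$ is known exactly. Fix the linear ordering $s_1, \ldots, s_n$ obtained by cutting the circle at an arbitrary point. By Lemma~\ref{lem:convex:circular:contiguous}, for this ordering there is an optimal valid layout $L^\ast$ that is $1$-consecutive, so each of its Steiner vertices collects a contiguous block of sources and contributes exactly $w_k$, where $k$ is its number of children. Hence the instance optimum $\opt$ equals the minimum, over all partitions of $s_1, \ldots, s_n$ into contiguous blocks of size at most $c_1$, of the sum of the corresponding $w_k$ values. The recurrence $\DP[j] = \min_{k \le c_1} \DP[j-k] + w_k$ with $\DP[0] = 0$ computes precisely this minimum, and the standard back-pointer reconstruction yields a matching $1$-consecutive layout. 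Validity is immediate: the instance is feasible (so the single sink satisfies $c_0 \ge n$) and every block has size at most $c_1$, so no capacity is violated.

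Next I would handle the numerical inexactness. We only have estimates $\tilde w_k$ with $w_k \le \tilde w_k \le (1+\varepsilon) w_k$, and running the recurrence on the $\tilde w_k$ returns some partition $k_1', \ldots, k_{m'}'$. On the one hand, feeding the optimal partition $k_1, \ldots, k_m$ of $L^\ast$ into the approximate recurrence bounds its returned value by $\sum_i \tilde w_{k_i} \le (1+\varepsilon) \sum_i w_{k_i} = (1+\varepsilon)\opt$. On the other hand, the true cost of the returned layout is $\sum_i w_{k_i'} \le \sum_i \tilde w_{k_i'}$, and the right-hand side is exactly the value the DP returns. Chaining the two inequalities gives $\sum_i w_{k_i'} \le (1+\varepsilon)\opt$, so the reconstructed layout is a $(1+\varepsilon)$-approximation of the true optimum. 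The one point that must be checked with care is that a purely multiplicative per-block guarantee survives the additive composition over blocks; it does, precisely because cost is additive over Steiner vertices and the approximation of each $w_k$ is one-sided.

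Finally I would bound the running time. Computing $\tilde w_k$ for every $k \in [c_1]$ requires $c_1$ calls to the Weber-point approximation, which is $O(c_1 \cdot \Tweber(n, \varepsilon)) = O(n \cdot \Tweber(n, \varepsilon))$ using $c_1 \le n$. Each of the $n$ cells of the DP is a minimum over at most $c_1 \le n$ predecessors, for $O(n^2)$ in total, and reconstruction adds $O(n)$. Summing the three contributions yields the claimed $O(n \cdot \Tweber(n, \varepsilon) + n^2)$ bound. I expect the main obstacle to be not any single computation but keeping the approximation-propagation argument airtight, since that is the only place where the numerical inexactness of the Weber point enters the analysis; everything else is the routine correctness and timing analysis of a one-dimensional interval dynamic program, and crucially Lemma~\ref{lem:convex:circular:contiguous} lets us fix a single linear ordering rather than paying an extra factor of $n$ to try all rotations of the circle.
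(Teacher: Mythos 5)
Your proposal is correct and takes essentially the same approach as the paper: it relies on Lemma~\ref{lem:convex:circular:contiguous} to justify the $1$-consecutive interval dynamic program, propagates the one-sided $(1+\varepsilon)$ Weber-point guarantee through the additive recurrence, and arrives at the same $O(n \cdot \Tweber(n, \varepsilon) + n^2)$ time bound. The only cosmetic difference is that the paper establishes the approximation propagation by induction on prefixes, showing $\DP[i] \leq (1+\varepsilon)\DP^*[i]$ for each $i$, whereas you argue globally by comparing the returned partition against the optimal one; the two arguments are interchangeable.
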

\begin{proof}
  We first bound the approximation factor.
  Let $\DP^*[i]$ be the optimal cost for connecting the sources $s_1, \ldots, s_i$,
  and let $w_i^*$ be the cost for connecting $i$ consecutive sources.
  Moreover, let $\DP[i]$ and $w_i$ be the corresponding values computed in our dynamic program.
  We show by induction over $i$ that $\DP[i] \leq (1 + \varepsilon) \DP^*[i]$.
  The values $w_i$ are computed using an approximation algorithm for the Weber problem for $i \in [c]$, i.e., it is $w_i \leq (1 + \varepsilon) \cdot w_i^*$.
  Assume now that every $\DP[i]$ is a $(1+ \varepsilon)$-approximation for $\DP[i]^*$ for every $i < j$.
  Then it is
  \begin{align*}
    \DP[j] &= \min_{k \leq c} \DP[j - k] + w_k = \min_{k \leq c} (1 + \varepsilon) \cdot \DP^*[j - k] + (1 + \varepsilon) \cdot w^*_k \\
  &= (1 + \varepsilon) \cdot \min_{k \leq c} \DP^*[j - k] + w^*_k = (1 + \varepsilon) \cdot \DP^*[j] .
  \end{align*}
  Thus, we have $\DP[n] \leq (1 + \varepsilon) \DP^*[n]$.

  For the running time, the cost for connecting $k$ consecutive sources to one parent takes $\Tweber(k, \varepsilon) \leq \Tweber(c_1, \varepsilon)$ time.
  The dynamic program itself goes over all $i \in [n]$ and considers every possible size of the subtree containing the last source in every step.
  Thus, the dynamic program has a running time of $O(n \cdot c_1)$.
  In the uncapacitated case, this results in a total of $O(n \cdot \Tweber\left(n, \varepsilon\right) + n^2)$ running time.
\end{proof}
Note that the previous lemma uses an algorithm that solves the Weber problem approximately as a black box.
If there is an Weber problem algorithm that determines the Weber point optimally, then our dynamic program is exact.
Cohen et al.~\cite{cohen2016} have shown that a $(1 + \varepsilon)$-approximate solution for the Weber problem in the plane with $n$ points can be computed in $O\left(n\log^3\left(\frac{1}{\varepsilon}\right)\right)$ time.
Thus, we get the following corollary.
\begin{corollary}
  \label{cor:circular-dp-running-time}
  Source-equally-spaced circular instances of \flamecast with one intermediate layer and one sink can be approximated within a factor of $1 + \varepsilon$ in $O\left(n^2\log^3\left(\frac{1}{\varepsilon}\right)\right)$ time, where $n$ is the number of sources.
\end{corollary}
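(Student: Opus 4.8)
The plan is to derive the corollary by instantiating the black-box term $\Tweber(n, \varepsilon)$ in Lemma~\ref{lemma:solve-circular-instance} with a concrete bound. First I would invoke Lemma~\ref{lemma:solve-circular-instance}, which already establishes that the dynamic program approximates the optimal cost within a factor of $1 + \varepsilon$ and runs in $O(n \cdot \Tweber(n, \varepsilon) + n^2)$ time, where $\Tweber(n, \varepsilon)$ denotes the time needed to compute a $(1+\varepsilon)$-approximate Weber point of at most $n$ weighted points in the plane.

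Next I would substitute the result of Cohen et al.~\cite{cohen2016}, who show that such a $(1+\varepsilon)$-approximate geometric median can be computed in $O(n \log^3(1/\varepsilon))$ time, i.e.\ $\Tweber(n, \varepsilon) = O(n \log^3(1/\varepsilon))$. Plugging this into the bound above yields a running time of
\begin{equation*}
  O\!\left(n \cdot n \log^3\!\left(\tfrac{1}{\varepsilon}\right) + n^2\right) = O\!\left(n^2 \log^3\!\left(\tfrac{1}{\varepsilon}\right)\right),
\end{equation*}
where the additive $n^2$ term is absorbed since $\log^3(1/\varepsilon) \geq 1$ for $\varepsilon \leq 1/e$. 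The approximation factor $1+\varepsilon$ carries over unchanged, because the proof of Lemma~\ref{lemma:solve-circular-instance} shows that the quality of the dynamic program matches exactly the quality of the Weber-point subroutine used to compute the building blocks $w_k$.

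Since the argument is a direct substitution into an already established running-time bound, I do not expect any genuine obstacle. The only point worth checking is that the Weber instances arising in the dynamic program really match the setting of Cohen et al., namely that each computation of $w_k$ is a weighted geometric median problem over at most $n$ points in $\mathbb{R}^2$. This holds because $w_k$ is, by definition, the optimal cost of connecting $k \leq c_1 \leq n$ consecutive sources to the sink via a single Steiner vertex, whose optimal position is precisely the Weber point of the $k$ source positions and the sink with the corresponding load-dependent weights, as set up in the preliminaries.
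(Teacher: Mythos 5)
Your proposal is correct and matches the paper's own proof exactly: both instantiate the black-box term $\Tweber(n, \varepsilon)$ in Lemma~\ref{lemma:solve-circular-instance} with the $O\left(n\log^3\left(\frac{1}{\varepsilon}\right)\right)$ bound of Cohen et al.~\cite{cohen2016} and simplify. Your additional checks (absorbing the additive $n^2$ term and confirming the $w_k$ computations are genuinely weighted planar Weber instances) are sound but not spelled out in the paper, which treats the substitution as immediate.
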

As a consequence, we can compute a $(1 + \frac{1}{n^2})$-approximate valid layout in $O(n^2\log^3(n))$ time for source-equally-spaced circular instances, while on general circular instances, such an approximation ratio is impossible to reach if P $\neq$ NP by Theorem~\ref{thm:circular-approx-hard}.

\subsubsection{Structural Properties of Optimal Layouts for Convex Instances}
In this section, we consider structural properties of optimal layouts for convex instances with one intermediate layer, one sink, unlimited capacity and $\alpha = 0$.

We say that two subtrees \emph{interleave} if there is a sequence of sources $a_1,b_1,a_2,b_2$ in cyclic order such that $a_1$ and $a_2$ share a parent $v_a$ and $b_1$ and $b_2$ share a parent $v_b \neq v_a$.
Since by Lemma~\ref{lem:properties:no crossing}, edges incident to sources do not cross in an optimal layout, it follows directly that subtrees do not interleave in an optimal layout for convex instances.

In the following, we investigate the consecutivity of optimal layouts.



\subparagraph*{No Optimal 2-Consecutive Layout}
%
\begin{figure}
  \centering
  \subcaptionbox{
    Instance $I$
    \label{fig:convex:two breaks:base}
  }[.49\textwidth]{
    \includegraphics[page=1]{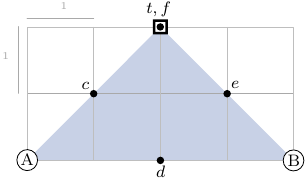}
  }
  \hfill
  \subcaptionbox{
    Optimal layout with cost $4+4\sqrt{2}\approx 9.66$
    \label{fig:convex:two breaks:opt}
  }[.49\textwidth]{
    \includegraphics[page=2]{figures/two_breaks.pdf}
  }
  \hfill
  \subcaptionbox{
    Alternative topology with cost $\approx 9.80$.
    \label{fig:convex:two breaks:first}
  }[.49\textwidth]{
    \includegraphics[page=3]{figures/two_breaks.pdf}
  }
%
  \caption{
    An instance for which no optimal layout is $2$-consecutive.
    For clarity, the edge between $A$ and $t$ is drawn arched, the actual connection is straight.
    The source $f$ coincides with sink $t$ which is marked by an empty square.
    \label{fig:convex:two breaks}
  }
\end{figure}

We demonstrate that there are convex instances with one intermediate layer, one sink, unlimited capacity and $\alpha = 0$ for which there is no $2$-consecutive optimal layout for any cyclic ordering.
In our example instance, the sources are laid out in a triangle.
We place a single source on each side of the triangle and multiple sources in each corner.
Choosing positions suitably, this setup ensures that the sources on the triangle sides connect to a common Steiner vertex in the triangle center and each corner has a separate Steiner vertex for its respective sources.
The construction is detailed in the proof of the following lemma and illustrated in \cref{fig:convex:two breaks}.

\begin{lemma}
  \label{lem:convex:three intervals}
  There is a convex instance $I = (S, \set{t}, (\infty, \infty, \infty), 0)$ of \flamecast with sources $S$ and a single sink $t$ for which any optimal layout contains a Steiner vertex that is not $2$-consecutive.
\end{lemma}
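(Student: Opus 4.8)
The plan is to exhibit one concrete instance and show that, although it admits a cheap $3$-consecutive layout, every $2$-consecutive layout is strictly more expensive, so no optimal layout can be $2$-consecutive. Since $\alpha = 0$ and the capacities are unbounded, the cost of a layout is just the total Euclidean edge length, and for a fixed topology each Steiner vertex may be placed at the geometric median (Weber point with unit weights) of its children together with the sink $t$. The instance I would use is the triangular one sketched in \cref{fig:convex:two breaks}: three \emph{heavy} groups of $g$ coincident sources at the corners of a triangle, a single source on each side, and the sink $t$ together with a coincident source $f$ that keeps all sources on the boundary of $\CH(S \cup T)$; the coordinates are chosen so that the cheapest way to serve the three side sources is through one shared Steiner vertex. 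I would fix $g$ large enough that, by \cref{lem:properties:majority-weber-point}, any vertex receiving a whole corner group is forced to sit at that corner.

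First I would describe a reference layout $L^*$: each corner group connects to a vertex pinned at its corner, each corner vertex connects to $t$, and the three side sources all connect to one shared Steiner vertex $v^*$ placed at the geometric median of the three side sources and $t$, which in turn connects to $t$. A direct computation gives $\cost(L^*) = 4 + 4\sqrt{2}$. In the cyclic order around the convex hull the three side sources are separated by the three corner groups, so the children of $v^*$ lie in three distinct runs: $v^*$ is $3$-consecutive but not $2$-consecutive. In particular, the optimum has cost at most $4 + 4\sqrt{2}$.

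The core of the argument is to show that every $2$-consecutive layout costs strictly more than $4 + 4\sqrt{2}$, which I would phrase as a contradiction: suppose an optimal layout $L$ is $2$-consecutive. Being optimal, $L$ has no crossing source edges by \cref{lem:properties:no crossing}, hence its subtrees do not interleave and every subtree occupies a set of runs in the cyclic order. Moreover, in any optimal layout a corner group is never split, since splitting forces an extra edge to $t$ with no compensating saving when $\alpha = 0$, and by \cref{lem:properties:majority-weber-point} the vertex serving such a group, outweighed by its $g$ coincident sources, sits at the corner. Combining these facts with $2$-consecutivity pins $L$ down to a handful of topologies up to the $3$-fold symmetry of the triangle: at most two side sources can share the shared vertex (two runs separated by one corner group), and the third must either attach to an adjacent corner vertex or connect to $t$ directly. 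For each of these finitely many candidate topologies I would compute the optimal cost via the geometric-median placement and verify it exceeds $4 + 4\sqrt{2}$, the best competitor being the ``two shared, one to a corner'' layout of cost $\approx 9.80$ in \cref{fig:convex:two breaks}. This contradicts $\cost(L) \le \cost(L^*) = 4 + 4\sqrt{2}$.

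The main obstacle I anticipate is this last step: enumerating the surviving $2$-consecutive topologies cleanly and proving the strict cost inequality for each. The enumeration must genuinely use non-crossing and non-interleaving to keep the number of cases small, and the geometry must be chosen so that the gap is strict and robust; in particular one must check that reattaching a side source to a heavy corner vertex leaves that vertex at the corner, so that no hidden saving appears, and that the shared-vertex placement is correctly the Weber point. Producing the exact closed forms such as $4 + 4\sqrt{2}$ and the competitor value $\approx 9.80$ requires fixing convenient coordinates for the triangle, $t$, and $f$, and then carrying out the geometric-median computations, which is where most of the routine but delicate calculation lives.
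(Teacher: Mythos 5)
Your proposal is correct and follows essentially the same route as the paper's proof: the same triangular instance, the same reference layout $L^*$ of cost $4+4\sqrt{2}$ whose central Steiner vertex is $3$- but not $2$-consecutive, the corner groups pinned at their positions via \cref{lem:properties:majority-weber-point}, and a finite case analysis of the remaining attachments of the side sources settled by explicit Weber-point cost computations (the paper's competitor values $2+2\sqrt{2}$ and $1+\sqrt{2}+\sqrt{3}$ are exactly the numbers your deferred step would produce). The only differences are organizational: the paper packages the case analysis as four claims showing every optimal layout has exactly the topology of $L^*$ rather than as a contradiction against $2$-consecutive layouts, and note that its instance has just two heavy corner groups with the sink at the apex, so the available symmetry is a single reflection, not the $3$-fold symmetry you invoke.
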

\begin{proof}
  We describe an instance $I$ that has the desired property.
  The sources $S$ and the sink $t$ are located on an isosceles triangle of height $2$ and base width $4$; see \cref{fig:convex:two breaks:base} for an illustration.
  The sources $c, d$ and $e$ are placed at the center of the sides such that $d$ is at the center of the base and $c$ and $e$ are placed at the center of the left and right leg respectively.
  A group of four sources is placed at each base vertex.
  We denote the groups by $A$ and $B$ where $A$ is on the left and $B$ is on the right vertex of the triangle.
  A single source $f$ and the sink $t$ are both placed at the apex.

  Consider the layout $L^*$ that has four Steiner vertices with children $\set{c, d, e}$, $A$, $B$ and $\set{f}$ respectively, which has cost $4\sqrt2 + 4 \leq 9.66$ (see \cref{fig:convex:two breaks:opt}).
  We show that any other topology costs more than $L^*$ by proving the following claims in the specified order.
  \begin{enumerate}
    \item\label{item:w-sink} In an optimal layout, $f$ is directly connected to the sink.
    \item\label{item:AB-distinct} In an optimal layout, sources in $A$ share a parent $v_A$ that is located at $A$, and sources in $B$ share a parent $v_B \neq v_A$ that is located at $B$.
    \item\label{item:ABxyz-distinct} If a source $s \in \set{c, d, e}$ has $v_A$ or $v_B$ as parent, then reconnecting $s$ directly to $t$ does not increase the cost.
    \item\label{item:xyz-not-sink} In an optimal layout, there is no source $s \in \set{c, d, e}$ that is directly connected to $t$.
  \end{enumerate}
  It follows directly from the last two claims that no vertex in $\set{c, d, e}$ shares a parent with $A$ or $B$ in an optimal layout, and none of them is directly connected to the sink.
  Thus, in an optimal layout, the vertices in $\set{c, d, e}$ share a parent.
  Together with the first two claims, this proves that the topology of any optimal layout equals the topology of $L^*$.

  Claim~\ref{item:w-sink} is obvious.
  For Claim~\ref{item:AB-distinct}, we first note that sources that share the same position also share a parent in an optimal layout by Lemma~\ref{cor:no-split}.
  If $A$ and $B$ share a parent $v_{AB}$, then the cost for connecting $A$ and $B$ to $v_{AB}$ is $|A|\dist(A, v_{AB}) + |B|\dist(B, v_{AB}) \geq 4 \dist(A, B) = 16 > \cost(L^*)$.
  Thus, it is $v_A \neq v_B$ in an optimal layout.
  It remains to show that $v_A$ is located at $A$.
  The Steiner vertex $v_A$ may be connected to at most seven sources $A$, $c$, $d$, and $e$ and one sink $t$.
  Since $|A| = 4$, it follows from Lemma~\ref{lem:properties:majority-weber-point} that $v_A$ is located at $A$.
  By symmetry, the same holds for $v_B$ and $B$.
  To prove Claim~\ref{item:ABxyz-distinct}, we assume that there is a source $s \in \set{c, d, e}$ that is connected to $v_A$ in an optimal layout $L$.
  By construction, it is $\dist(s, t) \leq \dist(s, v_A)$, where $\dist(s, v_A)$ equals $\dist(s, A)$ by the previous claim.
  Thus, the cost of $L$ does not increase if $s$ is reconnected to $t$.
  The same holds if $s$ is connected to $v_B$.

  It remains to show Claim~\ref{item:xyz-not-sink}, i.e., that in an optimal layout, no source in $\set{c, d, e}$ is directly connected to the sink.
  In $L^*$, $c$, $d$ and $e$ share a Steiner vertex $v$, and the cost for the incident edges of $v$ is $4$.
  If $c$, $d$ and $e$ are connected directly to the sink, then the cost for connecting them is $2 + 2\sqrt{2} > 4$, which is not optimal.
  Assume now that $e$ is directly connected to the sink, while $c$ and $d$ share a parent $v_{cd}$ (see \cref{fig:convex:two breaks:first}).
  By the geometry of the Weber point, $v_{cd}$ is located $\tan(\ang{30})=\frac{1}{\sqrt3}$ units from the line $dt$ towards $c$.
  Accordingly, we can compute the partial cost of this layout as $2 \cdot \sqrt{1 + \frac{1}{3}} + (1 - \frac{1}{\sqrt3}) + \sqrt2 = \sqrt2 + \sqrt3 + 1 > 4$.
  Thus, it is also not optimal to only connect $e$ directly to the sink.
  By symmetry, the same holds for $c$.
  Thus, in an optimal layout, $c$ and $e$ share a parent $v_{ce}$ that is not located at $t$.

  Assume now that $d$ is not connected to $v_{ce}$, but directly to $t$.
  By symmetry, $v_{ce}$ is located on the line $dt$.
  Thus, $\dist(d, v_{ce}) < \dist(d, t)$, and it is better to connect $d$ to $v_{ce}$ than to $t$.

  To conclude, $L^*$ is the unique optimal layout of the given instance.
  But then, the parent of $c$, $d$ and $e$ is not $2$-consecutive, regardless of the ordering.

\end{proof}

\subparagraph*{Optimal 3-Consecutive Layout}
While not every convex instance with one intermediate layer, one sink, unlimited capacity and $\alpha = 0$ has an optimal layout that is $2$-consecutive, we show every such instance has an optimal layout that is $3$-consecutive with respect to any cyclic ordering.
This claim follows from the following geometric property of convex instances that restricts the angle between Steiner vertices.

\begin{lemma}
  \label{lem:convex:angle}
  Let $P$ be a convex polygon with boundary $\delta P$ and let $u, v, w \in P$.
  If there are points $a, b_1, c, b_2 \in \delta P$ in this cyclic order such that $a \in \vor(u)$,  $b_1,b_2 \in \vor(v)$, and $c \in \vor(w)$, then the angle $\angle uvw \geq \ang{90}$.
\end{lemma}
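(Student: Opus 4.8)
The plan is to reduce the statement to a bound on the opening angle of the cell $\vor(v)$. I would place $v$ at the origin; then $\angle uvw \ge \ang{90}$ is equivalent to $(u-v)\cdot(w-v)\le 0$, i.e.\ to $v$ lying in the closed disk with diameter $uw$. Working with the three perpendicular bisectors $H_{uv}$, $H_{vw}$ and $H_{uw}$ of the sites $\{u,v,w\}$, which meet at the circumcenter $O$ and cut the plane into the three cells, the cell $\vor(v)$ is the wedge at $O$ bounded by $H_{uv}$ and $H_{vw}$, and a short computation shows its opening angle equals $\ang{180}-\angle uvw$. Hence it suffices to prove that this wedge opens by at most $\ang{90}$.

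The first ingredient is the elementary fact that a line meets the boundary $\delta P$ of a convex polygon in at most two points, so along $\delta P$ the side with respect to any bisector changes at most twice. Applying this to $H_{uv}$: the four points occur in the cyclic order $a,b_1,c,b_2$, with $a$ on the $u$-side (as $a\in\vor(u)$) and $b_1,b_2$ on the $v$-side (as $b_1,b_2\in\vor(v)$). Were $c$ also on the $u$-side, the side pattern $u,v,u,v$ would force four crossings, a contradiction; hence $c$ lies on the $v$-side of $H_{uv}$. Symmetrically, using $H_{vw}$, the point $a$ lies on the $v$-side of $H_{vw}$. Consequently, among the four open quadrants that $H_{uv}$ and $H_{vw}$ carve out around $O$, the points land in only three: $a$ in the quadrant closer to $u$ than $v$ but closer to $v$ than $w$, $c$ in the symmetric quadrant for $w$, and $b_1,b_2$ in $\vor(v)$; the quadrant opposite $\vor(v)$ — points closer to both $u$ and $w$ than to $v$ — contains none of them.

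The second ingredient is the hypothesis $v\in P$. Since $P$ is convex and $a,b_1,c,b_2\in\delta P$ appear in this cyclic order, the directions $a-v,b_1-v,c-v,b_2-v$ inherit the same cyclic order; in particular $v\in\CH(\{a,b_1,c,b_2\})$ and, as seen from $v$, the points $b_1,b_2$ are separated by $a$ on one side and $c$ on the other. I would then combine this with the quadrant placement from the previous step: if $\vor(v)$ opened by more than $\ang{90}$, then $b_1,b_2$ (both inside the wide wedge) together with $a,c$ in the two flanking quadrants could not surround $v$ in the required cyclic order. Concretely, one shows that an opening angle exceeding $\ang{90}$ forces all four points into a closed half-plane bounded by a line through $O$, while $v$ sits strictly on the far side of that line along the wedge bisector, contradicting $v\in\CH(\{a,b_1,c,b_2\})$.

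The main obstacle is exactly this last step, and the delicate point is that the apex of the wedge is the circumcenter $O$, not $v$: the cyclic order of the four points seen from $O$ generally differs from the one seen from $v$, and it is precisely this offset that lets $c$ slip angularly between $b_1$ and $b_2$. I would therefore finish by quantifying the offset — placing $O$ at a convenient position, parametrising each point by its direction from $O$, and tracking how $v\in\CH(\{a,b_1,c,b_2\})$ pins down the admissible directions — and checking that consistency forces the wedge angle below $\ang{90}$. I would stress that the use of $v\in P$ is essential rather than cosmetic: dropping it admits configurations with $\angle uvw<\ang{90}$ satisfying all remaining hypotheses, so any correct argument must genuinely invoke that $v$ (and, by the symmetric bookkeeping, $u$ and $w$) lies in $P$.
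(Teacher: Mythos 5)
Your proposal has a genuine gap, and it sits exactly where you locate the ``main obstacle'': the final step cannot be completed, because the implication you reduce everything to is false. Your argument uses only the cyclic order of $a,b_1,c,b_2$ on $\delta P$, their cell memberships (hence the quadrant placement of your second step), and $v\in P$; it never actually uses $u,w\in P$ (your closing remark that $u$ and $w$ enter ``by the symmetric bookkeeping'' is not supported by anything in the sketch). These ingredients alone do not imply $\angle uvw\ge\ang{90}$. Concretely, take $u=\left(-\frac{\sqrt{3}}{2},-\frac{1}{2}\right)$, $v=(0,1)$, $w=\left(\frac{\sqrt{3}}{2},-\frac{1}{2}\right)$, so the triangle is equilateral, the circumcenter $O$ is the origin, and $\angle uvw=\ang{60}$; let $P$ be the convex quadrilateral with vertices $a=(-10,5)$, $b_1=(0,0.9)$, $c=(10,5)$, $b_2=(0,10)$, which occur on $\delta P$ in exactly this cyclic order. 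Direct computation shows $a\in\vor(u)$, $b_1,b_2\in\vor(v)$, $c\in\vor(w)$, and $v\in P$; moreover $a$ and $c$ lie in precisely the two flanking quadrants identified in your second step, and $v\in\CH(\set{a,b_1,c,b_2})$. Yet the wedge $\vor(v)$ opens by $\ang{120}>\ang{90}$. So your key claim --- that an opening angle above $\ang{90}$ forces the four points into a closed half-plane avoiding $v$, so that they ``could not surround $v$'' in the required cyclic order --- is simply false, and no quantification of the offset between $O$ and $v$ can repair it. (Your auxiliary claim that $v\in P$ implies $v\in\CH(\set{a,b_1,c,b_2})$ is also false in general --- four boundary points in cyclic order need not surround a given point of $P$ --- but that is secondary.)

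What is missing is any use of $u,w\in P$, and the paper's proof shows that this is where the substance lies: by convexity of $\vor(v)$, the chord $b_1b_2$ is contained in $\vor(v)$, and by the cyclic order it separates $\vor(u)\cap P$ from $\vor(w)\cap P$ inside $P$. Because $u$ and $w$ lie in $P$, the segment $uw$ lies in $P$ and therefore crosses this chord at some point $x\in\vor(v)$; then $\dist(x,v)\le\min(\dist(x,u),\dist(x,w))\le\frac{1}{2}\dist(u,w)$ places $v$ in the closed disk with diameter $uw$, which is exactly $\angle uvw\ge\ang{90}$. (The paper phrases this via the circumcenter $q$, the common point of the three cells: the separation forces $q$ out of $P$ and onto the far side of the line $uw$ from $v$, and Thales' theorem, i.e.\ the inscribed angle theorem, then gives the bound; either way, $u,w\in P$ is what makes the argument close.) Note that in the counterexample above it is exactly the hypothesis $u,w\in P$ that fails, which is why everything your sketch does retain is satisfied there while the conclusion fails.
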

\begin{proof}
      \begin{figure}
    \centering
    \includegraphics{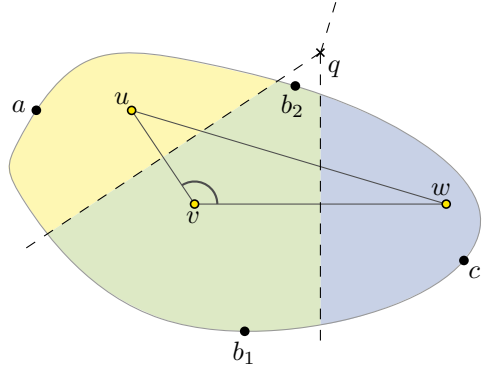}
    \caption{The setting of \cref{lem:convex:angle}. The intersection of the Voronoi cells of the vertices $u$, $v$ and $w$ with the polygon is colored.
    The angle $\angle uvw$ is at least $\ang{90}$.}
    \label{fig:convex-angle-steiner-vertex}
  \end{figure}
  The Voronoi cells of $u$, $v$ and $w$ are convex regions that intersect in the point $q$, the circumcenter of the triangle $\triangle uvw$.
  The setting is depicted in \cref{fig:convex-angle-steiner-vertex}.
  By the convexity of $\vor(v)$, the line segment $b_1 b_2$ is also in $\vor(v)$ and thus $\vor(v)$ separates $\vor(u) \cap P$ and $\vor(w) \cap P$.
  Accordingly, $q$ must be outside of $P$ and, in particular, $v$ and $q$ cannot be on the same side of the line $uw$.
  Thales theorem and its generalization, the inscribed angle theorem, state that this is the case if and only if the angle $\angle uvw \geq \ang{90}$.
\end{proof}
%
Using the previous lemma, we can now prove that every convex instance with one intermediate layer, one sink, unlimited capacity and $\alpha = 0$ has an optimal layout that is $3$-consecutive.
\begin{lemma}
  \label{lem:convex:max gaps}
  Let $I = (S, \set{t}, (\infty, \infty, \infty), 0)$ be a convex instance of \flamecast, and let $L$ be a layout of $I$.
  Then, there is a layout $L^*$ that is $3$-consecutive for any cyclic ordering with $\cost(L^*) \leq \cost(L)$.
\end{lemma}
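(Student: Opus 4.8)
The plan is to reduce to a cost-optimal layout, extract a Voronoi structure, and then use \cref{lem:convex:angle} to cap the number of runs of each Steiner vertex at three. Since $\alpha=0$, every edge has load-weight $1$, so $\cost$ is exactly total edge length; as there are only finitely many topologies and each admits an optimal embedding (its vertices sit at geometric medians), a cost-optimal layout $L'$ with $\cost(L')\le\cost(L)$ exists. In $L'$ each source is joined to its nearest \emph{hub}, where the hub set $H$ is the Steiner vertices together with $t$: otherwise reconnecting a source to a closer hub and re-optimising would strictly shorten the layout. Hence every source lies in the Voronoi cell $\vor(h)$ of its parent hub $h$, and by \cref{lem:properties:no crossing} the subtrees do not interleave.

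Next I fix a Steiner vertex $v$ and the cyclic order of $S$. Its children are the sources of $\vor(v)\cap\delta P$, which break into $k$ maximal runs $A_1,\dots,A_k$ separated by gaps $G_1,\dots,G_k$ owned by other cells. Suppose $k\ge 4$. Pick one source in each gap and let $h_i\neq v$ be its parent hub. Non-interleaving forces the $h_i$ to be distinct, since a single hub owning sources in two gaps of $v$ would create the forbidden pattern $v,h,v,h$. For the gaps $G_i,G_{i+1}$ flanking run $A_{i+1}$, take $b_1\in A_{i+1}$ and $b_2\in A_i$; then the four boundary points occur in the cyclic order demanded by \cref{lem:convex:angle} (with $a\in\vor(h_i)$, $b_1,b_2\in\vor(v)$, $c\in\vor(h_{i+1})$), yielding $\angle h_i v h_{i+1}\ge\ang{90}$ for every $i$.

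To convert these $k$ right-angle constraints into a contradiction I would show that the rays $vh_1,\dots,vh_k$ appear around $v$ in the same cyclic order as their gaps on $\delta P$: because $\vor(h_i)$ separates $v$ from the arc $G_i$, the site $h_i$ lies on the far side of the bisector of $vh_i$, i.e.\ roughly toward $G_i$, and non-interleaving excludes order reversals. Once the rays are cyclically ordered, their consecutive angles sum to exactly $\ang{360}$, so $k$ summands each at least $\ang{90}$ is impossible for $k\ge 5$ and forces every angle to equal $\ang{90}$ when $k=4$.

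The main obstacle is exactly this tight case $k=4$, where the four hub rays form a perfect cross and \cref{lem:convex:angle} holds with equality; the angle budget alone does not rule it out. I would break it by an extremal choice: among all cost-optimal layouts take one minimising a secondary potential (for instance $\sum_v\dist(v,t)$, or the total number of runs over all Steiner vertices), and argue that the degenerate cross admits a cost-neutral reconnection of a boundary source lying on a bisector between $v$ and a neighbouring hub, which strictly decreases the potential and contradicts extremality. Making the cyclic order of the hub rays precise and exhibiting this tie-breaking reconnection are the two delicate points; everything else is a direct application of \cref{lem:convex:angle,lem:properties:no crossing}. The resulting layout $L^*$ is then $3$-consecutive, valid (capacities are unbounded), and satisfies $\cost(L^*)\le\cost(L')\le\cost(L)$.
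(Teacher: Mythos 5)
Your outline follows the same strategy as the paper: pass to a cost-optimal layout, use $\alpha=0$ and unlimited capacities to place every source in the Voronoi cell of its parent, use non-interleaving (\cref{lem:properties:no crossing}) to get distinct hubs $h_1,\dots,h_k$ for the $k\ge 4$ gaps, and invoke \cref{lem:convex:angle} to force angles of at least $\ang{90}$ at $v$, killing $k\ge 5$ and leaving the perfect cross for $k=4$. One remark: your worry about establishing the cyclic order of the rays $vh_i$ is avoidable. Apply \cref{lem:convex:angle} to \emph{all} pairs $h_i,h_j$ (legitimate, since any two gaps are separated by runs of $v$ on both sides of the cyclic order); then, whatever the cyclic order of the rays around $v$ turns out to be, consecutive rays in that order are themselves a pair, so each consecutive angle is at least $\ang{90}$ and the $360°$ budget does the rest. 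This is exactly how the paper argues.

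The genuine gap is the $k=4$ case, which you correctly identify as the main obstacle but do not resolve. You postulate that the degenerate cross "admits a cost-neutral reconnection of a boundary source lying on a bisector between $v$ and a neighbouring hub" and plan to combine this with an extremal tie-breaking layout, but you never prove that such a source exists --- and that existence claim is the heart of the lemma, not a detail. The paper establishes it by pinning down the positions of the children of $v$: since the consecutive angles $\angle h_i v h_{i+1}$ are exactly $\ang{90}$, Thales's theorem says the circumcenter of each triangle $h_i v h_{i+1}$ is the midpoint of $h_ih_{i+1}$, so $\vor(v)$ is the quadrilateral whose corners are these four midpoints; moreover, all hubs lie inside the polygon (\cref{lem:properties:convex steiner}), so by convexity the polygon boundary --- and hence every source --- avoids the open quadrilateral $h_1h_2h_3h_4$, which contains all of $\vor(v)$ except those corners. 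Consequently every child of $v$ sits exactly at one of the four corner midpoints, where its distance to $v$ equals its distance to two of the hubs, and reconnecting it to a nearest hub is cost-neutral (and preserves validity, as capacities are unbounded). Without this geometric pinning-down, your extremal argument has nothing to act on, so the proposal as written does not yield the lemma.
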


\begin{proof}
  \begin{figure}
  \centering
  \subcaptionbox{
    A Steiner vertex $v_A$ that is not $3$-consecutive and the relevant sources and Steiner vertices.
    \label{fig:convex:three-consec-1}
  }[.49\textwidth]{
    \includegraphics[page=2]{figures/convex_dp.pdf}
  }
  \hfill
  \subcaptionbox{
    Possible positions of the children of $v_A$.
    The colored areas are the Voronoi cells of the Steiner vertices, respectively.
    For each source $a \in A$, there is another Steiner vertex that has the same distance to $a$ as $v_A$.
    \label{fig:convex:three-consec-2}
  }[.49\textwidth]{
    \includegraphics[page=3]{figures/convex_dp.pdf}
  }
  \caption{
    The setting of \cref{lem:convex:max gaps}.
    \label{fig:convex:three-consec}
  }
\end{figure}

  If $L$ is $3$-consecutive, we are done.
  Assume $L$ is not $3$-consecutive, and that $L$ has optimal cost.
  Then, there is a Steiner vertex $v_A$ that is not $3$-consecutive.
  Let $a_1,s_1,a_2,s_2,a_3,s_3,a_4,s_4$ be a cyclically ordered subsequence of sources in $L$ such that $x_i$ is a child of $v_A$ and let $v_i \neq v_A$ be the parent of $s_i$, respectively.
  Let $A = \set{a_1, a_2, a_3, a_4}$.
  \cref{fig:convex:three-consec-1} depicts the setting.
  We show that if $v_A$ is not $3$-consecutive in an optimal layout, then the positions of the sources in $A$ can be uniquely determined.
  Moreover, they can be reconnected such that $v_A$ is $3$-consecutive without increasing the cost.

  We first determine the positions of $v_1, v_2, v_3, v_4$ relative to each other and to $v_A$.
  Since we assume $L$ to have minimum cost, there are no interleaving subtrees and thus, $v_i \neq v_j$ for $i \neq j$.
  For $\alpha=0$ and unlimited capacity, every source connects to the closest Steiner vertex in an optimal layout, i.e. it must be in the Voronoi cell of its parent.
  Consider the Steiner vertex $v_A$ and any Steiner vertices $v_i$ and $v_j$.
  Then, $s_i$, $a_{i + 1}$, $s_j$ and $a_{j+1}$ appear in this cyclic order on the convex hull of $S$ with $a_{i + 1}, a_{j + 1} \in V(v_A)$, and $s_i \in V(v_i)$ and $s_j \in V(v_j)$.
  By \cref{lem:convex:angle}, the angle $\angle v_i v_A v_j \geq \ang{90}$ for any $i \neq j$.
  Thus, the angles $\angle v_1v_2$, $\angle v_2v_3$, $\angle v_3v_4$ and $\angle v_4v_1$ are exactly $\ang{90}$, and the segments $v_1v_3$ and $v_2v_4$ meet at a right angle at $v_A$.

%
  We now determine possible positions for the sources in $A$.
  By \cref{lem:properties:convex steiner}, Steiner vertices are located in the convex hull of their neighborhood and since $I$ is convex, there are no sources in $S$ located strictly inside the quadrilateral defined by $v_1v_2v_3v_4$.
  Furthermore, the sources in $A$ lie in the Voronoi cell of $v_A$, whose boundary is determined by the perpendicular bisectors of the four line segments $v_Av_1$, $v_Av_2$, $v_Av_3$ and $v_Av_4$.
  By Thales's theorem, the vertices of the Voronoi cell $\vor(v_A)$ are the midpoints of each line segment.
  This can be seen in \cref{fig:convex:three-consec-2}.
  Thus, these midpoints are the only possible positions for the sources in $A$.
  As vertices of the Voronoi cell $\vor(v_A)$, they have the same distance to $v_A$ and two other Steiner vertices in $\set{v_1, v_2, v_3, v_4}$.

  We obtain a new layout $L'$ by connecting each source in $A$ to a closest vertex in $\set{v_1, v_2, v_3, v_4}$, which does not increase the cost by the previous argument.
  Since we assume unlimited capacities, $L'$ is still valid.
  Repeating this argument as long as there is a Steiner vertex that is not $3$-consecutive, yields a $3$-consecutive layout $L^*$ with the same cost as $L$.
\end{proof}
Given a $k$-consecutive layout, cutting the order between two adjacent sources in the cyclic order that have distinct parents clearly preserves $k$-consecutivity with respect to the resulting linear order.
This argument directly implies the following statement.
\begin{corollary}
  \label{cor:linear-order-k-consec}
    Let $I = (S, \set{t}, (\infty, \infty, \infty), 0)$ be a convex instance of \flamecast.
    There is a linear order on $S$ and an optimal layout $L$ for $I$ such that $L$ is $3$-consecutive with respect to the linear order.
\end{corollary}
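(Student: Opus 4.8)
The plan is to derive this directly from Lemma~\ref{lem:convex:max gaps} by linearizing a suitable cyclic order. First I would fix an optimal layout $L_0$ for $I$ and apply Lemma~\ref{lem:convex:max gaps} to obtain a layout $L$ with $\cost(L) \leq \cost(L_0)$ that is $3$-consecutive with respect to \emph{every} cyclic ordering of $S$. Since $L_0$ is optimal and capacities are infinite (so $L$ is automatically valid), $L$ is an optimal layout as well. It then remains to pick a linear order, i.e.\ a point at which to cut the cyclic order, so that $3$-consecutivity carries over from the cyclic to the linear setting.

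The delicate point, which I expect to be the only real obstacle, is that linearizing a cyclic order can split a single maximal run of consecutive children of some Steiner vertex into two runs, raising that vertex's consecutivity by one. This happens precisely when a run \emph{wraps around} the cut, that is, when the two cyclically adjacent sources flanking the cut share the same parent. To rule this out, I would cut between two cyclically adjacent sources that have \emph{distinct} parents. Such a pair exists unless all of $S$ shares one parent; in that degenerate case the unique Steiner vertex has all of $S$ as a single consecutive block under any linear order, so $L$ is already $1$-consecutive and any linear order suffices.

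In the remaining case, let $s$ and $s'$ be cyclically adjacent sources with distinct parents, and let the linear order begin immediately after $s'$, equivalently cut the cyclic order between $s$ and $s'$. I would then argue that for each Steiner vertex $v$ the partition of its children into maximal consecutive runs is unchanged by the linearization: the only run that could be affected is one straddling the cut, but that would force both $s$ and $s'$ to be children of $v$, contradicting the choice of cut. Hence no run of any Steiner vertex is split, the linear runs coincide with the cyclic runs, every Steiner vertex stays at most $3$-consecutive, and $L$ is $3$-consecutive with respect to the chosen linear order.

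Everything beyond the wrap-around argument is bookkeeping. In a full write-up I would make the notion of a maximal run explicit and verify formally that wrap-around is the unique mechanism by which linearization can increase a vertex's number of runs, so that cutting between differently-parented sources is exactly what is needed; the optimality of $L$ and the handling of the all-same-parent case are then immediate.
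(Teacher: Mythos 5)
Your proposal is correct and matches the paper's own argument: the paper likewise takes the $3$-consecutive optimal layout from Lemma~\ref{lem:convex:max gaps} and cuts the cyclic order between two adjacent sources with distinct parents, noting that this preserves $k$-consecutivity in the resulting linear order. Your write-up merely spells out the wrap-around mechanism and the degenerate all-one-parent case, which the paper leaves implicit.
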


\subsubsection{Solving Convex Instance with One Intermediate Layer, One Sink, Unlimited Capacity and \texorpdfstring{\boldmath$\alpha = 0$}{𝛼 = 0}}


We present a dynamic program that computes a $(1 + \varepsilon)$-approximation for a convex instance $I = (S, \set{t}, (\infty, \infty, \infty), 0)$.
To find a 3-consecutive optimal layout for the cyclic ordering, we introduce the notion of source intervals.
By computing a 3-consecutive optimal layout for every start source, one of those layouts must be the optimal layout for the cyclic ordering.
We call an ordered set of consecutive sources $(s_i, s_{i+1}, \dots, s_j)$ a \emph{source interval}.
We consider maximal source intervals where $s_i = s_{j+1}$ to be distinct if they do not have the same start and end.
The maximal source intervals thus correspond to the linear orderings of the sources.
Note that a source interval can be empty.

Let $L$ be a 3-consecutive optimal layout with respect to the cyclic ordering.
Then there is a linear ordering, and accordingly a maximal source interval, for which $L$ is 3-consecutive.
We describe a dynamic program that computes a 3-consecutive optimal layout for every source interval.

Before we formulate the dynamic program, we introduce some helpful notation.
For a subset of sources $A \subseteq S$, we denote the cost for connecting $A$ to the sink using a single Steiner vertex by $\cost(A)$.
If $A$ is empty, then the cost is $0$.
Moreover, we say a tuple $(A_1, A_2, \ldots, A_k)$ is an \emph{interval partition} of the source interval $A$ if each part is a source interval, the disjoint union of all parts is equal to $A$, the parts are sorted with respect to the linear ordering, and $A_i \neq A$ for $i \in [k]$.
Note that this means that parts may be empty.

The core idea of the dynamic program is to find optimal $3$-consecutive layouts for smaller source intervals and to combine these to build optimal $3$-consecutive layouts for larger source intervals.
Consider a $3$-consecutive layout $L$ with the lowest cost for connecting the sources in the source interval $(s_1, \ldots, s_n)$.
We know that in this case there must be an interval partition $(A_1, S_1, A_2, S_2, A_3)$ of the sources in $L$ such that all vertices in $A_1, A_2$ and $A_3$ have the same parent which and that this parent is distinct from the parents of all vertices in $S_1$ and $S_2$.
Note that $A_1, A_2$ and $A_3$ may all be empty; in this case the layout is composed of optimal layouts for two smaller source intervals.

In particular, the structure of $L$ is determined by whether $s_1$ and $s_n$ share a parent $v$ or whether they have distinct parents.
In case they share a parent, we distinguish by the number of source intervals needed to cover the children of $v$.
This leads to four structural cases are illustrated in \cref{fig:3-consecutive-dp}.

\begin{figure}
  \centering
  \includegraphics[page=7]{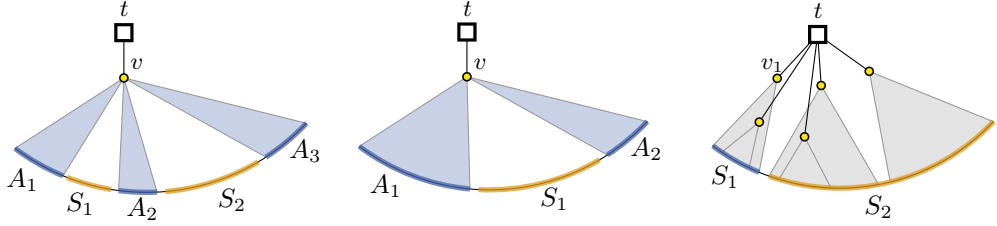}
  \caption{Three of the four possible cases in a $3$-consecutive layout; the case where all sources in the interval are connected to a single Steiner vertex is missing.
  Depending on if the first and the last source of the interval share a parent $v$, we partition the source interval differently.}
  \label{fig:3-consecutive-dp}
\end{figure}
%



We can now formulate our dynamic program, which determines for each source interval $A \subseteq S$ the lowest cost $\DP[A]$ of a $3$-consecutive layout that connects the sources in $A$.
If $A$ is empty, then $\DP[A] = 0$, and if $A$ only consists of one source $a$, then $\DP[A] = \dist(a, t)$.
Otherwise, we can try all possible interval partitions $(A_1, S_1, A_2, S_2, A_3)$ of $A$.
%
This gives us the following dynamic program.
\begin{align*}
\DP[A] &= \min\left\{
        \min_{\substack{(A_1, S_1, A_2, S_2, A_3)\\ \text{partition of $A$}}} \DP[S_1] + \DP[S_2] + \cost(A_1 \cup A_2 \cup A_3) 
    \right\}
\end{align*}
The cost for an optimal 3-consecutive layout with respect to the cyclic ordering is the minimum entry of $\DP[S_i]$ over all maximal source intervals $S_i$.
As in the circular case, the dynamic program relies on determining the cost for connecting a set of sources using a single Steiner vertex.
Assuming that we get a $(1 + \varepsilon)$-approximation for $\cost(A)$ with $A \subseteq S$, our dynamic program also is a $(1 + \varepsilon)$-approximation.
Let $\Tweber(|A|,\varepsilon)$ be the time to compute a $(1 + \varepsilon)$-approximation for $\cost(A)$.

\begin{theorem}
  \label{theorem:dp}
  Convex instances of \flamecast with one intermediate layer, one sink, unlimited capacity and $\alpha = 0$ can be approximated within a factor of $1 + \varepsilon$ in $O(n^6 \cdot T_{Weber}(n, \varepsilon))$ time.
\end{theorem}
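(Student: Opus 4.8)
The plan is to show three things: (i) that, with exact evaluation of $\cost(\cdot)$, the table value $\DP[A]$ equals the minimum cost $\DP^*[A]$ of a $3$-consecutive layout connecting the source interval $A$; (ii) that a $(1+\varepsilon)$-approximate Weber subroutine turns this into a $(1+\varepsilon)$-approximation of $\DP^*[A]$; and (iii) that the table can be filled within the stated time. Correctness of the returned value then follows because, by \cref{lem:convex:max gaps}, some globally optimal layout is $3$-consecutive with respect to the cyclic order, and by \cref{cor:linear-order-k-consec} it is $3$-consecutive for at least one maximal source interval, so the minimum of $\DP[\cdot]$ over all maximal source intervals equals the global optimum over \emph{all} layouts.

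For (i) I would argue $\DP[A] = \DP^*[A]$ by induction on $|A|$, the cases $|A| \le 1$ being immediate. For the direction $\DP[A] \ge \DP^*[A]$, every interval partition $(A_1,S_1,A_2,S_2,A_3)$ of the recurrence yields an actual $3$-consecutive layout: take optimal $3$-consecutive layouts for $S_1$ and $S_2$ (of cost $\DP[S_1]=\DP^*[S_1]$ and $\DP[S_2]=\DP^*[S_2]$ by induction) and add a single Steiner vertex connecting $A_1\cup A_2\cup A_3$ to $t$ at its Weber position, of cost $\cost(A_1\cup A_2\cup A_3)$. Since $A_1,A_2,A_3$ are the only runs of this new vertex and the children of every vertex in the two sublayouts stay inside $S_1$ or $S_2$, the combined layout is $3$-consecutive, so its cost upper-bounds $\DP^*[A]$.

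The converse $\DP[A]\le \DP^*[A]$ is the crux. I would take an optimal $3$-consecutive layout $L$ for $A$ and exhibit one partition whose value is at most $\cost(L)$. Let $v$ be the parent of the first source $s_1$. If $s_1$ and the last source $s_n$ share the parent $v$, then the children of $v$ form at most three runs $A_1,A_2,A_3$ (with $A_1$ beginning at $s_1$ and $A_3$ ending at $s_n$), and the gaps between them are exactly $S_1,S_2$, giving the partition $(A_1,S_1,A_2,S_2,A_3)$. If instead $s_1$ and $s_n$ have distinct parents, I would cut $A$ between a pair of adjacent sources with distinct parents into $S_1$ and $S_2$ (the partition with $A_1=A_2=A_3=\emptyset$). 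In both cases the decisive point is that, because subtrees do not interleave in an optimal layout (a direct consequence of \cref{lem:properties:no crossing}), no Steiner vertex other than $v$ has children in more than one gap, and such a clean cut always exists in the distinct-parent case since the non-interleaving partition is non-crossing. Hence $L$ restricted to $S_1$ and to $S_2$ are themselves valid $3$-consecutive layouts, and $\cost(L) = \cost(L|_{S_1}) + \cost(L|_{S_2}) + \cost(A_1\cup A_2\cup A_3) \ge \DP^*[S_1] + \DP^*[S_2] + \cost(A_1\cup A_2\cup A_3) = \DP[S_1]+\DP[S_2]+\cost(A_1\cup A_2\cup A_3) \ge \DP[A]$. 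I expect deriving this independence of the gaps from the non-interleaving property, together with the existence of a clean cut, to be the main obstacle.

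For (ii), the argument of \cref{lemma:solve-circular-instance} transfers: an easy induction on $|A|$ using $\cost(A)\le(1+\varepsilon)\,\cost^*(A)$ and the fact that the recurrence combines subproblem values only by addition and minimization shows $\DP[A]\le(1+\varepsilon)\DP^*[A]$. For (iii), there are $O(n^2)$ source intervals (arcs of the cycle, each fixed by its two endpoints); for each interval the recurrence ranges over interval partitions determined by four cut positions, i.e.\ $O(n^4)$ of them; and each partition requires one evaluation of $\cost(A_1\cup A_2\cup A_3)$ in $\Tweber(n,\varepsilon)$ time. Multiplying gives $O(n^2\cdot n^4\cdot \Tweber(n,\varepsilon)) = O(n^6\cdot \Tweber(n,\varepsilon))$, and taking the minimum over the $n$ maximal source intervals at the end is absorbed by this bound.
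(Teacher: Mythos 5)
Your proposal is correct and takes essentially the same route as the paper: your parts (ii) and (iii) are precisely the paper's proof of \cref{theorem:dp} (the $(1+\varepsilon)$-propagation by induction over interval size, and the count of $O(n^2)$ intervals times $O(n^4)$ five-part partitions times one $\Tweber(n,\varepsilon)$ evaluation), while your part (i) spells out the exactness of the recurrence that the paper only sketches informally in the discussion preceding the theorem. The one point to tighten in part (i) is that \cref{lem:properties:no crossing} applies to globally optimal layouts, so before invoking non-interleaving for your optimal \emph{$3$-consecutive} layout $L$ you should note (via \cref{lem:convex:max gaps}, applied to the convex sub-instance on the interval) that the minimum cost over $3$-consecutive layouts equals the minimum over all layouts, hence $L$ is in fact globally optimal; equivalently, defining $\DP^*[A]$ as the minimum over \emph{all} layouts of $A$ sidesteps this issue entirely.
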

\begin{proof}
  We first bound the approximation factor.
  Let $\DP^*[A]$ be the optimal cost for connecting the source interval $A$,
  and let $\cost^*(S')$ be the cost for connecting the sources in set $S'$ using a single Steiner vertex.
  Moreover, let $\DP[A]$ and $\cost(S')$ be the corresponding values computed in our dynamic program.
  We show by induction over the size of $A$ that $\DP[A] \leq (1 + \varepsilon) \DP^*[A]$.
  If $A$ is empty or contains one element, then it is $\DP[A] = \DP^*[A]$.
  The values $\cost(S')$ are computed using an approximation algorithm for the Weber problem, i.e., it is $\cost(S') \leq (1 + \varepsilon) \cdot \cost^*(S')$.
  Assume now that every $\DP[B]$ is a $(1 + \varepsilon)$-approximation for $\DP^*[B]$ for every $|B| < |A|$.
  Then it is
  \begin{align*}
      \DP[A] &= \min\Biggl\{
    \begin{aligned}
        &\min_{\substack{(S_1, S_2)\\ \text{partition of $A$}}} \DP[S_1] + \DP[S_2], \\
        &\min_{\substack{(A_1, S_1, A_2, S_2, A_3)\\ \text{partition of $A$}}} \DP[S_1] + \DP[S_2] + \cost(A_1 \cup A_2 \cup A_3)
    \end{aligned}
    \Biggr\}\\
   &\leq \min\Biggl\{
    \begin{aligned}
        &\min_{\substack{(S_1, S_2)\\ \text{partition of $A$}}} (1+ \varepsilon) (\DP^*[S_1] + \DP^*[S_2]), \\
        &\min_{\substack{(A_1, S_1, A_2, S_2, A_3)\\ \text{partition of $A$}}} (1+ \varepsilon) (\DP^*[S_1] + \DP^*[S_2] + \cost^*(A_1 \cup A_2 \cup A_3))
    \end{aligned}
    \Biggr\} \\
   &= (1+ \varepsilon) \DP^*[A].
  \end{align*}
  Thus, it is $\DP[S] \leq (1 + \varepsilon) \cdot \DP^*[S]$.

  The dynamic program goes over all $O(n^2)$ source intervals for $S$.
  For each source interval, it tries all possible interval partitions into five parts (taking $O(n^4)$ time).
  Computing $\cost(A_1 \cup A_2 \cup A_3)$ requires computing the Weber point in $\Tweber(n, \varepsilon)$ time.
  Thus, the dynamic program runs in $O(n^2 \cdot (n + n^4 \cdot \Tweber(n, \varepsilon))) = O(n^6 \Tweber(n, \varepsilon))$ time.
\end{proof}
With an algorithm by Cohen~\cite{cohen2016}, where $\Tweber(n, \varepsilon) \in O\left(n\log^3\left(\frac{1}{\varepsilon}\right)\right)$, we achieve the following running time.
\begin{corollary}
    Convex instances of \flamecast with one intermediate layer, one sink, unlimited capacity and $\alpha = 0$ can be $(1 + \varepsilon)$-approximated in $O\left(n^7\log^3\left(\frac{1}{\varepsilon}\right)\right)$ time.
\end{corollary}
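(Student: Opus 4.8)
The plan is to derive the corollary directly from Theorem~\ref{theorem:dp}, which already establishes a $(1+\varepsilon)$-approximation running in $O(n^6 \cdot \Tweber(n, \varepsilon))$ time, where $\Tweber(n,\varepsilon)$ denotes the time to compute a $(1+\varepsilon)$-approximate Weber point (geometric median) of a point set of size at most $n$. The dynamic program treats this Weber computation as a black box, so the only remaining task is to instantiate the black box with a concrete algorithm and substitute its running time into the bound.

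First I would invoke the result of Cohen et al.~\cite{cohen2016}, which yields $\Tweber(n, \varepsilon) \in O\!\left(n\log^3\!\left(\frac{1}{\varepsilon}\right)\right)$ for the planar Weber problem on $n$ points. Since the dynamic program of Theorem~\ref{theorem:dp} only ever calls this oracle on subsets $A \subseteq S$ with $|A| \leq n$, this bound applies to every call. Substituting into the runtime of Theorem~\ref{theorem:dp} gives
\begin{equation*}
  O\!\left(n^6 \cdot \Tweber(n, \varepsilon)\right) = O\!\left(n^6 \cdot n\log^3\!\left(\tfrac{1}{\varepsilon}\right)\right) = O\!\left(n^7 \log^3\!\left(\tfrac{1}{\varepsilon}\right)\right),
\end{equation*}
which is the claimed bound. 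The approximation factor requires no further argument: Theorem~\ref{theorem:dp} shows that feeding a $(1+\varepsilon)$-approximate value for each $\cost(A)$ propagates through the induction to give $\DP[S] \leq (1+\varepsilon)\,\DP^*[S]$, and Cohen et al.\ supply exactly such a $(1+\varepsilon)$-approximate oracle.

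There is essentially no obstacle here, as the corollary is a direct specialization of Theorem~\ref{theorem:dp}. The only point worth a sentence of care is confirming that the error guarantee of the chosen Weber subroutine matches the $(1+\varepsilon)$ factor assumed in Theorem~\ref{theorem:dp}, so that the multiplicative approximation error is not compounded beyond $(1+\varepsilon)$; this holds because the induction in Theorem~\ref{theorem:dp} factors out a single $(1+\varepsilon)$ rather than accumulating one per recursion level.
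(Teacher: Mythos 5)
Your proposal is correct and matches the paper's own proof exactly: both derive the corollary by instantiating the black-box Weber oracle in Theorem~\ref{theorem:dp} with the algorithm of Cohen et al.~\cite{cohen2016}, substituting $\Tweber(n,\varepsilon) \in O\left(n\log^3\left(\frac{1}{\varepsilon}\right)\right)$ into the $O(n^6 \cdot \Tweber(n,\varepsilon))$ bound to obtain $O\left(n^7\log^3\left(\frac{1}{\varepsilon}\right)\right)$. Your additional remark that the $(1+\varepsilon)$ factor does not compound across recursion levels is a fair point of care, already covered by the induction in the theorem's proof.
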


\section{Conclusion}
We introduced the \flamecast problem, which captures the design of hierarchical networks under load-dependent edge costs.
We showed that \flamecast is NP-hard to approximate within a certain factor for various settings with different parameters and complemented these results with polynomial approximation algorithms for more restricted circular and convex instances with one intermediate layer and one sink.
In both cases, the key insight used by the dynamic programs is that optimal layouts for these restricted settings have specific structures.
In fact, our dynamic program can be easily generalized to all convex instances with one intermediate layer and one sink that admit a $k$-consecutive optimal layout, yielding a polynomial algorithm if $k$ is constant.
From this observation, the natural question arises which convex instances have $k$-consecutive layouts.
While we have shown that such instances with $\alpha = 0$ and unlimited capacity admit $3$-consecutive optimal layouts, it is still open if this also holds for any $\alpha > 0$.
If the capacity in the intermediate layer is limited, then there are convex instances with no $k$-consecutive optimal layouts for any constant $k$.
Here, it makes sense to impose some further (realistic) restrictions such as a minimum distance between any pair of sources, and to investigate consecutivity of optimal layouts under these restrictions.

Experiments suggest that source-equally-spaced circular instances with one sink may admit $1$-consecutive optimal layouts, even with more than one intermediate layer.
However, for multiple layers, the topology of a subtree is not fixed even if the sources are fixed.
Thus, to obtain the cost for connecting a specific set of sources in a naive approach, all possible topologies of a subtree have to be enumerated, which is clearly not polynomial.
Further research could investigate the complexity of convex instances with more than one intermediate layer.

\bibliography{flamecast.bib}

\end{document}